\newcommand{\Bc}{\mathcal{B}}
\newcommand{\Nc}{\mathcal{N}}
\newcommand{\Rc}{\mathcal{R}}
\newcommand{\Sc}{\mathcal{S}}
\newcommand{\Uc}{\mathcal{U}}
\newcommand{\Vc}{\mathcal{V}}
\newcommand{\Xc}{\mathcal{X}}
\newcommand{\Yc}{\mathcal{Y}}
\newcommand{\aep}{{\mathcal{T}_{\epsilon}^{(n)}}}
\newcommand{\Sh}{{\hat{S}}}
\newcommand{\Zh}{{\hat{Z}}}
\newcommand{\sh}{{\hat{s}}}
\newcommand{\uh}{{\hat{u}}}
\newcommand{\xh}{{\hat{x}}}
\newcommand{\St}{{\tilde{S}}}
\newcommand{\Ut}{{\tilde{U}}}
\def\d{\delta}
\def\e{\epsilon}
\DeclareMathOperator\E{E}
\let\P\relax
\DeclareMathOperator\P{P}
\newcommand{\Bern}{\mathrm{Bern}}
\def\textiid{i.i.d.\@\xspace}
\newcommand\iid{\ifmmode\text{ i.i.d. } \else \textiid \fi}
\newtheorem{theorem}{Theorem}
\newtheorem{proposition}{Proposition}
\newtheorem{corollary}{Corollary}
\newtheorem{remark}{Remark}[section]
\newtheorem{claim}{Claim}
\begin{document}
\title{Estimation with a helper who knows the interference}

\author{Yeow-Khiang Chia\IEEEauthorrefmark{1}, Rajiv Soundararajan\IEEEauthorrefmark{2} and Tsachy Weissman\IEEEauthorrefmark{3}
\thanks{This work was partially supported by NSF under grant CCF-0916713, AFOSR under grant FA95500910063, and the Center for Science of Information (CSoI), an NSF
Science and Technology Center.} \thanks{\IEEEauthorrefmark{1} Yeow-Khiang Chia was with Stanford University when most of this work was done. He is now with Institute for Infocomm Research, Singapore. Email: yeowkhiang@gmail.com}  \thanks{\IEEEauthorrefmark{2} Rajiv Soundararajan is with The University of Texas at Austin. Email: rajivs@utexas.edu}
\thanks{\IEEEauthorrefmark{3}Tsachy Weissman is with Stanford University. Email: tsachy@stanford.edu}%
}

\maketitle

\begin{abstract}
We consider the problem of estimating a signal corrupted by independent interference with the assistance of a cost-constrained helper who knows the interference causally or noncausally. When the interference is known causally, we characterize the minimum distortion incurred in estimating the desired signal. In the noncausal case, we present a general achievable scheme for discrete memoryless systems and novel lower bounds on the distortion for the binary and Gaussian settings. Our Gaussian setting coincides with that of assisted interference suppression introduced by Grover and Sahai. Our lower bound for this setting is based on the relation recently established by Verd\'{u} between divergence and minimum mean squared error. We illustrate with a few examples that  this lower bound can improve on those previously developed. Our bounds also allow us to characterize the optimal distortion in several interesting regimes. Moreover, we show that causal and noncausal estimation are not equivalent for this problem. Finally, we consider the case where the desired signal is also available at the helper. We develop new lower bounds for this setting that improve on those previously developed, and characterize the optimal distortion up to a constant multiplicative factor for some regimes of interest.  
\end{abstract}
\section{Introduction}
Consider a joint source channel coding problem as depicted in Figure~\ref{fig1}. We have two memoryless sources $S_1$ (the desired signal) and $S_2$ (the interfering signal). The decoder's aim is to estimate the source sequence $S_1^n$ from $Y^n$, with the goal of minimizing the average per symbol distortion $\E(\sum_{i=1}^nd(\Sh_{1i}(Y^n), S_{1i}))/n$. The encoder (helper), who knows the interfering signal $S_2$, aids the decoder in reconstructing the signal $S_1$ through his choice of $X$, subject to a cost constraint $\rho(X)$. 

\begin{figure}[!ht]
\begin{center}
\psfrag{enc}[c]{Enc.}
\psfrag{ch}[c]{$\P_{Y|X,S_1,S_2}$}
\psfrag{s1s2}[c]{$S_1, S_2$}
\psfrag{s2n}[l]{$S_2^n$}
\psfrag{s1n}[l]{$S_1^n$}
\psfrag{x}[c]{$X^n$}
\psfrag{y}[c]{$Y^n$}
\psfrag{dec}[c]{Dec.}
\psfrag{sh}[c]{$\Sh_1^n$}
\includegraphics[width = 0.9\linewidth]{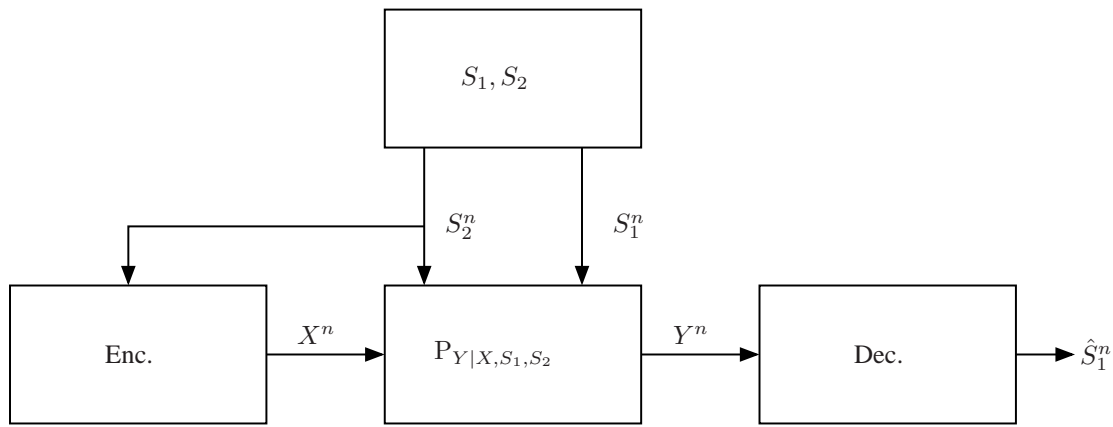}
\caption{Estimation with a helper who knows the interference. The interfering signal is $S_2^n$ while the desired signal is $S_1^n$. The encoder (helper) tries to help the decoder in estimating $S_1^n$ by reducing the interference due to $S_2$, subject to a per symbol cost constraint on its transmission $X^n$.} \label{fig1}
\end{center}
\end{figure}

Applications may arise in sensor networks or cognitive radio systems. As a motivating example, suppose Alice is talking to Bob in his office. As a result of ongoing construction work near Bob's office, there is high interference which makes it hard for Bob to listen to Alice. Fortunately, Bob recently purchased a noise cancellation device which has a microphone placed near the construction site. The microphone measures the interfering signal from the construction site and transmits it to a noise cancellation speaker situated in Bob's office. Since electromagnetic waves travel faster than sound, the noise cancellation speaker knows the interfering signal noncausally. Due to a power constraint on the speaker, it cannot cancel the interference fully. What then, is the minimum distortion that can be achieved by Bob in trying to reconstruct Alice's speech?

Our setup is closely related to several strands of work involving communication over channels with states. In \cite{Kim--Sutivong--Cover2008}, the authors considered the problem of State Amplification, where a message is to be sent to the decoder and the decoder also forms a list of possible $S_2^n$ sequences. The goal is to maximize the message transmission rate and reduce the uncertainty the decoder has regarding $S_2$; i.e. reduce the list size of possible $S_2^n$ sequences. Recently, the problem of state amplification with a distortion constraint was considered in \cite{Choudhuri--Kim--Mitra2011}, with an additional condition that the encoder only knows the state $S_2$ causally. This setting is similar to our setting, with the main difference being that the decoder wishes to reconstruct $S_2$ rather than $S_1$. When our setting is specialized to the Gaussian case with the mean squared error distortion between the reconstruction and the signal, our setting becomes equivalent to the problem of Assisted Interference Suppression considered in \cite{Grover--Sahai2011}. As detailed in \cite{Grover--Sahai2011}, this problem is closely related to Witsenhausen's counterexample in Optimum Control Theory \cite{Witsenhausen1968}.

In this paper, we consider both the case when $S_2$ is available causally at the encoder, and the case when $S_2$ is available non causally at the encoder. Our main contributions are as follows:

\begin{enumerate}
\item When $S_2$ is available causally at the encoder, we characterize the minimum achievable distortion in $S_1$. We borrow certain ideas used in the characterization of the distortion cost region for the causal state amplification problem in \cite{Choudhuri--Kim--Mitra2011} to establish our result. 

\item For the noncausal setting, we first give an achievable scheme for the general discrete memoryless system and then focus our attention on the case where $S_1$ and $S_2$ are independent Bernoulli random variables and the distortion measure is Hamming. We give two lower bounds on the achievable distortion for this binary setting. The first lower bound is based on ideas from the Assisted Interference Suppression problem \cite{Grover--Sahai2011}, while the second lower bound is based on ideas from the problem of Compression with Actions \cite{Lei--Chia--Weissman2011}. Neither bound contains the other and one bound can be better than the other, depending on the regime of interest. Using our lower and upper bounds, we characterize the minimum achievable distortion in several regimes. In particular, we provide an example to show that causal and noncausal estimation of $S_1$ are not equivalent and causal knowledge of $S_2$ could incur a higher distortion than noncausal knowledge of $S_2$ at the encoder. A complete characterization of the minimum achievable distortion in the noncausal case remains open.  

\item In the Gaussian case, where $S_1$ and $S_2$ are independent Gaussian random variables with finite variance, the distortion measure is the mean square error and $Y=X+S_1+S_2$, we note that our setting  coincides with that of Assisted Interference Suppression \cite{Grover--Sahai2011}. For this setting, we give a lower bound on the minimum achievable distortion which in some places improves on that given in \cite{Grover--Sahai2011}, and also its improved version given in \cite{Grover--Wagner--Sahai2011}. The proof of our lower bound relies on an application of Verdu's relation between relative entropy and mismatched estimation in Gaussian noise \cite{Verdu2010}. In recent years, since the seminal paper \cite{Guo--Shamai--Verdu2005} established the relationship between minimum mean square error estimation (MMSE) in Gaussian noise and the Mutual Information between the signal and the output, there has been interest in applying these information-estimation relations to problems in Information Theory (see e.g.~\cite{Shamai2011a} and~\cite{Guo--Wu--Shamai--Verdu2011}). Our lower bound, which seems difficult to obtain by traditional techniques such as the Entropy Power Inequality \cite[Chapter 2]{El-Gamal--Kim2010}, provides another application of these information-estimation relations.

\item In the Gaussian case, we also consider the setting when the encoder has access to $S_1$ noncausally, in addition to $S_2$. This setting is a special case of a problem considered in \cite{Huang--Narayanan2011}. We give a lower bound for this setting that contains the previous bounds in \cite{Huang--Narayanan2011} and can be strictly better in some cases. Furthermore, we establish constant gap results between the achievable distortion and our lower bound. 
\end{enumerate}

We first provide the formal definitions in the next section. In Section~\ref{sect:3}, we consider the causal case.  In Section~\ref{sect:4}, we consider the noncausal case, present an achievable scheme for general discrete memoryless systems and analyze the binary setting in detail.  Section~\ref{sect:5} deals with the Gaussian version of this problem, while we consider the Gaussian setting when $S_1$ is also available noncausally at the encoder in Section \ref{sect:6}. We conclude in Section VII with a summary of our findings and directions for future work. 

\section{Definitions} \label{sect:2}
In this section, we give formal definitions for our problem settings. We will follow the notation of \cite{El-Gamal--Kim2010}, and assume throughout this paper that the channel in consideration is memoryless. That is, $p(y^n|x^n, s_1^n, s_2^n) = \prod_{i=1}^n p(y_i|x_i, s_{1i}, s_{2i})$. We also assume that $S_1^n$ and $S_2^n$ are independent i.i.d. sequences. 
\subsection{Estimation with interference known at the helper} \label{sect:2_1}
A $(n, C)$ code for the setting shown in Figure \ref{fig1} when the interference is known \textit{noncausally} consists of
\begin{itemize}
\item An encoder that maps the interference $S_2^n$ to $X^n$, $f: \Sc_2^n \to \Xc^n$;
\item A decoder that maps the output $Y^n$ to the reconstruction sequence $\Sh_1^n$, $g: \Yc^n \to \hat{\Sc}^n_1$;
\end{itemize}
such that $\E \sum_{i=1}^n \rho(X_i)/n \le C$. The expected per symbol \textit{distortion}, $D$, is given by $D = \E d(S_1^n, \Sh_1^n) = \E\sum_{i=1}^n d(S_{1i}, \Sh_{1i})/n$.

A distortion $D$ is said to be achievable under the cost constraint $C$ if there exists a sequence of $(n, C + \e_n)$ codes, where $\e_n \to 0$ as $n \to \infty$, and 
\begin{align*}
\limsup_{n\to \infty} \E d(S_1^n, \Sh_1^n) \le D.
\end{align*}

The \textit{minimum achievable distortion}, $D(C)_{\rm min}$, is then defined as the infinum of the set of achievable distortions under the cost constraint $C$.   \\

When the interference is only known \textit{causally}, the definitions are mostly the same, with the difference being that the encoder is restricted to causal mapping:
\begin{align*}
f_i: \Sc_2^{i} \to \Xc \mbox{ for } i \in [1:n].
\end{align*}

\subsection{Estimation with source and interference known at the helper} \label{sect:svprob}
This setting is shown in Figure \ref{fig2}. For this setting, we restrict attention to the case where $S_1$ and $S_2$ are independent Gaussian random variables, $S_1 \sim \Nc(0, P_1)$ and $S_2 \sim \Nc(0, P_2)$. Furthermore, we assume that both $S_1$ and $S_2$ are known noncausally at the encoder, and the distortion measure is the mean square error between $S_1$ and its reconstruction. That is, $d(s_1, \sh_1) = (s_1- \sh_1)^2$. The channel is specified by $Y = X + S_1 + S_2 + Z$, where $Z \sim \Nc(0, N)$ is independent of $S_1$ and $S_2$. The cost constraint is the expected power constraint: $\E(\sum_{i=1}^nX_i^2/n)$. As the definitions are similar to the previous setting, we only mention the difference. That is, the encoder now maps both $S_1^n$ and $S_2^n$ to $X^n$: 
\begin{align*}
f: \Sc_1^n \times \Sc_2^n \to \Xc^n.
\end{align*}
\begin{figure}[!ht]
\begin{center}
\small
\psfrag{enc}[c]{Enc.}
\psfrag{ch}[l]{$Z$}
\psfrag{s1s2}[c]{$S_1, S_2$}
\psfrag{s2n}[l]{$S_2$}
\psfrag{s1n}[l]{$S_1$}
\psfrag{x}[c]{$X$}
\psfrag{y}[c]{$Y$}
\psfrag{dec}[c]{Dec.}
\psfrag{sh}[c]{$\Sh_1$}
\includegraphics[width = 0.9\linewidth]{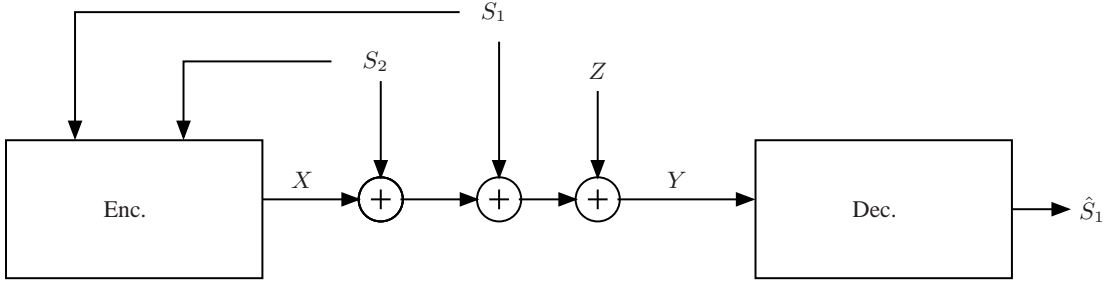}
\caption{Gaussian estimation with a helper that knows both the interference and the source. The random variables $S_1$, $S_2$ and $Z$ are independent zero mean Gaussian random variables. The encoder has knowledge of $S_1^n$ and $S_2^n$ noncausally and the decoder tries to perform lossy reconstruction of $S_1^n$. The distortion criterion is the mean square error criterion and the cost constraint is the expected power constraint on the encoder output, $X$.} \label{fig2}
\end{center}
\end{figure}

\section{Causal Estimation with a helper} \label{sect:3}
In this section, we give the distortion-cost tradeoff region for the setting given in \ref{sect:2_1} under the condition that the interfering signal, $S_2$, is causally known at the encoder. We will discuss some connections between our setting and that of the problem of Causal State Amplification discussed in \cite{Choudhuri--Kim--Mitra2011}. 

\begin{theorem} \label{thm:1}
The distortion-cost region for the problem of estimation with a helper when the interfering signal is \textit{causally} known at the encoder is given by
\begin{align*}
D(C)_{\rm min} =\min_{U,V,X, \Sh_1} \E d(S_1, \Sh_1(U,V,Y)) 
\end{align*}
for some $p(u)p(v|u,s_2)p(s_1)p(s_2)$ and functions $x(u,s_2)$ and $\sh_1(u,v,y)$ such that
\begin{align*}
I(U;Y) &\ge I(V;S_2|U,Y), \\
\E \rho(X) &\le C.
\end{align*}
The cardinalities of the auxiliary random variables may be upper bounded by $|\Uc| \le |\Sc_2|(|\Xc| - 1)+ 2$ and $|\Vc| \le |\Uc|(|\Sc_2|+1)$.
\end{theorem}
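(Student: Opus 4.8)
The plan is to prove matching achievability and converse bounds, treating $D(C)_{\rm min}$ as the value of the single-letter optimization and showing it equals the operational minimum distortion.

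\textbf{Achievability.} I would show every feasible $(U,V,X,\hat{s}_1)$ yields an achievable distortion via a block-Markov scheme combining Shannon strategies with Wyner--Ziv binning. Since $S_2$ is known only causally, the helper cannot look ahead, so I use the Shannon-strategy device: draw a codeword $U^n$ and transmit $X_i=x(U_i,S_{2i})$ symbol by symbol; because $S_1^n,S_2^n$ are i.i.d. and the channel is memoryless, the induced channel $U\to Y$ is memoryless, so $U^n$ decodes reliably whenever its rate is below $I(U;Y)$. Within each block the helper generates a $V$-codebook from $p(v|u)$, picks by joint typicality a $V^n$ compatible with $(U^n,S_2^n)$ (feasible as the covering rate exceeds $I(V;S_2|U)$), and bins the codebook; the bin index is carried to the decoder by the codeword $U^n$ of the \emph{next} block. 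The decoder recovers $U^n$, then uses $Y^n$ as Wyner--Ziv side information to recover $V^n$ from its bin, which succeeds provided the bin rate exceeds $I(V;S_2|U)-I(V;Y|U)=I(V;S_2|U,Y)$, where the last equality uses the Markov relation $V-(U,S_2)-Y$. Combining the two rate conditions gives exactly $I(U;Y)\ge I(V;S_2|U,Y)$. Outputting $\hat{S}_{1i}=\hat{s}_1(U_i,V_i,Y_i)$, the per-symbol distortion concentrates on $\E d(S_1,\hat{S}_1)$ by the typical-average lemma, while the cost is $\E\rho(x(U,S_2))\le C$ by construction.

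\textbf{Converse.} For the reverse inequality I take any sequence of codes achieving distortion $D$ under cost $C$ and single-letterize using the identification
\[
U_i=(Y^{i-1},S_2^{i-1}),\qquad V_i=Y_{i+1}^n.
\]
With this choice $X_i=f_i(S_2^{i-1},S_{2i})$ is a function of $(U_i,S_{2i})$, and $\hat{S}_{1i}=g_i(Y^n)$ is a function of $(U_i,V_i,Y_i)$, as required. I would then verify the joint law factors as $p(u)p(v|u,s_2)p(s_1)p(s_2)$: independence of $U_i$ from $(S_{1i},S_{2i})$ holds because the current source symbols are independent of all past sources and noise (and causality makes $Y^{i-1}$ a function of the past), while $V_i\perp S_{1i}\mid(U_i,S_{2i})$ holds because, conditioned on $S_2^i$, the future outputs $Y_{i+1}^n$ involve only future sources and noise---here causality forces each $X_j$ to depend on $S_2^j$ only, so $Y_{i+1}^n$ never sees $S_{1i}$.

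The crux is the rate inequality $\sum_i I(U_i;Y_i)\ge\sum_i I(V_i;S_{2i}|U_i,Y_i)$, which I prove by a telescoping computation exploiting causality twice. Writing $B=\sum_i I(Y_{i+1}^n;S_{2i}\mid Y^i,S_2^{i-1})$ and expanding $I(S_2^n;Y^n)$ by the chain rule in both orders, the identity $I(S_{2,i+1}^n;Y_i\mid Y^{i-1},S_2^i)=0$ (current output is conditionally independent of future state given $S_2^i$, again by causality) collapses the expression to $B=\sum_i I(S_2^{i-1};Y_i\mid Y^{i-1})$. Since $\sum_i I(U_i;Y_i)=\sum_i I(Y^{i-1};Y_i)+\sum_i I(S_2^{i-1};Y_i\mid Y^{i-1})$, the two sums differ by $\sum_i I(Y^{i-1};Y_i)\ge 0$, establishing the inequality with room to spare. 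A time-sharing variable $Q$ uniform on $[1:n]$ with $U=(U_Q,Q)$, $V=V_Q$ then converts all per-letter statements into the single-letter constraints, preserving $\E d$, $\E\rho(X)$, and the mutual-information inequality in the averaged sense. The cardinality bounds follow from the convex-cover (Fenchel--Eggleston--Carath\'{e}odory) method: choose the support of $U$ to preserve $\E\rho(X)$, $I(U;Y)-I(V;S_2|U,Y)$, and the distortion, then do the same for $V$ given $U$.

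The step I expect to be the main obstacle is the converse rate inequality. One must find auxiliaries that simultaneously let the helper's input be written as $x(U,S_2)$, let the decoder's estimate be written as $\hat{s}_1(U,V,Y)$, and yield the mutual-information inequality; the causal dependence---future outputs depend on the current state through future inputs, but the current output does not depend on future states given $S_2^i$---is precisely what makes both the factorization and the telescoping identity go through, and getting the identification $U_i=(Y^{i-1},S_2^{i-1})$, $V_i=Y_{i+1}^n$ to respect all three requirements at once is the delicate part.
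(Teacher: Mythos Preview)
Your proposal is correct and follows essentially the same route as the paper: block-Markov Shannon-strategy transmission carrying a Wyner--Ziv bin index for achievability, and a Csisz\'ar-sum-type telescoping together with a time-sharing variable for the converse.

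The one noteworthy difference is your converse identification $U_i=(Y^{i-1},S_2^{i-1})$ versus the paper's $U_i=S_2^{i-1}$. By absorbing $Y^{i-1}$ into $U_i$ you make $\hat S_{1i}(Y^n)$ trivially a function of $(U_i,V_i,Y_i)$, whereas the paper must invoke the Markov chain $Y^{i-1}-(S_2^{i-1},Y_i,Y_{i+1}^n)-S_{1i}$ and a Kaspi-style minimization to argue that $Y^{i-1}$ can be dropped from the reconstruction without loss. Your choice therefore shortens the distortion step; the price is that the rate inequality needs the slightly longer decomposition $\sum_i I(U_i;Y_i)=\sum_i I(Y^{i-1};Y_i)+\sum_i I(S_2^{i-1};Y_i\mid Y^{i-1})$, which you handle correctly. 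Both identifications satisfy $U\perp S_2$, $X=x(U,S_2)$, and $V-(U,S_2)-Y$, so they land on the same single-letter region.
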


The achievability scheme in this Theorem is actually the same as that used in the problem of Causal State Amplification considered in \cite{Choudhuri--Kim--Mitra2011}, where the focus was on reconstructing $S_2$ instead of $S_1$. The expressions for the distortion-cost tradeoff are also similar, with the difference being that in the Causal State Amplification setting, one is interested in minimizing the distortion between $S_2$ and its reconstruction, rather than between $S_1$ and its reconstruction. Of course, the optimizing choice of auxiliary random variables in the two problems are different, since in our setting, we try to minimize the interference ($S_2$) as much as possible subjected to a cost constraint, whereas in the setting of Causal State Amplification, one tries to \textit{amplify} the interfering signal. As a (trivial) example, consider the case when $S_1, S_2, X \in \{0,1\}$ and $Y = X \oplus S_1 \oplus S_2$ and no cost constraint. Then, clearly, in our problem of causal estimation with a helper, we set $X = S_2$ to cancel out the interference completely, thereby recovering $S_1$ losslessly. In contrast, for the problem of Causal State Amplification, we will not cancel out $S_2$, since that is the signal we are trying to recover.

Theorem~\ref{thm:1} gives the optimal cost-distortion tradeoff for the estimation problem when the encoder knows the interfering signal causally. A natural question to ask is whether there is any penalty incurred in this restriction? In the next section, we will give an example of a binary estimation with a helper problem under Hamming loss and show that there is indeed a penalty incurred in only knowing the interfering signal causally.
 
\begin{proof}[Proof of Theorem \ref{thm:1}]

\textit{Sketch of Achievability:} As the achievability scheme is similar to that in \cite{Choudhuri--Kim--Mitra2011}, we give only a sketch in Appendix~\ref{appen_thm1} for completeness. 

\textit{Converse:} Given a $(n,C)$ code that achieves distortion $D$, we have{\allowdisplaybreaks
\begin{align*}
0 & \le \frac{1}{n}\sum_{i=1}^n I(Y^{n}_{i+1}; Y_i) \\
& = \frac{1}{n}\sum_{i=1}^n (I(Y_{i+1}^n, S_{2}^{i-1}; Y_i)  - I(S_{2}^{i-1}; Y_i | Y_{i+1}^n))\\
& \stackrel{(a)}{=} \frac{1}{n}\sum_{i=1}^n (I(Y_{i+1}^n, S_{2}^{i-1}; Y_i)  - I(Y_{i+1}^n; S_{2i} | S_{2}^{i-1})) \\
& \stackrel{(b)}{=} \frac{1}{n}\sum_{i=1}^n (I(Y_{i+1}^n, S_{2}^{i-1}; Y_i)  - I(Y_{i+1}^n, S_{2}^{i-1}; S_{2i} )) \\
& \stackrel{(c)}{=} I(Y_{Q+1}^n, S_{2}^{Q-1}; Y_Q|Q)  - I(Y_{Q+1}^n, S_{2}^{Q-1}; S_{2Q} |Q) \\
& \le  I(Y^{n}_{Q+1}, Q, S_{2}^{Q-1}; Y_Q)  - I(Y^{n}_{Q+1}, S_{2}^{Q-1}, Q; S_{2Q}) \\
& =  I(U, V; Y)  - I(U,V; S_2), 
\end{align*}}
where in $(a)$, we used the Csisz\'{a}r sum lemma \cite{Csiszar}; in $(b)$, we used the fact that $S_2$ is a memoryless source; in $(c)$, we defined $Q$ in the standard manner to be uniformly distributed over $[1:n]$ and independent of every other random variable; and in the last step, we define $U = (U_Q, Q) = (S^{Q-1}_2, Q)$ and $V = V_Q = Y_{Q+1}^n$. With these definitions of auxiliary random variables, it is clear that $U$ is independent of $S_2$ and also, the encoder output $X$, is a function of both $U$ and $S_2$. Further, using the relationship that $U$ is independent of $S_2$ and $V - (U,S_2) - Y$, the condition that $I(U, V; Y)  - I(U,V; S_2) \ge 0$ reduces to
\begin{align*}
I(U;Y) &\ge I(V;S_2|U,Y).
\end{align*}

It now remains to show that the achievable distortion can be lower bounded by this choice of auxiliary random variables. To this end, we will use a technique for lower bounding distortion found in \cite{Kaspi}. We have
\begin{align}
D+ \e_n &\ge \frac{1}{n}\sum_{i=1}^n \E d(S_{1i}, \Sh_{1i}(Y^n)) \nonumber\\
&= \frac{1}{n}\sum_{i=1}^n \E d(S_{1i}, \Sh_{1i}(Y_i, Y_{i+1}^n, Y^{i-1})) \nonumber\\
&= \frac{1}{n}\sum_{i=1}^n \E d(S_{1i}, \Sh_{1i}(Y_i, V_i, Y^{i-1})) \nonumber \\
& \ge \frac{1}{n}\sum_{i=1}^n \E d(S_{1i}, \Sh_{1i}'(Y_i, V_i, Y^{i-1}, S_{2}^{i-1})), \label{ineq:1}
\end{align}
where the last step follows from the observation that we can recover $\Sh_{1i}$ from $\Sh_{1i}'$ by simply ignoring $S_{2}^{i-1}$. Next, consider the term $\E d(S_{1i}, \Sh_{1i}'(Y_i, V_i, Y^{i-1}, S_{2}^{i-1}))$.{\allowdisplaybreaks
\begin{align}
&\E d(S_{1i}, \Sh_{1i}'(Y_i, V_i, Y^{i-1}, S_{2}^{i-1})) \nonumber \\
 &= \E d(S_{1i}, \Sh_{1i}'(Y_i, V_i, Y^{i-1}, U_i))\nonumber\\
&=  \sum p(s_{1i}, u_i, v_i, y_i, y^{i-1}) d(s_{1i}, \Sh_{1i}'(y_i, v_i, y^{i-1}, u_i)) \nonumber\\
& = \sum p(u_{i}, y_{i}, v_i) \sum p(y^{i-1}, s_{1i}| u_{i}, y_{i}, v_i)d(s_{1i}, \Sh_{1i}'(y_i, v_i, y^{i-1}, u_i))\nonumber \\
& \stackrel{(a)}{= } \sum p(w_i) \sum p(y^{i-1}|w_i)p( s_{1i}|w_i)d(s_{1i}, \Sh_{1i}'(w_i, y^{i-1}))\nonumber\\
& =  \sum p(w_i) \sum_{y^{i-1}} p(y^{i-1}|w_i)\sum_{s_{1i}}p( s_{1i}|w_i)d(s_{1i}, \Sh_{1i}'(w_i, y^{i-1}))\nonumber\\
& \stackrel{(b)}{\ge} \sum p(w_i) \sum_{y^{i-1}} p(y^{i-1}|w_i)\sum_{s_{1i}}p( s_{1i}|w_i)d(s_{1i}, \Sh_{1i}^*(w_i))\nonumber\\
& = \sum p(w_i, s_{1i})d(s_{1i}, \Sh_{1i}^*(w_i)) \nonumber\\
& = \E d(S_{1i}, \Sh_{1i}^*(W_i)), \label{ineq:2}
\end{align} }
where in $(a)$, we define $w_i = (u_{i}, y_{i}, v_i)$ for notational convenience and the fact that $p(y^{i-1}, s_{1i}| w_i) = p(y^{i-1}|w_i)p(s_{1i}|w_i)$ follows from the Markov Chain $Y^{i-1} - W_i - S_{1i}$, which in turn, follows from the fact that $S_{2}$ is only causally known at the encoder. Hence, given $S_2^{i-1}$ and also $X^{i-1}$ since it is a function of $S_2^{i-1}$, $Y^{i-1}$ is independent of $S_{1i}$. $(b)$ follows from defining $y^{i-1*} = \arg\min_{y^{i-1}} \sum_{s_{1i}}p( s_{1i}|w_i)d(s_{1i}, \Sh_{1i}'(w_i, y^{i-1}))$ and $\Sh_{1i}^*(w_i) = \Sh_{1i}'(w_i, y^{i-1*})$. 

Combining inequality \eqref{ineq:2} into inequality \eqref{ineq:1} then gives us
\begin{align*}
D+ \e_n &\ge \frac{1}{n}\sum_{i=1}^n \E d(S_{1i}, \Sh_{1i}^*(Y_i, V_i, U_i)) \\
& = \E_Q (\E(d(S_{1Q}, \Sh_{1Q}^*(Y_Q, V_Q, U_Q))|Q)) \\
& \ge \E(d(S_1, \Sh_1(Y, V, U)). 
\end{align*}
The bounds on cardinality of the auxiliary random variables follow from standard arguments (see for e.g.~\cite[Appendix C]{El-Gamal--Kim2010}). This completes the proof of converse. 
\end{proof}

\section{Noncausal Estimation with a helper} \label{sect:4}
Having established the distortion-cost region for the discrete memoryless estimation with a helper problem when the interfering signal is causally known, we now turn to the noncausal setting, that is, when $S_2$ is noncausally known at the encoder. This setting is more complicated and the distortion-cost region is still unknown. In this section, we first give an achievability scheme based on the recently proposed technique of hybrid coding \cite{Lim--Minero--Kim2010}. We then specialize our setting to the case of binary estimation with a helper. 

The problem of binary estimation with a helper is one where $S_1 \sim \Bern(p_1)$, $S_2 \sim \Bern(p_2)$, $0 \le p_1, p_2 \le 1/2$, $X \in \{0,1\}$, $Y = X \oplus S_1 \oplus S_2$ and $d(S_1, \Sh_1) = S_1 \oplus \Sh_1$, i.e., Hamming distortion. The cost is given by $\rho(X) = 1$ if $X = 1$ and $0$ otherwise. The objective of the problem is to design a coding strategy that minimizes the Hamming distortion in $S_1$. 

Specializing to the case of binary estimation with a helper allows us to derive a number of additional results of interest. In subsection~\ref{sect:4_1}, we give a   (non-trivial) condition on the cost constraint that allows us to achieve zero expected distortion. We then show that  in the binary case, there is a penalty involved if $S_2$ is known only causally instead of noncausally. As a result, the distortion incurred in $S_1$ is higher if $S_2$ is only known causally as opposed to it being known noncausally. In subsection~\ref{sect:4_2}, we describe the two lower bounds for the problem of binary estimation with a helper and then compare them. In subsection~\ref{sect:4_3}, we briefly mention a non-binary setting for which we can characterize the distortion-cost tradeoff, and show that symbol by symbol encoding is optimal in that setting. 
  
\subsection{Achievable scheme} \label{sect:4_1}
We first give an achievable scheme for the general discrete memoryless estimation with a helper problem based on hybrid coding~\cite{Lim--Minero--Kim2010}. We will extend this scheme to the Gaussian case in the next section. 

\begin{theorem} \label{thm:2}
An achievable distortion for the problem of estimation with a helper is given by
\begin{align*}
D(C) \le \inf \E d(S_1, \Sh_1(U,Y)),
\end{align*}
where the minimization is over distribution $p(u|s_2)$ and functions $x = f(s_2, u)$ and $\sh_1(u,y)$ such that
\begin{align*}
I(U;Y) &> I(U;S_2), \\
\E \rho(X)& \le C.
\end{align*}
\end{theorem}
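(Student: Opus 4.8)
The plan is to establish achievability via the hybrid coding scheme of \cite{Lim--Minero--Kim2010}, combining a covering (Gelfand--Pinsker-style) step at the encoder with a packing step at the decoder, while letting the channel input $X$ and the reconstruction $\Sh_1$ be symbol-by-symbol functions of the chosen auxiliary codeword. Fix a conditional pmf $p(u|s_2)$, functions $x = f(s_2,u)$ and $\sh_1(u,y)$ satisfying $I(U;Y) > I(U;S_2)$ and $\E\rho(X) \le C$, and let $p(u)$ denote the induced marginal. I would generate a codebook of $2^{nR}$ sequences $u^n(m)$, each drawn i.i.d.\ according to $\prod_i p(u_i)$, with the rate $R$ to be pinned down.

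For encoding, given $s_2^n$ the encoder searches for an index $m$ with $(u^n(m), s_2^n) \in \aep$; by the covering lemma this succeeds with probability approaching one provided $R > I(U;S_2) + \delta$. The encoder then transmits $x_i = f(s_{2i}, u_i(m))$ for each $i$, and incurs cost $\tfrac1n\sum_i \rho(x_i)$, which by the typical-average lemma concentrates on $\E\rho(X) \le C$, so the construction yields an $(n, C+\e_n)$ code. The decoder, on receiving $y^n$, looks for the unique index $\hat m$ with $(u^n(\hat m), y^n) \in \aep$; by the packing lemma the probability that an incorrect index is jointly typical with $y^n$ vanishes provided $R < I(U;Y) - \delta$. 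The two rate constraints are simultaneously satisfiable precisely because the hypothesis $I(U;Y) > I(U;S_2)$ is \emph{strict}, which leaves an interval for $R$ once $\delta$ is taken small enough.

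The distortion analysis is where the main work lies, and the key obstacle is to certify that the whole tuple $(U^n, S_2^n, S_1^n, X^n, Y^n)$ is jointly typical even though $U^n$ is a selected codeword rather than a sequence drawn i.i.d.\ given $S_2^n$. Conditioned on the (high-probability) success of encoding, $(u^n(m), s_2^n)$ is jointly typical; since $S_1^n$ is i.i.d.\ and independent of $(U,S_2)$ in the target distribution, appending it preserves joint typicality by the conditional typicality lemma; applying the symbol-by-symbol map $f$ to form $x^n$ preserves typicality; and passing $x^n$ through the memoryless channel $p(y|x,s_1,s_2)$ yields, again by the conditional typicality lemma, a jointly typical $(u^n, s_2^n, s_1^n, x^n, y^n) \in \aep$ with high probability. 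On the further high-probability event that the decoder recovers the correct index, $\hat m = m$, it forms $\Sh_{1i} = \sh_1(u_i(m), y_i)$, and the typical-average lemma gives $\tfrac1n\sum_i d(s_{1i}, \sh_1(u_i,y_i)) \le \E d(S_1, \Sh_1(U,Y)) + \delta$.

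Finally I would combine the pieces by a union bound: on the complement of the (vanishingly small) covering-, packing-, and atypicality-error events the per-symbol distortion is at most $\E d(S_1,\Sh_1(U,Y)) + \delta$, while on these events I bound the distortion by its maximum value $d_{\max}$ (assuming a bounded distortion measure). Taking expectations, the overall expected distortion is at most $\E d(S_1,\Sh_1(U,Y)) + \delta + \e_n d_{\max}$, which tends to $\E d(S_1, \Sh_1(U,Y))$ as $n \to \infty$; letting $\delta \to 0$ and optimizing over admissible $(p(u|s_2), f, \sh_1)$ yields the claimed bound. The only genuinely delicate step is the joint-typicality certification of this mixed digital/analog tuple, i.e.\ verifying that the hypotheses of the conditional typicality (Markov) lemma hold when $U^n$ enters both as the seed of the transmitted signal $x^n$ and as an argument of the reconstruction.
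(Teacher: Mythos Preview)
Your proposal is correct and follows essentially the same approach as the paper: both are hybrid-coding sketches in the style of \cite{Lim--Minero--Kim2010}, with a covering step at the encoder (rate above $I(U;S_2)$), a packing step at the decoder (rate below $I(U;Y)$), and symbol-by-symbol maps $x=f(s_2,u)$, $\sh_1(u,y)$. Your write-up is in fact more detailed than the paper's, which simply defers to \cite{Lim--Minero--Kim2010} for the error analysis.

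One minor remark on where you locate the ``genuinely delicate step'': the joint typicality of the \emph{correct} tuple $(U^n,S_2^n,S_1^n,X^n,Y^n)$ is handled by the conditional typicality lemma more or less as you outline. The subtler issue in hybrid coding is the \emph{packing} side: because $Y^n$ depends on the entire codebook through the selection of $U^n(m)$, the competing codewords $U^n(m')$ for $m'\neq m$ are not independent of $Y^n$, so the vanilla packing lemma does not apply directly. This dependence is precisely what the analysis in \cite{Lim--Minero--Kim2010} is designed to handle, and both you and the paper are right to defer to it.
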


\textit{Sketch of Achievability:} The achievability scheme follows that of the hybrid coding scheme given in~\cite{Lim--Minero--Kim2010}. We give only a sketch here. The codebook generation consists of generating $2^{n(I(U;S_2) + \e)}$ sequences according to $\prod_{i=1}^n p(u_i)$. For encoding, given an $s_2^n$ sequence, the encoder looks for a $u^n$ sequence such that $(u^n, s_2^n) \in \aep$. If there is more than one, it selects one sequence uniformly at random from the set of jointly typical sequences. It then outputs $x^n$ according to $f(u_i, s_{2i})$ for $i \in [1:n]$. The decoder looks for the unique $\uh^n$ sequence such that $(\uh^n, y^n) \in \aep$. It can be shown as in ~\cite{Lim--Minero--Kim2010} that the probability of decoding error goes to zero as $n \to \infty$ if
\begin{align*}
I(U;Y) > I(U;S_2) + 2\e. 
\end{align*} 
The decoder then reconstructs $S_1^n$ according to $\sh_1(\uh_i, y_i)$ for $i \in [1:n]$. 

We now specialize the achievable distortion-cost region in Theorem~\ref{thm:2} to the case of binary estimation with a helper. The next result shows that, in the binary case, zero expected distortion is achievable under a condition on the cost constraint.

\begin{proposition} \label{prop1}
For the problem of binary estimation with a helper,
\begin{align*}
D(C)_{\rm min} = 0
\end{align*}
if $H_2(C) > H(X\oplus S_2 |Y)$, where $H_2(.)$ is the binary entropy function, $X\sim \Bern(C)$ independent of $S_2$ and $Y = X\oplus S_1 \oplus S_2$. 
\end{proposition}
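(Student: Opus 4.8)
The plan is to realize the claimed zero distortion as a special case of the achievable region in Theorem~\ref{thm:2}, by exhibiting a single feasible choice of the auxiliary $U$, the encoding function $f$, and the reconstruction $\sh_1$ whose associated expected distortion is exactly $0$. Since the Hamming distortion is nonnegative, producing one such feasible triple forces $D(C)_{\rm min} \le 0$ and hence $D(C)_{\rm min} = 0$.

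The key observation is that $Y \oplus (X \oplus S_2) = (X \oplus S_1 \oplus S_2) \oplus (X \oplus S_2) = S_1$, so if the decoder can recover the ``effective interference'' $X \oplus S_2$, it recovers $S_1$ exactly. This motivates taking the auxiliary to be $U = X \oplus S_2$. Concretely, I would specify $p(u|s_2)$ so that, with $f(s_2,u) = u \oplus s_2$ and hence $X = U \oplus S_2$, the induced input $X$ is $\Bern(C)$ and independent of $S_2$; that is, set $p(u|s_2=0) = \Bern(C)$ and $p(u|s_2=1) = \Bern(1-C)$. Then the cost constraint holds since $\E \rho(X) = \P(X=1) = C$, and the reconstruction $\sh_1(u,y) = u \oplus y$ satisfies $\sh_1(U,Y) = U \oplus Y = S_1$, giving $\E d(S_1, \sh_1(U,Y)) = 0$.

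It remains to check the admissibility condition $I(U;Y) > I(U;S_2)$ of Theorem~\ref{thm:2}. Writing it as a difference of conditional entropies,
\begin{align*}
I(U;Y) - I(U;S_2) = H(U|S_2) - H(U|Y),
\end{align*}
and using that $U = X \oplus S_2$ with $X$ independent of $S_2$ gives $H(U|S_2) = H(X|S_2) = H_2(C)$, while $H(U|Y) = H(X\oplus S_2 \mid Y)$, this condition becomes exactly $H_2(C) > H(X \oplus S_2 \mid Y)$, which is precisely the hypothesis. Thus the chosen triple is feasible, and Theorem~\ref{thm:2} yields $D(C) \le \E d(S_1, \sh_1(U,Y)) = 0$.

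The argument is essentially a direct application of Theorem~\ref{thm:2}, so there is no serious analytic obstacle; the only genuinely creative step is guessing the right auxiliary $U = X \oplus S_2$ (equivalently, recognizing that decodability of the effective interference is what matters), and the only point requiring care is to specify $p(u|s_2)$ so that the resulting $X$ is simultaneously $\Bern(C)$, independent of $S_2$, and cost-feasible, so that the mutual-information constraint collapses to the stated binary-entropy condition.
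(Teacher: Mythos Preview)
Your proposal is correct and follows essentially the same approach as the paper: both invoke Theorem~\ref{thm:2} with the auxiliary $U = X \oplus S_2$ for $X \sim \Bern(C)$ independent of $S_2$, reconstruct via $\Sh_1 = U \oplus Y$, and reduce the mutual-information constraint to $H_2(C) > H(X \oplus S_2 \mid Y)$. Your version is slightly more explicit in specifying $p(u|s_2)$ and the encoding function $f$, but the substance is identical.
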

\begin{proof}
The sufficient condition on the cost constraint follows from a particular choice of auxiliary random variable $U$ in Theorem \ref{thm:2}. We let $X \sim \Bern(C)$ independent of $S_2$ and let $U = X\oplus S_2$. The decoder reconstructs $S_1$ as $\Sh_1 = Y \oplus U = S_1$, incurring zero expected distortion. We now note that the cost constraint is satisfied since $X\sim \Bern(C)$. To satisfy the mutual information condition on the choice of joint distribution, we require
\begin{align*}
I(U;Y) &> I(U;S_2) \\
\Rightarrow H(U|S_2) &> H(U|Y) \\
\Rightarrow H(X|S_2) &> H(X\oplus S_2|Y) \\
\Rightarrow H_2(C) &> H(X\oplus S_2|Y).
\end{align*}
\end{proof}

Weakening Proposition~\ref{prop1} leads to the following simple sufficient condition for zero distortion.
\begin{corollary} \label{coro1}
If $C > p_1$, $D(C)_{\rm min} = 0$.
\end{corollary}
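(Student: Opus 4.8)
The plan is to obtain the clean condition $C > p_1$ as a corollary of the sufficient condition $H_2(C) > H(X \oplus S_2 \mid Y)$ established in Proposition~\ref{prop1}, with $X \sim \Bern(C)$ independent of $S_2$ and $Y = X \oplus S_1 \oplus S_2$. Thus it suffices to upper bound the conditional entropy $H(X \oplus S_2 \mid Y)$ by $H_2(p_1)$ and then compare with $H_2(C)$ using monotonicity of the binary entropy function. First I would exploit the channel law directly: since $Y = X \oplus S_1 \oplus S_2$ and XOR is its own inverse, we have $X \oplus S_2 = Y \oplus S_1$. Conditioned on $\{Y = y\}$, the map $s_1 \mapsto y \oplus s_1$ is a bijection, and since a bijection preserves entropy,
\begin{align*}
H(X \oplus S_2 \mid Y) = H(S_1 \mid Y).
\end{align*}
Because conditioning cannot increase entropy, this is in turn bounded by $H(S_1 \mid Y) \le H(S_1) = H_2(p_1)$.

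Next I would combine this bound with monotonicity of $H_2(\cdot)$. For $C \le 1/2$, the binary entropy function is strictly increasing on $[0,1/2]$, so the hypothesis $C > p_1$ (recall $0 \le p_1 \le 1/2$) yields $H_2(C) > H_2(p_1) \ge H(X \oplus S_2 \mid Y)$, and Proposition~\ref{prop1} immediately gives $D(C)_{\rm min} = 0$. The remaining case $C > 1/2$ is handled by observing that $D(C)_{\rm min}$ is non-increasing in the cost budget $C$ (a larger budget only enlarges the feasible set of codes), so it is enough to certify zero distortion at the smaller budget $C' = 1/2 > p_1$, which the preceding display already covers.

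The hard part here is genuinely mild; the only points requiring care are the \emph{strictness} of the inequality $H_2(C) > H(X \oplus S_2 \mid Y)$ and the fact that $H_2$ is only monotone on $[0,1/2]$, which is what forces the reduction to $C' = \min(C,1/2)$ for large budgets. The one degenerate boundary is $p_1 = 1/2$, where $H_2(p_1) = 1$ cannot be strictly exceeded (and indeed with $X \sim \Bern(1/2)$ one computes $Y$ independent of $(S_1,S_2)$, so $H(S_1\mid Y)=1$); this case must be excluded or treated separately, but it is the trivial regime in which $S_1$ is a fair coin. For all $0 \le p_1 < 1/2$ the chain of inequalities above goes through and completes the proof.
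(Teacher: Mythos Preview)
Your proof is correct and follows essentially the same route as the paper: bound $H(X\oplus S_2\mid Y)=H(S_1\mid Y)\le H(S_1)=H_2(p_1)$, then use monotonicity of $H_2$ on $[0,1/2]$ to conclude $H_2(C)>H_2(p_1)$ and invoke Proposition~\ref{prop1}. The paper simply asserts $0\le C,p_1\le 1/2$ (the regime $C>1/2\ge p_2$ being covered by trivial symbol-by-symbol cancellation, as noted in the remark following the corollary), whereas you spell out the reduction to $C'=\min(C,1/2)$ and flag the degenerate endpoint $p_1=1/2$; these are refinements rather than a different argument.
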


Proof of Corollary~\ref{coro1} follows readily from Proposition \ref{prop1}. Since $0\le C,p_1 \le 1/2$, if $C> p_1$, then
\begin{align*}
H_2(C) &> H_2(p_1) \\
& = H(S_1)\\
& \ge H(S_1|Y) \\
& = H(X\oplus S_2 |Y).
\end{align*}

\begin{remark}
A trivial condition for zero distortion is when $C \ge p_2$ in which case, the encoder just performs symbol by symbol cancellation of $S_2$ to allow the decoder to recover $S_1$ losslessly. Corollary \ref{coro1} shows that zero expected distortion can be achieved even if $C < p_2$ as long as $C > p_1$. 
\end{remark}

By choosing $U = (U',V')$ in Theorem \ref{thm:2}, where $p(u|s_2) = p(u')p(v'|u',s_2)$, we obtain the distortion-cost region when $S_2$ is restricted to be causally known at the encoder\footnote{The boundary case of $I(U;Y) = I(V;S_2|U,Y)$ is treated in a similar fashion as in the causal setting.}. A natural question to ask is whether the achievable distortion for the same cost constraint can be lowered if $S_2$ is noncausally known at the encoder rather than only causally known. This is indeed the case for the problem of binary estimation with a helper.

\begin{proposition} \label{prop2}
For the problem of binary estimation with a helper, the achievable distortion when $S_2$ is noncausally known at the encoder can be \textit{strictly} smaller than the achievable distortion when $S_2$ is only causally known at the encoder, with the same cost constraint.
\end{proposition}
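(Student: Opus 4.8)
The statement asserts a strict separation, so the plan is to exhibit an explicit instance of the binary problem---a specific triple $(p_1,p_2,C)$---for which the noncausal achievable distortion is strictly below the causal minimum distortion. The cleanest witness drives the noncausal distortion all the way to zero: I would take $p_1 < p_2$ with both close to $1/2$ (e.g.\ $p_1 = 0.4$, $p_2 = 0.5$) and place the cost in the window $p_1 < C < p_2$. For the noncausal side this is immediate: since $C > p_1$, Corollary~\ref{coro1} already gives $D(C)_{\rm min}=0$. It therefore remains to show that, at the \emph{same} cost $C$, no causal scheme can reach zero distortion, i.e.\ the causal distortion characterized in Theorem~\ref{thm:1} is strictly positive.

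The engine of the causal lower bound is a structural observation about the effective interference $N := X \oplus S_2$, so that the output reads $Y = S_1 \oplus N$. Because $S_1 \sim \Bern(p_1)$ with $p_1 \in (0,1/2)$ is an independent binary nuisance, $Y$ is only a noisy view of $N$ and carries no information about $N$ beyond what the auxiliaries already supply. Concretely, I would first prove: if a causal scheme with auxiliaries $U,V$ as in Theorem~\ref{thm:1} attains $D=0$, then $\Sh_1 = S_1$ a.s.\ forces $N$ to be recoverable from $(U,V,Y)$; conditioning on $(U,V)$ and using the independence of $S_1$ shows this is possible only if $H(N \mid U,V) = 0$, i.e.\ $N$ must be a deterministic function of the auxiliaries alone.

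With this in hand I would convert $H(N\mid U,V)=0$ into a cost lower bound. For each value $u$ of $U$ the map $x(u,\cdot)\colon\{0,1\}\to\{0,1\}$ is one of four functions: the two invertible (cancellation) choices make $N$ constant but cost at least $\min(p_2,1-p_2)=p_2$ on those symbols, while the two constant choices are cost-free in $X$ but yield $N\in\{S_2,\bar S_2\}$, which by $H(N\mid U,V)=0$ forces $V$ to reveal $S_2$ exactly there. Writing $\beta$ for the probability that $U$ lands in a cost-free (``describe'') mode, this description demands $I(V;S_2\mid U,Y) \ge \beta\,H(S_2\mid S_1\oplus S_2)$, which the constraint $I(U;Y)\ge I(V;S_2\mid U,Y)$ of Theorem~\ref{thm:1} must pay for out of $I(U;Y)$. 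Computing both sides for the mixed cancel/describe scheme reduces feasibility to $f(\beta)\ge 0$ for an explicit function $f$ with $f(0)=0$ that is concave in $\beta$; for my parameter choice $f'(0)<0$, so $f(\beta)<0$ for every $\beta>0$ and description is infeasible. Hence the minimum causal cost compatible with $D=0$ is $p_2 > C$, a contradiction, so the causal distortion is strictly positive and the separation follows.

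The main obstacle is precisely this causal converse: I must rule out \emph{all} causal schemes, not just the cancel/describe family---arbitrary $p(u)p(v\mid u,s_2)$ and all four modes, including the expensive $\bar S_2$ and pass-with-flip maps that might conceivably buy extra $I(U;Y)$ budget---and show none lowers the zero-distortion cost below $p_2$ for the chosen $(p_1,p_2)$. A naive relaxation is too weak here: Fano's inequality combined with $I(U,V;Y)=H(Y)-H_2(p_1)$ only yields $\mathrm{cost}\ge p_1$, which fails to beat the noncausal threshold, so the argument genuinely needs the per-$u$ mode decomposition rather than an aggregate mutual-information bound. The remaining work is to verify the concavity and sign of $f'(0)$ and to confirm that mixing in the expensive modes cannot restore feasibility, after which any $C\in(p_1,p_2)$ is a clean witness.
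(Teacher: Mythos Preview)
Your overall architecture---drive the noncausal distortion to zero via Corollary~\ref{coro1} and then show the causal region of Theorem~\ref{thm:1} cannot reach zero at the same cost---is exactly the paper's. The paper takes $p_1=0.1$, $p_2=0.5$, $C=0.11$; from $H(S_1\mid U,V,Y)=0$ it extracts the necessary condition $H(Y)\ge H(S_1)+H(Y\mid U,S_1)$, which (since $p_2=1/2$) reads $1\ge H_2(p_1)+(1-C_s)$ with $C_s$ the total mass on the two cancellation modes, and then the cost constraint forces $C\ge C_s/2\ge H_2(p_1)/2\approx 0.235>0.11$.

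There is, however, a genuine gap in your plan. With $p_2=1/2$ the mode $x(u,\cdot)=1\oplus s_2$ is \emph{not} more expensive than $x=s_2$: both cost exactly $1/2$. Splitting the cancellation mass evenly between the two makes $\P(Y=1)=1/2$, so $H(Y)=1$ and $I(U;Y)=(1-\beta)(1-H_2(p_1))$; the constraint of Theorem~\ref{thm:1} then reads $(1-\beta)(1-H_2(p_1))\ge \beta H_2(p_1)$, which holds for every $\beta\le 1-H_2(p_1)$. Taking $\beta=1-H_2(p_1)$ and $V=S_2$ on the describe mode gives a valid causal zero-distortion scheme at cost $\tfrac12(1-\beta)=H_2(p_1)/2$. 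For your choice $p_1=0.4$ this is about $0.486$, so your conclusions that ``the minimum causal cost compatible with $D=0$ is $p_2$'' and that ``any $C\in(p_1,p_2)$ is a clean witness'' are both false. Your two-mode $f$ does satisfy $f(0)=0$ and $f'(0)<0$, but that rules out only the family you analyzed; the balanced three-mode family has the analogous function starting at $1-H_2(p_1)>0$.

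The repair is immediate: restrict to $C\in(p_1,\,H_2(p_1)/2)$, which is nonempty for every $p_1\in(0,1/2)$ since concavity of $H_2$ with $H_2(0)=0$, $H_2(1/2)=1$ gives $H_2(p_1)>2p_1$. The paper simply takes $p_1$ small so this window is wide. Its derivation is also one step shorter than yours: instead of first proving $H(N\mid U,V)=0$ and then casing on modes, it observes directly that $H(S_1\mid U,V,Y)=0$ yields $I(V;S_2\mid U,Y)\ge H(S_1\mid U,Y)=H(S_1)+H(Y\mid U,S_1)-H(Y\mid U)$, which combined with $I(U;Y)\ge I(V;S_2\mid U,Y)$ gives $H(Y)\ge H(S_1)+H(Y\mid U,S_1)$ in one stroke.
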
 
\begin{proof}
To prove Proposition~\ref{prop2}, we exhibit an example where we can achieve zero expected distortion when $S_2$ is noncausally known at the encoder, but for which the achievable distortion is strictly greater than zero when $S_2$ is only causally known. To this end, we let $p_1 = 0.1$, $p_2 = 0.5$ and $C = 0.11$. Since $C > p_1$, from Corollary~\ref{coro1}, an expected distortion of $0$ can be achieved when $S_2$ is noncausally known at the encoder. That is, we have $D(0.11)_{\rm min-noncausal} = 0$. Proof of this proposition is completed using the following claim, which states that the minimum expected distortion when $S_2$ is only causally known at the encoder, $D(0.11)_{\rm min-causal}$, is \textit{strictly} greater than zero. 

\begin{claim} \label{clm1}
$D(0.11)_{\rm min-causal} >0$ for any choice of $U,V$ satisfying the constraints given in Theorem~\ref{thm:1}.
\end{claim}
Claim~\ref{clm1} is proven in Appendix~\ref{appen_claim}.
\end{proof}

\subsection{Lower bounds for binary estimation with helper} \label{sect:4_2}
We now turn to lower bounds for the binary estimation with a helper problem. The first lower bound that we will present uses ideas from~\cite{Grover--Sahai2011} adapted from the Gaussian to the binary setting. 

\begin{theorem} \label{thm:3}
A lower bound for the achievable distortion for the problem of binary estimation with a helper is given by
\begin{align*}
D(C)_{\rm min} \ge \min H_2^{-1}(H(S_1) + H(S_2) - H(Y)) - \E X,
\end{align*}
where we define $H_2^{-1} = 0$ if the argument is negative or greater than 1, and the minimization is over joint distribution $p(x|s_2)$ such that $\E X \le C$.
\end{theorem}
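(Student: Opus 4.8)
The plan is to run a converse that, rather than working directly with the signal $S_1$, builds from the decoder's output a \emph{genie estimate of the interference $S_2$} and applies a Fano-type bound to it; this is the binary analogue of controlling the residual interference in the Gaussian problem of \cite{Grover--Sahai2011}. Concretely, given a code achieving distortion $D$ with reconstruction $\Sh_1^n$, I would set $\Sh_2^n := Y^n \oplus \Sh_1^n$, which is a function of $Y^n$. Since $S_2^n = Y^n \oplus X^n \oplus S_1^n$, the per-symbol error event is $\{S_{2i}\neq\Sh_{2i}\}=\{X_i\neq S_{1i}\oplus\Sh_{1i}\}$, so by the union bound the average Hamming distortion of $\Sh_2^n$ is at most $\E X + D$ (up to the usual $\e_n$). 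This substitution is the key creative step: it converts the helper's transmitted bits (cost $\E X$) together with the signal-estimation error (distortion $D$) into a single distortion budget for reconstructing the interference, and it is what eventually produces the $-\,\E X$ term in the bound.

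Next I would expand $H(S_2^n\mid Y^n)$ in two ways. Because $X^n$ is a deterministic function of $S_2^n$ (true for both the causal and the noncausal encoder), we have $H(Y^n\mid S_2^n)=H(S_1^n)=nH(S_1)$, whence
\begin{align*}
H(S_2^n\mid Y^n)=nH(S_1)+nH(S_2)-H(Y^n).
\end{align*}
On the other hand, since $\Sh_2^n$ is a function of $Y^n$, the Markov relation $S_2^n-Y^n-\Sh_2^n$ gives $H(S_2^n\mid Y^n)\le H(S_2^n\mid \Sh_2^n)$, and Fano's inequality combined with concavity of $H_2$ and the distortion bound above yields $H(S_2^n\mid \Sh_2^n)\le nH_2(\E X + D)$ in the limit $n\to\infty$. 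Equating the two expressions for $H(S_2^n\mid Y^n)$ and upper bounding $H(Y^n)\le\sum_i H(Y_i)$ gives
\begin{align*}
H(S_1)+H(S_2)-\tfrac{1}{n}\sum_{i=1}^n H(Y_i)\le H_2(\E X + D).
\end{align*}

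Finally I would single-letterize. Writing $\bar p(x,s_2)=\frac1n\sum_i p_{X_i,S_{2i}}$ for the averaged joint law, its $S_2$-marginal is still $\Bern(p_2)$ and its cost is $\E X\le C$, so $\bar p$ is a feasible $p(x\mid s_2)$ in the theorem's minimization. By concavity of entropy, $\frac1n\sum_i H(Y_i)\le H(\bar Y)$ with $\bar Y=X\oplus S_1\oplus S_2$ under $\bar p$, so $H(S_1)+H(S_2)-H(\bar Y)\le H_2(\E X+D)$. Inverting $H_2$ then gives $D\ge H_2^{-1}\big(H(S_1)+H(S_2)-H(\bar Y)\big)-\E X$, and since $\bar p$ is feasible this is at least the claimed minimum. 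I expect the main difficulty to be bookkeeping rather than anything deep: getting the distortion accounting for $\Sh_2^n$ exactly right, and handling the branches of $H_2^{-1}$, using $H_2^{-1}(\cdot)\le\tfrac12$ to dispose of the cases where the entropy argument is negative or above one, or where $D+\E X$ exceeds $\tfrac12$, so that the inversion step remains valid throughout.
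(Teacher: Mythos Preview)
Your proposal is correct and follows essentially the same route as the paper. The paper first proves an auxiliary claim that the optimal Hamming distortion for $S_1$ equals that for $S_2\oplus X$ (your map $\Sh_2=Y\oplus\Sh_1$ is precisely the optimal estimator produced in that claim), then applies the triangle inequality $d(S_2,\Sh_2)\le d(S_2,S_2\oplus X)+d(S_2\oplus X,\Sh_2)=X+(S_1\oplus\Sh_1)$, which is exactly your union-bound step; from there both arguments lower-bound the $S_2$-distortion via $I(S_2^n;\Sh_2^n)\le I(S_2^n;Y^n)$ and single-letterize, so the two proofs coincide up to packaging.
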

\begin{proof}
We first start with a simple claim.
\begin{claim} \label{clm2}
Let $\sh_1^n(y^n)$ be an optimal reconstruction function (with respect to Hamming distortion) for $s_1^n$ and $\xh^n(y^n)$ be an optimal reconstruction for $s_2^n\oplus x^n$. Then, $d(s_1^n, \sh_1(y^n)) = d(s_2^n \oplus x^n, \xh^n(y^n))$.
\end{claim}{\allowdisplaybreaks
To prove this claim, observe that $d(s_1^n, \sh_1(y^n)) = \sum_{i=1}^n s_{1i} \oplus \sh_{1i}(y^n)$. Consider now the function $\xh'_{i}(y^n) = \sh_{1i}(y^n) \oplus y_i$. Since $\xh_i(y^n)$ is optimal for $d(s_2^n \oplus x^n, \xh^n(y^n))$, we have
\begin{align*}
d(s_2^n \oplus x^n, \xh^n(y^n)) &\le \sum_{i=1}^n s_{2i} \oplus x_i\oplus \xh'_i(y^n) \\
& =  \sum_{i=1}^n s_{2i} \oplus x_i\oplus \sh_{1i}(y^n)\oplus y_i \\
& = \sum_{i=1}^n s_{1i} \oplus \sh_{1i}(y^n) \\
& = d(s_1^n, \sh_1(y^n)).
\end{align*} }

Hence, we have $d(s_2^n \oplus x^n, \xh^n(y^n)) \le d(s_1^n, \sh_1(y^n))$. For the other direction, consider the function $\sh'_{1i} = \xh_{i}(y^n) \oplus y_i$. Repeating the same arguments for $d(s_1^n, \sh_1(y^n))$ instead of $d(s_2^n \oplus x^n, \xh^n(y^n))$, it is easy to show that $d(s_1^n, \sh_1(y^n)) \le d(s_2^n \oplus x^n, \xh^n(y^n))$, which completes the proof of claim~\ref{clm2}. 

As an aside, the proof of claim~\ref{clm2} shows that the optimal reconstruction functions for the respective problems are related by $\xh^n(y^n) = \sh_1^n(y^n) \oplus y^n$.

We now continue with our lower bound for the binary case. Using claim~\ref{clm2}, we have
\begin{align}
d(S_1^n, \sh_1^n(Y^n)) &= d(X^n\oplus S_2^n, \xh^n(Y^n)) \nonumber \\
& \ge d(S_2^n, \xh^n(Y^n))- d(S_2^n, X^n\oplus S_2^n). \label{ineq:5}
\end{align}
The second line follows from the fact that the Hamming distance is a proper distance metric, and it therefore satisfies the triangular inequality. Hence,
\begin{align*}
\frac{1}{n}\E d(S_1^n, \sh_1^n(Y^n)) &= \frac{1}{n}\E d(X^n\oplus S_2^n, \xh^n(Y^n)) \\
& \ge \frac{1}{n}\E d(S_2^n, \xh^n(Y^n))- \frac{1}{n}\E d(S_2^n, X^n\oplus S_2^n).
\end{align*}
Let $Q$ be uniform $[1:n]$, independent of other random variables. Then,
\begin{align}
\frac{1}{n}\E d(S_2^n, X^n\oplus S_2^n) &= \E( \frac{1}{n} \sum_{i=1}^n X_i) \nonumber \\
& = \E X_Q \nonumber \\
& = \E X \label{ineq:6} 
\end{align}
This is the expected number of ones in $X^n$. For the term, $\E d(S_2^n, \xh^n(Y^n))/n$, we lower bound it by
\begin{align}
\frac{1}{n}\E d(S_2^n, \xh^n(Y^n)) &\ge \frac{1}{n}\sum_{i=1}^n\E d(S_{2i}, \sh_{2i}(Y^n)) \label{ineq:3}
\end{align}
where $\sh_2(Y^n)$ is an optimal reconstruction function with respect to Hamming distortion for $S_2$. The right hand side of inequality~\eqref{ineq:3} is then further lower bounded by the following argument. From data processing inequality~\cite{Cover}, we have

\begin{align*}
I(S^n_2; \Sh_2^n) &\le I(S_2^n;Y^n) \\
& \le \sum_{i=1}^n (H(Y_i) -H(Y_i|S_2^n, X^n)) \\
& =  \sum_{i=1}^n H(Y_i) -nH(S_{1i}) \\
& \le nH(Y) - nH(S_1). 
\end{align*}
On the other hand,
\begin{align*}
I(S^n_2; \Sh_2^n) &\ge \sum_{i=1}^n (H(S_{2i}) - H(S_{2i} \oplus \Sh_{2i})) \\
& \ge nH_2(S_2) - nH_2\left(\frac{1}{n}\sum_{i=1}^n\E d(S_{2i}, \sh_{2i}(Y^n))\right),
\end{align*}
where the last line follows from concavity of entropy~\cite{Cover}.
Combining the upper and lower bounds gives us
\begin{align}
\frac{1}{n}\sum_{i=1}^n\E d(S_{2i}, \sh_{2i}(Y^n)) \ge H_2^{-1}(H(S_1) + H(S_2) - H(Y)), \label{ineq:4}
\end{align}
where we define $H_2^{-1}(.) :=0$ if the argument is negative or greater than 1.

Substituting \eqref{ineq:3}, \eqref{ineq:4} and \eqref{ineq:6} into \eqref{ineq:5}, we have
\begin{align*}
D(C)_{\rm min} \ge H_2^{-1}(H(S_1) + H(S_2) - H(Y)) - \E X,
\end{align*}
where $\E X \le C$ from the cost constraint. 
\end{proof}

Using the lower bound in Theorem~\ref{thm:3}, we can show that when $p_1 = 1/2$, symbol by symbol cancellation of $S_2$ is optimal and hence, when $p_1 = 1/2$, the minimum achievable distortion for the same cost constraint is the same regardless of whether $S_2$ is known causally  or noncausally.

\begin{proposition} \label{prop3}
When $p_1 = 1/2$ and $p_2 > C$, the distortion-cost region is given by
\begin{align*}
D(C)_{\rm min} = p_2 -C.
\end{align*}
\end{proposition}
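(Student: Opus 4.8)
The plan is to prove the two matching bounds separately: the lower bound $D(C)_{\rm min} \ge p_2 - C$ follows almost immediately from the already-established Theorem~\ref{thm:3}, while the upper bound $D(C)_{\rm min} \le p_2 - C$ comes from an explicit symbol-by-symbol partial-cancellation scheme.

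For the converse, the key observation is that $p_1 = 1/2$ collapses the bound of Theorem~\ref{thm:3}. Since $S_1 \sim \Bern(1/2)$ is independent of $(X, S_2)$ and $Y = X \oplus S_1 \oplus S_2$, the output $Y$ is $\Bern(1/2)$ for \emph{every} admissible $p(x|s_2)$, so $H(Y) = 1 = H(S_1)$. The bound's argument $H(S_1) + H(S_2) - H(Y)$ therefore reduces to $H(S_2) = H_2(p_2)$, and since $p_2 \le 1/2$ we have $H_2^{-1}(H_2(p_2)) = p_2$. Thus Theorem~\ref{thm:3} reads $D(C)_{\rm min} \ge \min_{p(x|s_2):\,\E X \le C}\,(p_2 - \E X)$. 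As $p_2$ is constant over this family and $p_2 - \E X$ is decreasing in $\E X$, the minimum is attained at $\E X = C$, giving $D(C)_{\rm min} \ge p_2 - C$.

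For achievability I would exhibit a symbol-by-symbol scheme in which the helper cancels a fixed fraction of the interference. Since the encoder knows $S_2^n$, it can set $X_i = 1$ (cancelling) on an $\alpha$-fraction of the coordinates at which $S_{2i} = 1$, with $\alpha = C/p_2 \in [0,1]$ because $p_2 > C$, and $X_i = 0$ everywhere else. The residual interference $N_i := X_i \oplus S_{2i}$ then equals $1$ exactly on the uncancelled ones of $S_2$, so $\P(N = 1) = p_2(1-\alpha) = p_2 - C$, while the expected per-symbol cost is $p_2\alpha = C$, meeting the constraint. The decoder simply outputs $\Sh_{1i} = Y_i = S_{1i} \oplus N_i$, incurring Hamming distortion $\P(S_1 \ne \Sh_1) = \P(N = 1) = p_2 - C$; since $p_2 - C < 1/2$, outputting $Y_i$ is in fact the MAP rule. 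A standard concentration argument on the number of ones in $S_2^n$ (or, equivalently, time-sharing to realize the fractional cancellation deterministically) makes this rigorous, yielding $D(C)_{\rm min} \le p_2 - C$ and matching the converse.

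The computations are routine; the only step requiring care is the converse, where recognizing that $p_1 = 1/2$ forces $H(Y) = 1$ is precisely what makes the Theorem~\ref{thm:3} bound tight. This argument also substantiates the accompanying claim that for $p_1 = 1/2$ causal and noncausal knowledge of $S_2$ are equivalent: the converse lower bound is achieved by a scheme that uses only memoryless (hence causal) access to $S_2$, so neither the causal nor the noncausal setting can beat $p_2 - C$.
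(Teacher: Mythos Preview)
Your proof is correct and takes essentially the same approach as the paper: both use Theorem~\ref{thm:3} together with the observation $p_1 = 1/2 \Rightarrow H(Y) = 1$ for the converse, and the same partial-cancellation scheme (cancel each $S_{2i}=1$ with probability $C/p_2$, reconstruct $\Sh_1 = Y$) for achievability. The only cosmetic difference is that the paper instantiates the achievability via Theorem~\ref{thm:1} with $V = \emptyset$ and the functional representation lemma, which directly yields a causal single-letter scheme and sidesteps the concentration/time-sharing cleanup your ``$\alpha$-fraction of coordinates'' phrasing requires.
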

\begin{proof}
When $S_1 \sim \Bern(1/2)$, $Y \sim\Bern(1/2)$, regardless of the distribution of $S_2 \oplus X$. Hence, Theorem~\ref{thm:3} reduces to 
\begin{align*}
D(C)_{\rm min} &\ge p_2 -\E X \\
& \ge p_2 - C.
\end{align*}
Achievability of this lower bound follows from Theorem~\ref{thm:1} by setting $V = \emptyset$, $U$ to be a random variable such that
\begin{align*}
X = \left\{\begin{array}{clc} 1 & \mbox{w. p. } \frac{C}{p_2} & \mbox{if } S_2=1\\
0 & \mbox{w. p. } 1-\frac{C}{p_2} & \mbox{if } S_2=1 \\
0 & \mbox{otherwise}
\end{array}\right. .
\end{align*}
The existence of such a $U$ follows from the functional representation lemma~\cite[Appendix B]{El-Gamal--Kim2010}. It is easy to verify that the expected cost constraint is satisfied with this choice of distribution $p(x|s_2)$. The reconstruction function in this case is simply $\Sh_1 = Y$. It also easy to verify that the distortion constraint is satisfied. 
\end{proof}

The optimization problem in Theorem~\ref{thm:3} can be simplified in a number of cases.
\begin{corollary} \label{coro2}
Theorem~\ref{thm:3} simplifies under the following conditions

\begin{enumerate}
\item Under the condition $p_1 +(1-2p_1)(p_2 -C) \ge 1/2$, Theorem~\ref{thm:3} simplifies to
\begin{align*}
D(C)_{\rm min} \ge H_2^{-1}(H(S_1) + H(S_2) - H(p_1 +(1-2p_1)(p_2 -C))) - C.
\end{align*}
\item Under the condition $p_1 +(1-2p_1)(p_2 +C)\le 1/2$, Theorem~\ref{thm:3} simplifies to
\begin{align*}
D(C)_{\rm min} \ge H_2^{-1}(H(S_1) + H(S_2) - H(p_1 +(1-2p_1)(p_2 -C))) - C.
\end{align*}
\end{enumerate}
\end{corollary}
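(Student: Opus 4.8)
The plan is to collapse the minimization in Theorem~\ref{thm:3} to a one–dimensional problem. Write $a = \P(X=1|S_2=0)$ and $b = \P(X=1|S_2=1)$, so that every admissible $p(x|s_2)$ is a point $(a,b)\in[0,1]^2$. Only two scalar functionals of $(a,b)$ enter the objective: the expected cost $\E X = a(1-p_2)+bp_2$, and $q := \P(X\oplus S_2 = 1) = a(1-p_2)+(1-b)p_2$. Since $S_1$ is independent of $(X,S_2)$ and $Y = S_1\oplus(X\oplus S_2)$, one has $\P(Y=1) = p_1+(1-2p_1)q$, whence $H(Y) = H_2(p_1+(1-2p_1)q)$. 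Thus the objective depends on the code only through $(\E X, q)$ and equals $H_2^{-1}\!\big(H(S_1)+H(S_2)-H_2(p_1+(1-2p_1)q)\big)-\E X$.

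First I would show an optimal choice saturates the cost, $\E X = C$. For fixed $q$ the objective is decreasing in $\E X$, so it suffices to check that no $q$ is lost by spending the full budget. A direct computation with $a,b\in[0,1]$ shows that at budget $\E X = c$ the reachable values of $q$ form the interval $[p_2-c,\,p_2+c]$ (in the nontrivial regime $c\le p_2$), which \emph{grows} with $c$; hence any feasible $(q,\E X)$ is dominated by $(q,C)$. The lower end $q=p_2-C$ is realized by $a=0,\ b=C/p_2$ (spend the whole budget cancelling the $S_2=1$ symbols) and the upper end $q=p_2+C$ by $b=0,\ a=C/(1-p_2)$. The minimization therefore reduces to
\begin{align*}
\min_{q\in[p_2-C,\,p_2+C]} H_2^{-1}\!\big(H(S_1)+H(S_2)-H_2(p_1+(1-2p_1)q)\big) \;-\; C .
\end{align*}

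Because $H_2^{-1}$ is nondecreasing on $[0,1]$ (with the stated clipping to $0$), minimizing this is the same as \emph{maximizing} $H_2\big(p_1+(1-2p_1)q\big)$, i.e.\ pushing $\P(Y=1)=p_1+(1-2p_1)q$ as close to $1/2$ as the interval permits. Here the two hypotheses enter. Since $p_1+(1-2p_1)q$ is increasing in $q$, condition~1 states exactly that $\P(Y=1)\ge 1/2$ at the left endpoint, hence for every feasible $q$; as $H_2$ is decreasing on $[1/2,1]$, the maximum is attained at $q=p_2-C$, giving the first bound. Condition~2 states $\P(Y=1)\le 1/2$ at the right endpoint, hence throughout; as $H_2$ is increasing on $[0,1/2]$, the maximum is at $q=p_2+C$. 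Substituting the appropriate endpoint completes the proof.

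The crux I expect is the second paragraph: pinning down that the cost is used in full and that the reachable set of $q$ at $\E X=C$ is exactly $[p_2-C,\,p_2+C]$, which requires care with the box constraints $a,b\in[0,1]$ and with separating the degenerate cases $C\ge p_2$ and $p_1=1/2$ (where neither condition is active). Once the feasible $(\E X,q)$ region is fixed the remaining monotonicity argument is routine. (I note in passing that the endpoint analysis makes the argument of $H_2^{-1}$ in the second simplified bound read $p_1+(1-2p_1)(p_2+C)$.)
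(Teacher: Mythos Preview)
Your argument is correct and is essentially the paper's own proof, fleshed out. The paper simply observes that $p_2-C\le \E(X\oplus S_2)\le p_2+C$, that $Y\sim\Bern(p_1+(1-2p_1)\E(X\oplus S_2))$, and that under condition~1 (resp.~2) $H(Y)$ is monotone decreasing (resp.\ increasing) in $\E(X\oplus S_2)$, so the minimum in Theorem~\ref{thm:3} is attained at $p_{x\oplus s_2}=p_2-C$ (resp.\ $p_2+C$) together with $\E X=C$. Your parametrization by $(a,b)$ and your verification that the reachable $q$-interval equals $[p_2-C,p_2+C]$ and grows with the budget make explicit what the paper asserts without details; the underlying idea is identical.

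Your parenthetical remark at the end is also on target: the endpoint under condition~2 is $q=p_2+C$, so the simplified bound there should read $H_2^{-1}\!\big(H(S_1)+H(S_2)-H_2(p_1+(1-2p_1)(p_2+C))\big)-C$; the $p_2-C$ in the corollary's second item appears to be a typographical slip in the stated result.
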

\begin{proof}
The proof follows from observing that $p_2 -C \le \E X\oplus S_2 \le p_2 + C$. Define $\E X \oplus S_2 : = p_{x \oplus s_2}$. Then, $Y \sim \Bern( p_1 + (1- 2p_1)p_{x \oplus s_2})$. If condition one in the corollary is satisfied, then $H(Y)$ is a decreasing function of $p_{x\oplus s_2}$. It is then easy to see from the expression in Theorem~\ref{thm:3} that the minimizing distribution is one where $p_{x\oplus s_2} = p_2 - C$ and $\E X = C$. A similar proof applies for the second condition, which completes proof of this corollary. 
\end{proof} 
It appears to be quite difficult to obtain an explicit analytical solution for the general case of $p_1 +(1-2p_1)(p_2 -C)< 1/2< p_1 +(1-2p_1)(p_2 +C)$. A looser bound in this case is
\begin{corollary}
\begin{align*}
D(C)_{\rm min} \ge H_2^{-1}(H(S_1) + H(S_2) - 1) - C.
\end{align*}
\end{corollary}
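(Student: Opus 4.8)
The plan is to obtain this looser bound directly from Theorem~\ref{thm:3} by replacing the output entropy $H(Y)$ with its largest possible value. Recall that Theorem~\ref{thm:3} states
\begin{align*}
D(C)_{\rm min} \ge \min_{p(x|s_2):\, \E X \le C} \Bigl[ H_2^{-1}(H(S_1) + H(S_2) - H(Y)) - \E X \Bigr],
\end{align*}
so it suffices to bound the bracketed quantity below by a single constant that does not depend on the choice of $p(x|s_2)$. Once such a uniform lower bound is in hand, it immediately lower bounds the minimum as well.

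The two ingredients I would use are elementary. First, $Y = X \oplus S_1 \oplus S_2$ is a binary random variable, so $H(Y) \le 1$, which gives $H(S_1) + H(S_2) - H(Y) \ge H(S_1) + H(S_2) - 1$. Second, on the relevant branch the function $H_2^{-1}$ maps $[0,1]$ onto $[0,1/2]$ and is non-decreasing, and the convention $H_2^{-1}(a) = 0$ for $a < 0$ preserves this monotonicity. Applying monotonicity to the first inequality yields
\begin{align*}
H_2^{-1}(H(S_1) + H(S_2) - H(Y)) \ge H_2^{-1}(H(S_1) + H(S_2) - 1).
\end{align*}
Finally, the cost constraint $\E X \le C$ gives $-\E X \ge -C$. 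Combining these, the bracketed expression exceeds $H_2^{-1}(H(S_1)+H(S_2)-1) - C$ for every admissible distribution, so the minimum does too, establishing the claim.

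There is no substantive obstacle here; this corollary is purely a relaxation of Theorem~\ref{thm:3}. The only points requiring a word of care are the convention for $H_2^{-1}$ outside $[0,1]$ (used implicitly to keep the monotonicity step valid when the argument is negative) and the observation that the bound is nonvacuous only when $H(S_1)+H(S_2) \ge 1$; otherwise the right-hand side reduces to $-C < 0$, which holds trivially since $D(C)_{\rm min} \ge 0$. One may also note for consistency that $H(Y) \ge H(S_1)$, since $Y$ is $S_1$ XORed with an independent binary variable, so the argument $H(S_1)+H(S_2)-H(Y)$ never exceeds $1$ and the use of $H_2^{-1}$ throughout remains well posed.
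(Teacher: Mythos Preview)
Your proof is correct and takes essentially the same approach as the paper, which simply states that the corollary follows directly from Theorem~\ref{thm:3}. Your argument makes explicit the two relaxations implicit in that remark: bounding $H(Y)\le 1$ and $\E X\le C$, together with the monotonicity of $H_2^{-1}$ under the stated convention.
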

Proof of this corollary is omitted as it follows directly from Theorem~\ref{thm:3}.

We now present another lower bound for the binary setting, using ideas from the proof of converse for Gel'fand-Pinsker coding given in~\cite[Chapter 7]{El-Gamal--Kim2010}, and also ideas from~\cite{Lei--Chia--Weissman2011}. The main intuition in this lower bound comes from Claim~\ref{clm2} used in the proof of Theorem~\ref{thm:3}, which shows that the optimum distortion incurred in reconstructing $X\oplus S_2$ is the same as the optimum distortion incurred in reconstructing $S_1$. We then try to lower bound $D(C)_{\rm min}$ by lower bounding the distortion incurred in reconstructing $X \oplus S_2$. We will see in the sequel that in some cases, this lower bound is better than the previous lower bound given in Theorem~\ref{thm:3}.
\begin{theorem} \label{thm:4}
A lower bound for the achievable distortion for the problem of binary estimation with a helper is given by
\begin{align*}
D(C)_{\rm min} \ge \min H_2^{-1}(H(S_1) + H(X\oplus S_2|U) + I(U;S_2) - H(Y)),
\end{align*}
where we minimize over $p(u|s_2)$ and $x = f(u,s_2)$ such that $\E X \le C$. The cardinality of the auxiliary random variable $U$ may be upper bounded by $|\Uc| \le |\Sc_2|(|\Xc|-1)+2$. In the binary case that we are interested in, $|\Uc| \le 4$.
\end{theorem}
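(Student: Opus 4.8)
The plan is to follow the strategy announced just before the statement: rather than passing through $S_2$ via the triangle inequality as in Theorem~\ref{thm:3}, I directly lower bound the distortion of reconstructing $W^n := X^n\oplus S_2^n$. By Claim~\ref{clm2} the optimal Hamming distortion in reproducing $S_1^n$ from $Y^n$ equals the optimal distortion $D'$ in reproducing $W^n$ from $Y^n$, so it suffices to lower bound $D'$, where $\hat W^n(Y^n)$ denotes the optimal (MAP) reconstruction of $W^n$. Since $Y_i=W_i\oplus S_{1i}$ with $S_{1i}$ independent of $W_i$ (because $W^n$ is a function of $S_2^n$ and $S_1^n\perp S_2^n$), I have $H(Y_i\mid W_i)=H(S_{1i})$, equivalently $I(W_i;Y_i)=H(Y_i)-H(S_{1i})$, which I will use to produce the terms $H(S_1)-H(Y)$.

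First I would set up a binary rate--distortion converse. Because $\hat W_i$ is a deterministic function of $Y^n$ and the distortion is Hamming, $H(W_i\mid \hat W_i)\le H_2(\E d(W_i,\hat W_i))$ and $H(W_i\mid Y^n)\le H(W_i\mid \hat W_i)$; summing over $i$ and using concavity of $H_2$ gives $nH_2(D')\ge \sum_{i=1}^n H(W_i\mid Y^n)$. The task is then to single-letterize $\sum_i H(W_i\mid Y^n)$ into the claimed form.

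The core step is a Gel'fand--Pinsker--style single-letterization. I would introduce the auxiliary $U_i=(S_2^{i-1},S_{2,i+1}^n,Y_{i+1}^n)$; note that it contains $S_2^{\setminus i}$, so that $X_i=f_i(S_2^n)$, and hence $W_i=X_i\oplus S_{2i}$, is a deterministic function of $(U_i,S_{2i})$, matching the constraint $x=f(u,s_2)$ in the statement. Expanding $\sum_i H(W_i\mid Y^n)$ and applying the Csisz\'ar sum lemma (to trade the $Y$- and $S_2$-indexed telescoping cross terms) together with the memorylessness of $S_1^n,S_2^n$ and the independence $S_1^n\perp S_2^n$, I expect to lower bound it by $\sum_i\big[H(W_i\mid U_i)+I(U_i;S_{2i})-I(W_i;Y_i)\big]$, where the penalty $I(U_i;S_{2i})$ emerges exactly as in the Gel'fand--Pinsker converse (it is nonzero precisely because the noncausal choice of $X_j$, $j>i$, depends on $S_{2i}$). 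Using $-I(W_i;Y_i)=H(S_{1i})-H(Y_i)$ rewrites this as $\sum_i\big[H(W_i\mid U_i)+I(U_i;S_{2i})+H(S_{1i})-H(Y_i)\big]$.

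Finally I would introduce a time-sharing variable $Q\sim\mathrm{Unif}[1:n]$, independent of everything, and set $U=(U_Q,Q)$, $S_2=S_{2Q}$, $X=X_Q$, $W=X\oplus S_2$, $Y=Y_Q$. Dividing by $n$ identifies $\tfrac1n\sum_iH(W_i\mid U_i)=H(W\mid U)$, $\tfrac1n\sum_iI(U_i;S_{2i})=I(U;S_2)$ (using $Q\perp S_{2Q}$), and $\tfrac1n\sum_iH(S_{1i})=H(S_1)$, while $\tfrac1n\sum_iH(Y_i)=H(Y_Q\mid Q)\le H(Y)$. Hence $H_2(D')\ge H(S_1)+H(W\mid U)+I(U;S_2)-H(Y)$; since the optimal reconstruction yields $D'\le 1/2$ and $H_2$ is increasing there, inverting (with the stated convention $H_2^{-1}:=0$ outside $[0,1]$) gives the bound for this particular $U$. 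As $X=f(U,S_2)$ and $\E X=\E X_Q\le C$, this $U$ is feasible in the minimization, so $D(C)_{\rm min}\ge\min H_2^{-1}(\cdots)$, and the cardinality bound follows from the usual support-lemma argument. The main obstacle is the third step: choosing $U_i$ so that the functional constraint $x=f(u,s_2)$ holds \emph{and} the Csisz\'ar sum lemma collapses the $n$-letter conditional entropy precisely into $H(W\mid U)+I(U;S_2)$ without slack, while correctly tracking the Markov relations induced by noncausal encoding; the remaining steps are routine.
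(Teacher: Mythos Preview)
Your overall strategy matches the paper's: invoke Claim~\ref{clm2} to reduce to bounding the Hamming distortion in reconstructing $W^n=X^n\oplus S_2^n$, then single-letterize via a Gel'fand--Pinsker style identification of an auxiliary $U_i$. Your starting point $nH_2(D')\ge\sum_i H(W_i\mid Y^n)$ is equivalent, after routine manipulation, to the paper's data-processing route $I(W^n;\hat W^n)\le I(W^n;Y^n)$: both reduce to $nH_2(D)\ge H(W^n)+nH(S_1)-\sum_iH(Y_i)$.

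The genuine gap is in your choice of $U_i$ and the Csisz\'ar step you leave unresolved. The paper takes $U_i=(W^{i-1},S_{2,i+1}^n)$ (past $W$'s, future states), and this choice makes the Csisz\'ar sum lemma produce an \emph{exact identity} with no slack:
\[
H(W^n)=\sum_i H(W_i\mid W^{i-1})=\sum_i\bigl[H(W_i\mid W^{i-1},S_{2,i+1}^n)+I(S_{2,i+1}^n;W_i\mid W^{i-1})\bigr]=\sum_i\bigl[H(W_i\mid U_i)+I(U_i;S_{2i})\bigr],
\]
the last equality using $\sum_i I(S_{2,i+1}^n;W_i\mid W^{i-1})=\sum_i I(W^{i-1};S_{2i}\mid S_{2,i+1}^n)$ and memorylessness of $S_2$. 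Your proposed $U_i=(S_2^{\setminus i},Y_{i+1}^n)$ does \emph{not} yield this identity, and the required inequality $H(W^n)\ge\sum_i[H(W_i\mid U_i)+I(U_i;S_{2i})]$ can fail: for instance with $p_2=1/2$ and the (admittedly pathological) code $X_i\equiv S_{2,1}$, one has $H(W^n)=n-1$ while your sum is close to $n$ for large $n$. So the step you flagged as the ``main obstacle'' is not merely a technicality; with your $U_i$ it does not go through.

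Your motivation for packing $S_2^{\setminus i}$ into $U_i$---to ensure $X_i=f(U_i,S_{2i})$ holds deterministically---is unnecessary. The paper's $U_i$ only yields a conditional $p(x\mid u,s_2)$, but it then shows in a separate, short step that one can always pass to a deterministic $x=f(\tilde u,s_2)$ by enlarging $U$ to $\tilde U=(U,V)$ with $V$ independent of $(U,S_2)$; this preserves $I(\tilde U;S_2)=I(U;S_2)$ while only decreasing $H(W\mid\tilde U)$, so the minimization is unaffected. Adopt the paper's $U_i=(W^{i-1},S_{2,i+1}^n)$ and the Csisz\'ar step becomes an equality, after which the rest of your outline (time-sharing, inversion of $H_2$, cardinality) is correct as written.
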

\begin{proof}
For notational convenience, let $\Zh$ represent the optimal reconstruction for $X \oplus S_2$ and $Z = X \oplus S_2$. From data processing inequality,
\begin{align*}
I(Z^n; \Zh^n) \le I(Y^n; Z^n).
\end{align*}
On the one hand,{\allowdisplaybreaks
\begin{align*}
I(Z^n; \Zh^n) &\ge H(Z^n) - H(\Zh^n | Z^n) \\
&\ge \sum_{i=1}^n (H(Z_i|Z^{i-1}) -H(\Zh_i \oplus Z_i) ) \\
& \stackrel{(a)}{\ge} \sum_{i=1}^n (H(Z_i|Z^{i-1}, S_{2,i+1}^n) + I(S_{2,i+1}^n; Z_i |Z^{i-1}) )- H_2(D(C)_{\rm \min}) \\
& = \sum_{i=1}^n (H(Z_i|Z^{i-1}, S_{2,i+1}^n) + I(Z^{i-1}; S_{2i} | S_{2,i+1}^n) )- H_2(D(C)_{\rm \min}) \\
& = \sum_{i=1}^n (H(Z_i|Z^{i-1}, S_{2,i+1}^n) + I(Z^{i-1}, S_{2,i+1}^n; S_{2i}  ) )- H_2(D(C)_{\rm \min}) \\
& = \sum_{i=1}^n (H(Z_i|U_i) + I(U_i; S_{2i}  ) )- H_2(D(C)_{\rm \min}).
\end{align*}
In $(a)$, we used concavity of entropy and Claim~\ref{clm2}, which states that the optimum distortion for $X \oplus S_2$ is the same as the optimum distortion for $S_1$.
Next, 
\begin{align*}
I(Y^n;Z^n) &= H(Y^n) - H(Y^n|Z^n) \\
& \le \sum_{i=1}^n H(Y_i) - H(S_1^n) \\
& = \sum_{i=1}^n (H(Y_i) - H(S_{1i})).
\end{align*}
}
Defining the standard $Q$ uniform random variable over $[1:n]$ independent of other random variables, $U = (U_Q,Q)$, $Y_Q = Y$, $S_{1Q} = S_1$, $S_{2Q} = S_2$ and $Z_Q = Z$ then gives us the following lower bound
\begin{align*}
D \ge H_2^{-1}(H(S_1) + H(Z|U) + I(U;S_2) - H(Y)),
\end{align*}
where we minimize over $p(u|s_2)p(x|u,s_2)$ such that $\E X \le C$. Reducing the cost constraint to this single letter expression ($\E X \le C$) follows the same procedure as in Theorem~\ref{thm:3}.

Next, we note that instead of minimizing over $p(x|u,s_2)$, it suffices to minimize over $x = f(u,s_2)$. To see this, note that we can always find a $V$, independent of $U,S_2$, such that $p(x|u,s_2) = f(u,v,s_2)$. Now, define $\Ut = (U,V)$. Observe that since we preserve both $p(x\oplus s_2)$ and $p(x)$, the cost constraint and $H(Y) = H(Z \oplus S_1)$ remains unchanged. Now, note that 
\begin{align*}
H(Z|\Ut) \le H(Z|U),
\end{align*} 
and 
\begin{align*}
I(\Ut; S_2) &= I(U;S_2) + I(V;S_2|U) \\
& = I(U;S_2).
\end{align*}
The bound on the cardinality of $U$ follows from standard techniques and we omit it here. This completes the proof of the lower bound.
\end{proof}
Theorem~\ref{thm:4} involves minimizing over joint distributions and choice of auxiliary random variable $U$. A looser bound that is easier to compute is given by the following corollary.
\begin{corollary} \label{coro4}
\begin{align*}
D(P)_{\rm min} \ge H_2^{-1}(H_2(p_1) + \min_{p_2 -c \le \alpha \le p_2+c} \{H_2(\alpha) - H_2(\alpha*p_1)\} - I(U;Z) + I(U;S_2)).
\end{align*}
for some joint distribution $p(u|s_2)$ and $x = f(u,s)$ satisfying $\E X \le C$, and $Z = X\oplus S_2$. 
\end{corollary}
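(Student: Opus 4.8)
The plan is to start from the bound in Theorem~\ref{thm:4} and weaken it by lower bounding $H(Z|U)$ in a way that depends only on the marginal statistics of $Z = X \oplus S_2$. Recall that Theorem~\ref{thm:4} gives
\begin{align*}
D(C)_{\rm min} \ge \min H_2^{-1}\bigl(H(S_1) + H(Z|U) + I(U;S_2) - H(Y)\bigr),
\end{align*}
where $H_2^{-1}$ is monotone increasing on the relevant range, so it suffices to produce a valid lower bound on the argument and then apply $H_2^{-1}$. The argument contains two terms that couple to the auxiliary variable $U$, namely $H(Z|U)$ and $I(U;S_2)$; the aim is to replace $H(Z|U)$ by $H(Z) - I(U;Z)$ and then lower bound $H(Z)=H_2(\alpha)$, where $\alpha := \E(X\oplus S_2) = \P(Z=1)$.

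First I would rewrite $H(S_1) = H_2(p_1)$ and $H(Z|U) = H_2(\alpha) - I(U;Z)$ by the definition of conditional entropy and mutual information. Next I would compute $H(Y)$ in terms of $\alpha$: since $Y = S_1 \oplus Z$ with $S_1\sim\Bern(p_1)$ independent of $Z\sim\Bern(\alpha)$, we have $Y\sim\Bern(\alpha * p_1)$ where $\alpha * p_1 = \alpha(1-p_1)+(1-\alpha)p_1$ is the binary convolution, so $H(Y) = H_2(\alpha * p_1)$. Substituting these into the Theorem~\ref{thm:4} argument turns it into
\begin{align*}
H_2(p_1) + H_2(\alpha) - I(U;Z) + I(U;S_2) - H_2(\alpha*p_1).
\end{align*}
Finally, since the cost constraint forces $p_2 - C \le \alpha \le p_2 + C$ (the same range used in the proof of Corollary~\ref{coro2}, because flipping at most a $C$-fraction of bits moves $\P(Z=1)$ by at most $C$ away from $p_2$), I would lower bound the pair $H_2(\alpha) - H_2(\alpha*p_1)$ by its minimum over that interval, pulling it out of the $U$-dependence. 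This isolates the data-processing terms $-I(U;Z)+I(U;S_2)$, which remain inside under the optimization over $p(u|s_2)$ and $x=f(u,s_2)$, yielding exactly the stated corollary.

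The main subtlety is justifying that $H_2(\alpha)-H_2(\alpha * p_1)$ may be minimized over $\alpha$ independently of the remaining terms. The cleanest argument is that the corollary is stated as a valid lower bound (a $\ge$), so replacing $H_2(\alpha)-H_2(\alpha * p_1)$ by its infimum over the feasible interval $[p_2-C,\,p_2+C]$ can only decrease the argument of the monotone $H_2^{-1}$, hence preserves the inequality; one does not need the minimizing $\alpha$ to be simultaneously compatible with the minimizing $U$. The only point requiring a line of care is confirming the feasible range of $\alpha$ and that $H_2^{-1}$ is applied to an argument in $[0,1]$ (handled by the convention $H_2^{-1}(\cdot):=0$ outside $[0,1]$ inherited from Theorem~\ref{thm:3}). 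Since the proof is a direct weakening of an already-established bound, I expect no genuine obstacle — the work is bookkeeping with the binary convolution and the range of $\alpha$.
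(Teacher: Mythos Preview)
Your proposal is correct and follows essentially the same route as the paper: both rewrite the Theorem~\ref{thm:4} argument as $H_2(p_1) + \bigl(H_2(\alpha) - H_2(\alpha * p_1)\bigr) - \bigl(I(U;Z) - I(U;S_2)\bigr)$ and then separately minimize the $\alpha$-dependent piece over the feasible interval $[p_2-C,\,p_2+C]$. Your derivation of this identity via $H(Z|U)=H(Z)-I(U;Z)$ is in fact slightly more direct than the paper's expansion through $H(Z,U)-H(U|S_2)$, but the content is identical.
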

In Corollary~\ref{coro4}, we need to perform maximization of $I(U;Z) - I(U;S_2)$ subjected to a cost constraint $\E X \le C$. This is nothing but the problem of maximization of the capacity of a Gel'fand-Pinsker channel subjected to a cost constraint. There are efficient numerical algorithms for performing this maximization, cf.\ ~\cite[Page 555-556]{Keshet--Steinberg--Merhav2007} for a description of the algorithm. 
\begin{proof}

Starting from Theorem~\ref{thm:4}, consider the term $H(Z|U) + I(U;S_2) - H(Y)$ in the Theorem.
\begin{align*}
H(Z|U) + I(U;S_2) - H(Y) &= H(Z,U) - H(U|S_2) - H(Y) \\
& = H(Z) - H(Y) + H(U|Z) - H(U|S_2) \\
& = (H(Z) - H(Y)) -(I(U;Z) - I(U;S_2)).
\end{align*}

We now minimize the terms $(H(Z) - H(Y))$ and $-(I(U;Z) - I(U;S_2))$ separately. We have discussed maximizing the term $I(U;Z) - I(U;S_2)$ earlier. As for the term $(H(Z) - H(Y))$, using the observation $p_2 -C \le \E Z \le p_2 +C$, we have
\begin{align*}
\min \{H(Z) - H(Y)\} = \min_{p_2 -C \le \alpha \le p_2+C} \{H_2(\alpha) - H_2(\alpha*p_1)\},
\end{align*}
which completes the proof.
\end{proof}

\subsection*{Comparison of lower bounds}
As we mentioned, the expressions in Theorems~\ref{thm:3} and~\ref{thm:4} can be difficult to compute. For the purpose of simulations, we compare the expressions of Corollary~\ref{coro2} with those of Corollary~\ref{coro4}, when the conditions given in Corollary~\ref{coro2} are satisfied. Note that since Corollary~\ref{coro4} can be weaker than Theorem~\ref{thm:4} whereas Corollary~\ref{coro2} gives the same bounds as Theorem~\ref{thm:3} when the conditions are satisfied, an advantage of this comparison is that it shows when Theorem~\ref{thm:4} can be strictly larger than Theorem~\ref{thm:3}. 

For our numerical example, we set $p_2 = 0.1$, vary the cost from $0.01$ to $0.03$ and compute plots for $p_1 = 0.05, 0.09$. In general, the bound in Theorem~\ref{thm:3} is better, but we focus on small values of cost, $p_1$ and $p_2$ to show that there are regimes in which the expression in Theorem~\ref{thm:4} is better. The plots are shown in Figures~\ref{fig:bp1} and~\ref{fig:bp2}. As can be seen in Figure~\ref{fig:bp1}, there are regions for which Theorem~\ref{thm:4} is strictly better than Theorem~\ref{thm:3}. However, Theorem~\ref{thm:3} does give a better bound for a wider range of values as compared to Theorem~\ref{thm:4}. 

\begin{figure}[!h]
\psfrag{C2}[l]{Corollary~\ref{coro2}}
\psfrag{C4}[l]{Corollary~\ref{coro4}}

\begin{center}
\scalebox{0.75}{\includegraphics{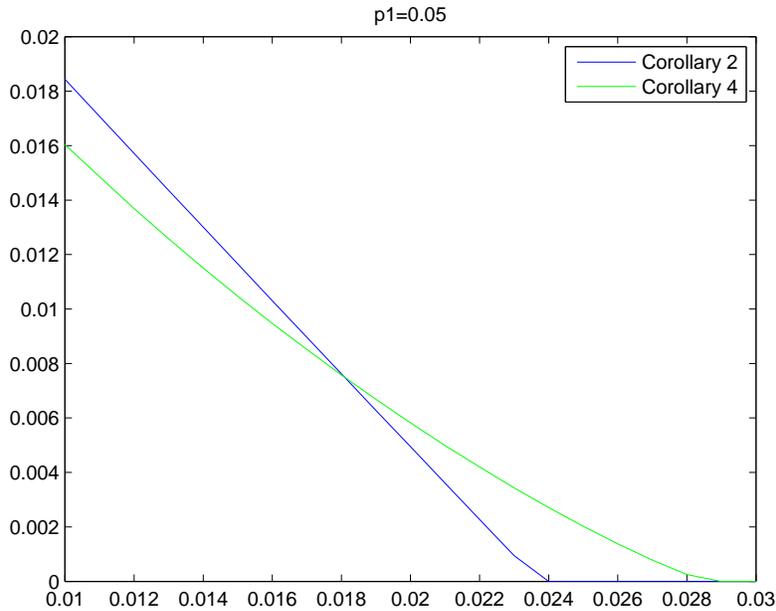}}
\caption{Comparison of bounds for $p_1 = 0.05$. Y-axis represents the distortion level while X-axis represents the cost.} \label{fig:bp1}
\end{center}
\end{figure}  

\begin{figure}[!h]
\psfrag{C2}[l]{Corollary~\ref{coro2}}
\psfrag{C4}[l]{Corollary~\ref{coro4}}

\begin{center}
\scalebox{0.75}{\includegraphics{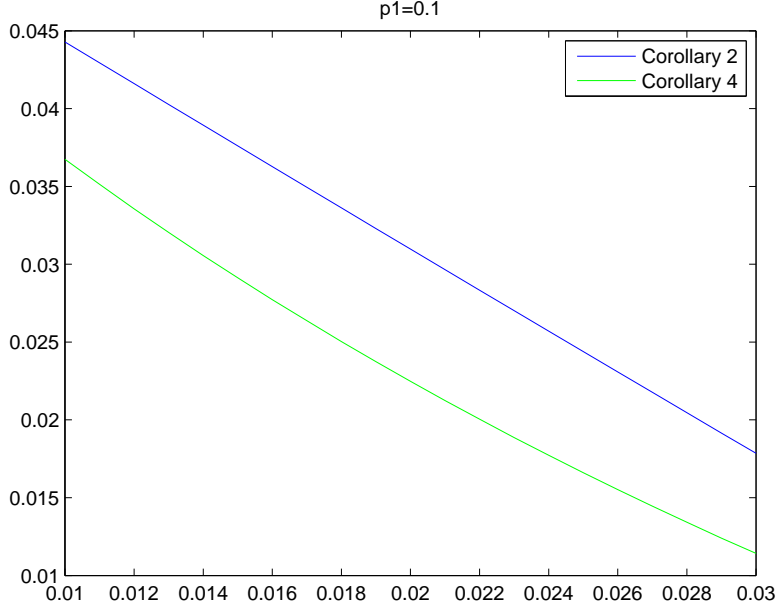}}
\caption{Comparison of bounds for $p_1 = 0.1$. Y-axis represents the distortion level while X-axis represents the cost. In this case, the bound given by Corollary~\ref{coro2} is strictly better than that for Corollary~\ref{coro4}.} \label{fig:bp2}
\end{center}
\end{figure}  

\subsection{Erasure estimation with helper} \label{sect:4_3}

For most of this section, we have focused on the binary estimation with helper setup. In this subsection, we briefly mention a setting, erasure estimation with helper, for which we can characterize the distortion-cost function and also, for which symbol by symbol cancellation of $S_2$ is optimal.

The setting is defined by $S_1 \sim p(s_1)$, $S_2 \sim \Bern(p_2)$, $X \in \{0,1\}$ and $Y$ is defined as follows
\begin{align*}
Y = \left\{\begin{array}{ccc} S_1 & \mbox{ if } & X\oplus S_2 =0 \\
e & \mbox{ if } & X\oplus S_2 = 1 \end{array} \right. .
\end{align*}

This is a model of a channel in which when the interfering signal is large, the desired signal is erased. When the interfering signal is small, decoder receives the signal perfectly. The helper tries to help the decoder by canceling the interference. The distortion-cost region is characterized by the following proposition.
\begin{proposition} \label{prop4}
The distortion-cost region for the problem of erasure estimation with helper is given by
\begin{align*}
D(C)_{\rm min} &= \min\left\{ \P(X \oplus S_2 = 1)(\min_{\sh_1} \E d(S_1, \sh_1)) \right.\\
& \left.\qquad \qquad + \P(X \oplus S_2 = 0) (\min_{\sh_1} \E d(S_1, \sh_1))\right\},
\end{align*}
where the minimization is over $p(x|s_2)$ satisfying $\E \rho(X) \le C$.
\end{proposition}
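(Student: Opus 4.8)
The plan is to prove the matching achievability and converse, exploiting the special structure that the only feature of the encoder's action at each position that matters is the ``erasure bit'' $E = X\oplus S_2$: when $E=0$ the decoder observes $S_1$ exactly, and when $E=1$ it observes nothing about $S_1$ beyond its prior. Write $d_e := \min_{\sh_1}\E d(S_1,\sh_1)$ for the best distortion obtainable from the prior alone, and $d_0 := \E_{S_1}[\min_{\sh_1} d(S_1,\sh_1)]$ for the distortion when $S_1$ is revealed exactly; these are the two conditional minima appearing in the statement (with $d_0=0$ whenever $d(s,s)=0$).

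For achievability, I would use a symbol-by-symbol scheme exactly as in the proof of Proposition~\ref{prop3}. Fix any $p(x|s_2)$ with $\E\rho(X)\le C$; by the functional representation lemma~\cite[Appendix B]{El-Gamal--Kim2010} this law is realized by a deterministic map $x=f(s_2,u')$ with $U'$ independent of $S_2$, so it is admissible in Theorem~\ref{thm:1}. The decoder sets $\Sh_{1i}=Y_i$ whenever $Y_i\neq e$ and $\Sh_{1i}=\arg\min_{\sh_1}\E d(S_1,\sh_1)$ whenever $Y_i=e$. Since the positions with $Y_i=e$ are exactly those with $X_i\oplus S_{2i}=1$, the induced per-symbol distortion is $\P(X\oplus S_2=1)\,d_e+\P(X\oplus S_2=0)\,d_0$, and optimizing over $p(x|s_2)$ yields the claimed expression as an upper bound.

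For the converse, given any $(n,C)$ code I would introduce the erasure indicators $E_i=X_i\oplus S_{2i}$. Because $X^n$ is a function of $S_2^n$, the whole pattern $E^n$ is a function of $S_2^n$ and is therefore independent of $S_1^n$; moreover $S_1^n$ is i.i.d. The observation $Y^n$ is equivalent to the pair $(E^n,\{S_{1j}:E_j=0\})$. The key step is a pointwise bound on each term $\E d(S_{1i},\Sh_{1i}(Y^n))$: conditioned on $E_i=0$ the decoder knows $S_{1i}$, contributing at least $d_0$, while conditioned on $E_i=1$ the erased symbol $S_{1i}$ is independent of everything carried by $Y^n$, so its conditional law is the prior and the term contributes at least $d_e$. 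Summing,
\begin{align*}
D+\e_n \ge \frac1n\sum_{i=1}^n \big(\P(E_i=1)\,d_e+\P(E_i=0)\,d_0\big).
\end{align*}
Finally I would introduce a time-sharing variable $Q\sim\mathrm{Unif}[1:n]$ and set $X=X_Q$, $S_2=S_{2Q}$. Since each $S_{2i}\sim\Bern(p_2)$, the marginal of $S_{2Q}$ is correct, the induced law $p(x|s_2)$ satisfies $\E\rho(X)=\frac1n\sum_i\E\rho(X_i)\le C$, and $\P(X\oplus S_2=1)=\frac1n\sum_i\P(E_i=1)$. Because the right-hand side above is affine in the erasure probabilities, for this induced distribution it equals $\P(X\oplus S_2=1)d_e+\P(X\oplus S_2=0)d_0$, which is at least the single-letter minimum; letting $n\to\infty$ matches the achievability bound.

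The main obstacle is the converse's independence argument for erased positions: one must argue cleanly that, after revealing the erasure pattern $E^n$ and all unerased source symbols, an erased $S_{1i}$ still has the prior as its conditional law. This rests on the joint independence $S_{1i}\perp(E^n,(S_{1j})_{j\neq i})$, which uses both that $S_1^n$ is i.i.d. and that $E^n$ is a deterministic function of $S_2^n\perp S_1^n$. Once this is in hand the remaining steps (the pointwise distortion bounds and the affine time-sharing reduction) are routine, and it is precisely the affineness of the objective in the erasure probability that makes symbol-by-symbol encoding optimal.
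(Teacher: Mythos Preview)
Your proposal is correct and follows essentially the same approach as the paper: the same symbol-by-symbol achievability scheme, and a converse built on the observation that when position $i$ is erased, $S_{1i}$ is independent of everything the decoder sees, followed by averaging via a uniform time index $Q$. Your packaging of the converse via the equivalence $Y^n\leftrightarrow (E^n,\{S_{1j}:E_j=0\})$ is a slightly cleaner way of expressing what the paper does by directly expanding the expectation and factoring $p(s_{1i},y^{n\setminus i},x_i\oplus s_{2i}=1)=p(s_{1i})\,p(y^{n\setminus i},x_i\oplus s_{2i}=1)$, but the underlying independence argument and the affine time-sharing step are identical.
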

\begin{proof}
Achievability of the distortion-cost region uses a modified version of the achievability scheme used in Proposition~\ref{prop3}. The modification comes in the reconstruction function where
\begin{align*}
\sh_1(Y) = \left\{ \begin{array}{ll} \arg \min_x d(Y, x) & \mbox{ if } Y = S_1, \\
\arg \min_x \E d(S_1, x) & \mbox{ if } Y = e \end{array}\right. .
\end{align*}
With this choice of reconstruction function and noting that $\P(Y = S_1) = \P(X\oplus S_1 = 0)$ and $\P(Y = e) = \P(X\oplus S_1 = 1)$, it is easy to see that the achievable distortion-cost region simplifies to the expression given in the Proposition. 

For the converse, fixing a $(n, C)$ code achieving distortion $D$, we have
\begin{align*}
D &= \frac{1}{n} \sum_{i=1}^n \E d(S_{1i}, \Sh_{1i}(Y^n)).
\end{align*}
Consider now the term $\E d(S_{1i}, \Sh_{1i}(Y^n))$. We have{\allowdisplaybreaks
\begin{align*}
\E d(S_{1i}, \Sh_{1i}(Y^n)) &= \sum p(s_{1i}, y^{n\backslash i}, y_i) d(s_{1i}, \sh_{1i}(y^n)) \\
& = \sum \left(p(s_{1i}, y^{n\backslash i}, y_i, x\oplus s_{2i} = 0) d(s_{1i}, \sh_{1i}(y^n))\right.\\
& \left.\qquad \quad + p(s_{1i}, y^{n\backslash i}, y_i, x\oplus s_{2i} = 1) d(s_{1i}, \sh_{1i}(y^n))\right) \\
& \stackrel{(a)}{=} \sum \left(p(s_{1i}, y^{n\backslash i}, y_i, x\oplus s_{2i} = 0) d(s_{1i}, \sh_{1i}(y^n))\right.\\
& \left.\qquad \quad + p(s_{1i}, y^{n\backslash i}, x\oplus s_{2i} = 1) d(s_{1i}, \sh_{1i}(y^{n\backslash i}, y_i = e))\right) \\
& = \sum p(s_{1i}, x\oplus s_{2i} = 0)p( y^{n\backslash i}, y_i|s_{1i}, x\oplus s_{2i} = 0) d(s_{1i}, \sh_{1i}(y^n))\\
& \qquad + \sum p(s_{1i}, y^{n\backslash i}, x\oplus s_{2i} = 1) d(s_{1i}, \sh_{1i}(y^{n\backslash i}, y_i = e)),
\end{align*}}
where $(a)$ follows from the fact that when $x \oplus s_{2i} = 1$, $y_i = e$. Next, focusing on the first term in the sum, we note that $\P(Y_i = S_{1i}|X\oplus S_2 = 0, S_{1i}) = 1$. Hence, using $1_{\{.\}}$ to denote the indicator function, the first term simplifies to the following
\begin{align*}
& \sum p(s_{1i}, x\oplus s_{2i} = 0)p( y^{n\backslash i}, y_i|s_{1i}, x\oplus s_{2i} = 0) d(s_{1i}, \sh_{1i}(y^n)) \\
& = \sum p(s_{1i}, x\oplus s_{2i} = 0)p( y^{n\backslash i}|s_{1i}, x\oplus s_{2i} = 0, y_i)1_{y_i = s_{1i}} d(s_{1i}, \sh_{1i}(y^n)) \\
& \ge \sum p(s_{1i}, x\oplus s_{2i} = 0)p( y^{n\backslash i}|s_{1i}, x\oplus s_{2i} = 0, y_i)\left(\min_{x \in \Sh_1} d(s_{1i}, x)\right) \\
& = \sum p(s_{1i})p( x\oplus s_{2i} = 0)\left(\min_{x \in \Sh_1} d(s_{1i}, x)\right) \\
& =\P( X_i\oplus S_{2i} = 0)\E \left(\min_{x \in \Sh_1} d(s_{1i}, x)\right).
\end{align*}

Hence, $\E d(S_{1i}, \Sh_{1i}(Y^n))$ is lower bounded by{\allowdisplaybreaks
\begin{align*}
\E d(S_{1i}, \Sh_{1i}(Y^n)) & \ge \P( X_i\oplus S_{2i} = 0)\E \left(\min_{x \in \Sh_1} d(s_{1i}, x)\right)\\
& \quad + \sum p(s_{1i}, y^{n\backslash i}, x\oplus s_{2i} = 1) d(s_{1i}, \sh_{1i}(y^{n\backslash i}, y_i = e)), \\
& = \P( X_i\oplus S_{2i} = 0)\E \left(\min_{x \in \Sh_1} d(s_{1i}, x)\right)\\
& \quad + \sum p(s_{1i})p( y^{n\backslash i}, x\oplus s_{2i} = 1) d(s_{1i}, \sh_{1i}(y^{n\backslash i}, y_i = e)) \\
& \ge \P( X_i\oplus S_{2i} = 0)\E \left(\min_{x \in \Sh_1} d(s_{1i}, x)\right)\\
& \quad + \sum p( y^{n\backslash i}, x\oplus s_{2i} = 1) \left(\min_{x \in \Sh_1}\sum p(s_{1i}) d(s_{1i}, x)\right) \\
& = \P( X_i\oplus S_{2i} = 0)\E \left(\min_{x \in \Sh_1} d(s_{1i}, x)\right)\\
& \quad + \sum \P(X_i\oplus S_{2i} = 1)\left( \min_{x \in \Sh_1}\E d(S_{1i}, x)\right).
\end{align*}}
We note now that if $y_i = 0$ or $1$, then we can achieve the minimum possible distortion $\min_{\sh_1} d(s_1, \sh_1)$ using only knowledge of $y_i$, since $s_{1i}$ is known in this case.

We therefore obtain
\begin{align*}
D &= \frac{1}{n} \sum_{i=1}^n \E d(S_{1i}, \Sh_{1i}(Y^n)) \\
& \ge \frac{1}{n} \sum_{i=1}^n  \left(\P( X_i\oplus S_{2i} = 0)\E \left(\min_{x \in \Sh_1} d(s_{1i}, x)\right)\right.\\
& \left.\qquad \qquad  + \sum \P(X_i\oplus S_{2i} = 1)\left( \min_{x \in \Sh_1}\E d(S_{1i}, x)\right)\right).
\end{align*}
Defining $Q \sim Unif[1:n]$ independent of other random variables then give us
\begin{align*}
D &\ge \P(X \oplus S_2 = 1)(\min_{\sh_1} \E d(S_1, \sh_1)) \\
& \qquad + \P(X \oplus S_2 = 0)\E (\min_{\sh_1} d(S_1, \sh_1)).
\end{align*}
For the cost constraint, we have
\begin{align*}
\E( \frac{1}{n} \sum_{i=1}^n X_i) & = \E X_Q \\
& = \E X,
\end{align*}
which completes the proof. 
\end{proof}
\section{Gaussian Estimation with helper} \label{sect:5}
In this section, we extend our setup to the Gaussian case, where $S_1 \sim N(0,1)$, $S_2 \sim N(0, P_2)$, $Y = X+S_1 + S_2$, $d(S_1, \Sh_1) = (S_1 - \Sh_1)^2$ and the cost constraint is $\E X^2 \le P$. As we mentioned in the Introduction, the problem in the Gaussian case is equivalent to the problem of Assisted Interference Suppression considered in \cite{Grover--Sahai2011}. We present a new lower bound for this problem that can improve on that derived in \cite{Grover--Sahai2011} and \cite{Grover--Wagner--Sahai2011}. The lower bound derived in \cite{Grover--Wagner--Sahai2011} includes the lower bound derived in \cite{Grover--Sahai2011} as a special case and can be strictly better, but for clarity of presentation, we will first compare our lower bound to that in \cite{Grover--Sahai2011} in subsection \ref{subsect:comgs}, and then compare our bound with the lower bound derived in \cite{Grover--Wagner--Sahai2011} in subsection \ref{subsect:comgws}. We begin with an achievability argument based on Theorem \ref{thm:2}. 

\subsection{Achievable distortion-cost region}
We specialize Theorem~\ref{thm:2} to the Gaussian case by choosing the auxiliary random variables as Gaussian random variables. The achievability scheme presented here is essentially the same as the scheme presented in \cite{Grover--Sahai2011}, but we derive it via different means. 
\begin{theorem} \label{thm:5}
An achievable distortion for the problem of Gaussian estimation with a helper is given by
\begin{align*}
D(P)_{\rm min} \le \inf \; 1 - \frac{\E U^2}{\E Y^2 \E U^2 - (\E UY)^2},
\end{align*}
where
\begin{align*}
\sqrt{P'} &= \frac{-2\alpha \beta \sqrt{P} + \sqrt{4\alpha^2 \beta^2 P + 4(1-\alpha^2)P}}{2}, \\
\E U^2 &= P' + 2\gamma \beta \sqrt{P' P_2} + \gamma^2 P_2, \\
\E(UY) & = \alpha \beta \sqrt{P P'} + P' + \alpha \gamma \sqrt{P P_2} + \gamma \beta \sqrt{P' P_2}+ \beta\sqrt{P' P_2} + \gamma P_2,\\
\E Y^2 & = P + 1 + P_2  + 2 \alpha \sqrt{P P_2} + 2\beta \sqrt{P' P_2}.
\end{align*}
and the infinum is taken over $-1 \le \alpha \le 1$, $-1 \le \beta \le 1$ and $\gamma \in \Rc$ satisfying the constraint
\begin{align*}
(1-\beta^2)P' > \E U^2 - \frac{(\E (UY))^2}{\E Y^2}.
\end{align*}
\end{theorem}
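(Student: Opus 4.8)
The plan is to invoke Theorem~\ref{thm:2} with a jointly Gaussian choice of the auxiliary variable together with affine encoding and optimal (linear) decoding, and then to reduce the mutual-information constraint and the distortion to the second-order statistics displayed in the statement. Since Theorem~\ref{thm:2} is proved for discrete memoryless systems, the first step is to record that the hybrid-coding achievability argument extends to the Gaussian channel $Y = X + S_1 + S_2$ under the expected power constraint $\E X^2 \le P$; this is a standard but slightly technical extension, carried out by quantizing the source, input and reconstruction alphabets and invoking continuity of the mutual-information and distortion functionals under the Gaussian joint law. I expect this continuous-alphabet extension to be the main obstacle, since everything downstream is essentially bookkeeping with Gaussian moments.

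With achievability available over continuous alphabets, I would make the following concrete choice of the quantities in Theorem~\ref{thm:2}. Let $X'$ be a zero-mean Gaussian random variable with $\E (X')^2 = P'$ and correlation $\E(X' S_2) = \beta\sqrt{P' P_2}$ with the interference; set the auxiliary variable $U = X' + \gamma S_2$, let the encoder transmit $X = \alpha\sqrt{P/P_2}\,S_2 + X'$, and let the decoder use the MMSE estimate $\Sh_1(U,Y)$ of $S_1$ from $(U,Y)$, which is linear by joint Gaussianity. Note that $X = U + (\alpha\sqrt{P/P_2} - \gamma) S_2$ is a deterministic function of $(U,S_2)$, as Theorem~\ref{thm:2} requires, with free parameters $\alpha,\beta \in [-1,1]$ and $\gamma \in \Rc$. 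Imposing the power constraint with equality, $\E X^2 = P' + 2\alpha\beta\sqrt{PP'} + \alpha^2 P = P$, yields a quadratic in $\sqrt{P'}$ whose positive root is precisely the stated expression for $\sqrt{P'}$, pinning down $P'$ as a function of $(\alpha,\beta)$.

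It then remains to compute the moments and translate the constraint and objective. Writing $Y = X' + (1 + \alpha\sqrt{P/P_2})S_2 + S_1$ and using $S_1 \perp (X',S_2)$ together with $S_1 \perp S_2$, a direct expansion reproduces the stated $\E U^2$, $\E(UY)$ and $\E Y^2$. For the constraint, joint Gaussianity converts $I(U;Y) > I(U;S_2)$ into the variance inequality $\Var(U|Y) < \Var(U|S_2)$, where $\Var(U|Y) = \E U^2 - (\E UY)^2/\E Y^2$ and $\Var(U|S_2) = (1-\beta^2)P'$, since conditioning on $S_2$ removes only the innovation of $X'$; this is exactly the condition $(1-\beta^2)P' > \E U^2 - (\E UY)^2/\E Y^2$. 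Finally, because $\E(S_1 U) = 0$ and $\E(S_1 Y) = 1$, the MMSE of $S_1$ from $(U,Y)$ equals $\E S_1^2$ minus the $(2,2)$ entry of the inverse covariance of $(U,Y)$, namely $1 - \E U^2/(\E U^2\,\E Y^2 - (\E UY)^2)$; taking the infimum over all admissible $(\alpha,\beta,\gamma)$ delivers the claimed upper bound on $D(P)_{\rm min}$.
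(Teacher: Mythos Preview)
Your proposal is correct and follows essentially the same route as the paper's proof: specialize Theorem~\ref{thm:2} with the jointly Gaussian choice $U = X' + \gamma S_2$, $X = \alpha\sqrt{P/P_2}\,S_2 + X'$, $\E(X'S_2)=\beta\sqrt{P'P_2}$, solve $\E X^2 = P$ for $\sqrt{P'}$, convert the constraint $I(U;Y)>I(U;S_2)$ into the conditional-variance inequality, and evaluate the Gaussian MMSE of $S_1$ given $(U,Y)$ using $\E(S_1U)=0$, $\E(S_1Y)=1$. The only addition relative to the paper is your explicit remark about extending Theorem~\ref{thm:2} to continuous alphabets via quantization, which the paper leaves implicit.
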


We defer the proof of Theorem \ref{thm:5} to Appendix~\ref{appen_g1}.

Similar to the binary setup, we can derive a nontrivial condition between $P$ and the power of the source $S_1$ (normalized to 1), such that zero expected distortion can be achieved.

\begin{proposition} \label{prop5}
For the problem of Gaussian estimation with a helper, $D(C)_{\rm min} = 0$ if
\begin{align*}
P > 1- \frac{1}{P+P_2 +1}.
\end{align*}
\end{proposition}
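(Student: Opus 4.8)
The plan is to follow the template of Proposition~\ref{prop1}: I would exhibit one admissible choice of auxiliary variable and encoding/decoding maps in Theorem~\ref{thm:2} that attains zero distortion, and then reduce the accompanying mutual-information constraint $I(U;Y) > I(U;S_2)$ to the claimed inequality in $P$.

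First I would take the Gaussian counterpart of the binary choice $U = X \oplus S_2$ used in Proposition~\ref{prop1}. Let $W \sim \N(0,P)$ be independent of $S_2$, and specify $p(u\mid s_2)$ through $U = S_2 + W$; define the encoder by $X = f(u,s_2) = u - s_2$, so that $X = W \sim \N(0,P)$ is independent of $S_2$ and $U = X + S_2$. Since $Y = X + S_1 + S_2 = U + S_1$, the decoder can set $\sh_1(u,y) = y - u$, giving $\Sh_1 = S_1$ and hence $\E d(S_1,\Sh_1) = 0$. The cost constraint is met with equality, $\E X^2 = P$, so the choice is admissible as soon as the mutual-information condition of Theorem~\ref{thm:2} holds.

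Next I would check that condition using the Gaussianity of $(U,S_2,Y)$. Because $U \sim \N(0,P+P_2)$ with $U \mid S_2 \sim \N(S_2,P)$, and because $S_1 \sim \N(0,1)$ is independent of $U$ with $Y = U + S_1 \sim \N(0,P+P_2+1)$ and $Y \mid U \sim \N(U,1)$, the relevant quantities are
\begin{align*}
I(U;S_2) &= h(U) - h(U\mid S_2) = \half \log\frac{P+P_2}{P}, \\
I(U;Y) &= h(Y) - h(Y\mid U) = \half \log(P+P_2+1).
\end{align*}
Hence $I(U;Y) > I(U;S_2)$ is equivalent to $P + P_2 + 1 > (P+P_2)/P$; multiplying through by $P$ and dividing by $P+P_2+1$ rearranges this to $P > (P+P_2)/(P+P_2+1) = 1 - 1/(P+P_2+1)$, which is exactly the hypothesis. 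Under this hypothesis Theorem~\ref{thm:2} certifies achievability of the zero-distortion point, so $D(C)_{\rm min} = 0$.

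I expect no serious obstacle, since the argument is a direct specialization of Theorem~\ref{thm:2} that mirrors Proposition~\ref{prop1}. The points meriting the most care are confirming that $U = S_2 + W$ genuinely fits the template $p(u\mid s_2)$ with $X$ a deterministic function of $(u,s_2)$ (it does, via $f(u,s_2) = u - s_2$), and that the strict inequality $P > 1 - 1/(P+P_2+1)$ matches the strict mutual-information requirement of Theorem~\ref{thm:2} --- which it does, since both collapse to the same strict inequality $P(P+P_2+1) > P+P_2$.
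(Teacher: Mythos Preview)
Your proof is correct and follows essentially the same approach as the paper: both choose $U = X + S_2$ with $X \sim \N(0,P)$ independent of $S_2$, reconstruct $\Sh_1 = Y - U = S_1$, and reduce the decoding condition of Theorem~\ref{thm:2} to the stated inequality. The only cosmetic difference is that the paper rewrites $I(U;Y) > I(U;S_2)$ as $h(U|S_2) > h(U|Y)$ and computes $h(U|Y) = h(S_1|Y)$ via the MMSE of $S_1$ given $Y$, whereas you compute $I(U;S_2)$ and $I(U;Y)$ directly; the resulting inequality is identical.
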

\begin{proof}
Proof of this Proposition follows from a choice of $\alpha$ and $\beta$ in Theorem~\ref{thm:5}. However, we give a slightly different proof that gives more intuition to this condition and also has parallels with the problem of dirty paper coding~\cite{Costa1983} (see also \cite[Chapter 7]{El-Gamal--Kim2010}).

Starting from Theorem~\ref{thm:2}, we let $U = X+S_2$, where $X \sim N(0, P)$ independent of $S_2$. Note that the cost constraint is satisfied from this choice of $U$. If the decoder can decode $U$, then the distortion incurred is zero, since $S_1 = Y-U$. It therefore remains to satisfy the decoding condition, which is
\begin{align*}
I(U;Y) > I(U;S_2).
\end{align*}
Since all the random variables are Gaussian, this decoding condition reduces to $h(U|S_2) > h(U|Y)$. 
\begin{align*}
h(U|S_2) &= h(X|S_2) \\
& = \frac{1}{2}\log 2 \pi e P.
\end{align*}
On the other hand,
\begin{align*}
h(U|Y) &= h(-S_1|Y) \\
& \stackrel{(a)}{=} h(-S_1 -\E(-S_1|Y)) \\
& = \frac{1}{2}\log 2\pi e \left( 1- \frac{1}{P_1+P_2 +1}\right),
\end{align*}
where $(a)$follows from the fact that for Gaussian random variables, the difference between $S_1$ and its Minimum Mean Square Error Estimator is independent of the observation, $Y$. 

We therefore derive the condition
\begin{align*}
P > 1 - \frac{1}{P+P_2 +1}.
\end{align*}
\end{proof}

Note that, similar to the binary case, the expected distortion can be made to be zero even if $P_2$ is much larger than $P$. 
\subsection{Lower bounds}
We now turn to lower bounds for the problem of Gaussian Estimation with helper. We first state the following lower bound given in \cite{Grover--Sahai2011} and its improved version given in \cite{Grover--Wagner--Sahai2011}.
\begin{theorem} \label{thm:6}
~\cite{Grover--Sahai2011} A lower bound for the problem of Gaussian estimation with helper is given by
\begin{align*}
D(P)_{\rm min} \ge \left(\left[\sqrt{\frac{P_2}{P_22 + 2\sqrt{PP_2} + P +1}}-\sqrt{P}\right]^+\right)^2,
\end{align*} 
where $[.]^+$ denotes the positive part. 
\end{theorem}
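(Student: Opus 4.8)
The plan is to reconstruct this quoted bound from \cite{Grover--Sahai2011} via a purely geometric (second-moment) argument in $L^2$, paralleling exactly the role that Claim~\ref{clm2} played in the binary setting. First I would record the Gaussian analogue of Claim~\ref{clm2}: writing $Z = X + S_2$ for the total interference and letting $\Sh_1 = \Sh_1(Y)$ be any reconstruction achieving distortion $D$, the induced estimate $\Zh := Y - \Sh_1$ of $Z$ satisfies $Z - \Zh = (X+S_2) - Y + \Sh_1 = \Sh_1 - S_1$, so that $\E(Z-\Zh)^2 = \E(S_1 - \Sh_1)^2 = D$ exactly. Thus estimating $S_1$ and estimating $X+S_2$ from $Y$ incur identical mean-squared error, and it suffices to lower bound the latter.

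Next I would invoke the triangle inequality in $L^2$. Since $\Zh$ is a function of $Y$, we have $\|S_2 - \Zh\|_2 \ge \sqrt{\mathrm{mmse}(S_2 \mid Y)}$, where $\mathrm{mmse}(S_2\mid Y) = \min_g \E(S_2 - g(Y))^2$; on the other hand $\|S_2 - \Zh\|_2 \le \|S_2 - Z\|_2 + \|Z - \Zh\|_2 = \|X\|_2 + \sqrt D \le \sqrt P + \sqrt D$, using the cost constraint $\E X^2 \le P$. Combining the two gives $\sqrt D \ge \sqrt{\mathrm{mmse}(S_2\mid Y)} - \sqrt P$, which after taking positive parts and squaring already has the shape of the claimed bound.

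It then remains to lower bound $\mathrm{mmse}(S_2\mid Y)$ and to control $\E Y^2$; the former is the step I expect to be the main obstacle, since the helper may use an arbitrary nonlinear map $X = f(S_2)$, and a nonlinear estimator could a priori beat the linear MMSE (indeed a Cauchy--Schwarz argument only lower bounds the \emph{linear} MMSE, which is the wrong direction). The clean route is the conditional entropy-power bound $\mathrm{mmse}(S_2\mid Y) \ge \tfrac{1}{2\pi e}e^{2h(S_2\mid Y)}$, obtained from $\mathrm{Var}(S_2\mid Y=y)\ge \tfrac{1}{2\pi e}e^{2h(S_2\mid Y=y)}$ together with Jensen's inequality. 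Here the Gaussianity of $S_1$ is essential: because $Y = (X+S_2) + S_1$ with $S_1\sim\Nc(0,1)$ independent of $S_2$, we have $h(Y\mid S_2) = h(S_1) = \tfrac12\log 2\pi e$, whence $I(S_2;Y) = h(Y) - \tfrac12\log 2\pi e \le \tfrac12\log \E Y^2$ by the maximum-entropy property of the Gaussian. This yields $h(S_2\mid Y) = \tfrac12\log(2\pi e P_2) - I(S_2;Y) \ge \tfrac12\log\frac{2\pi e P_2}{\E Y^2}$ and therefore $\mathrm{mmse}(S_2\mid Y)\ge P_2/\E Y^2$, with equality in the jointly Gaussian (linear-$f$) case as a useful sanity check on tightness. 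Finally, $\E Y^2 = \E X^2 + 1 + P_2 + 2\E(XS_2) \le P + 1 + P_2 + 2\sqrt{PP_2} = (\sqrt P + \sqrt{P_2})^2 + 1$ by Cauchy--Schwarz on $\E(XS_2)$, so that $\mathrm{mmse}(S_2\mid Y) \ge P_2/\big((\sqrt P+\sqrt{P_2})^2+1\big)$; substituting into $\sqrt D \ge \sqrt{\mathrm{mmse}(S_2\mid Y)} - \sqrt P$ and squaring the positive part gives the stated bound (with the denominator read as $P_2 + 2\sqrt{PP_2} + P + 1$).
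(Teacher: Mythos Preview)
The paper does not supply its own proof of Theorem~\ref{thm:6}: the result is simply quoted from \cite{Grover--Sahai2011}. So there is nothing in the paper to compare against directly. That said, your argument is correct, and it is precisely the Gaussian analogue of the paper's own proof of Theorem~\ref{thm:3} for the binary case: your observation that estimating $S_1$ and estimating $X+S_2$ incur identical error is exactly Claim~\ref{clm2}, your $L^2$ triangle inequality plays the role of inequality~\eqref{ineq:5}, and your entropy-power step $\mathrm{mmse}(S_2\mid Y)\ge P_2/\E Y^2$ is the Gaussian version of the rate--distortion bound~\eqref{ineq:4}. So while the paper does not prove Theorem~\ref{thm:6}, you have reconstructed the same argument the paper uses in the parallel binary result.

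One point to tighten in the write-up: you phrase everything single-letter, but the encoder map is $X^n=f(S_2^n)$, so the identity $h(Y\mid S_2)=h(S_1)$ that you rely on is only valid at the vector level as $h(Y^n\mid S_2^n)=h(S_1^n)$; at a fixed coordinate $X_i$ need not be a function of $S_{2i}$ alone. The fix is routine: run the triangle inequality on $\|S_2^n-\Zh^n\|_{L^2}$, use the vector Shannon lower bound $\tfrac{1}{n}\,\mathrm{mmse}(S_2^n\mid Y^n)\ge \tfrac{1}{2\pi e}\exp\!\big(\tfrac{2}{n}h(S_2^n\mid Y^n)\big)$, bound $h(Y^n)\le \tfrac{n}{2}\log\big(2\pi e\,\tfrac{1}{n}\E\|Y^n\|^2\big)$, and apply Cauchy--Schwarz to $\E\langle X^n,S_2^n\rangle$. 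All the constants come out the same; the paper's proof of Theorem~\ref{thm:3} is already written at the $n$-letter level in exactly this way.
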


As shown in \cite{Grover--Wagner--Sahai2011}, the lower bound given in Theorem~\ref{thm:6} can be improved to the following.

\begin{theorem} \label{thm:GWS}
~\cite{Grover--Wagner--Sahai2011} A lower bound for the problem of Gaussian estimation with helper is given by
\begin{align*}
D(P)_{\rm min} \ge \inf_{\sigma_{XS_2}} \sup_{\gamma >0} \frac{1}{\gamma^2} \left(\left[ \sqrt{\frac{P_2}{1+ P_2 + P + 2\sigma_{XS_2}}} - \sqrt{(1-\gamma)^2P_2 + \gamma^2 P -2\gamma(1-\gamma)\sigma_{XS_2}}\right]^+\right)^2,
\end{align*} 
where $[.]^+$ denotes the positive part and $\sigma_{\rm XS_2} \in [-\sqrt{P_2}\sqrt{P}, \sqrt{P_2}\sqrt{P}]$. 
\end{theorem}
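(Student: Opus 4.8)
The plan is to combine the Gaussian analogue of Claim~\ref{clm2} with a Hilbert-space (Minkowski) argument and a rate--distortion converse for estimating the interference $S_2$, introducing a free scaling parameter $\gamma$ to obtain the extra degree of freedom that distinguishes this bound from Theorem~\ref{thm:6}. First I would record the Gaussian version of Claim~\ref{clm2}: since $Y = X + S_1 + S_2$, writing $W := X + S_2$ and $\Wh := Y - \Sh_1$ gives $W - \Wh = \Sh_1 - S_1$, so reconstructing $S_1$ with mean-square error $D$ is exactly equivalent to reconstructing $W$ with the same error, $\frac1n\E\|W^n - \Wh^n\|^2 = \frac1n\E\|S_1^n - \Sh_1^n\|^2 = D$. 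The decoder's estimate of $W$ thus doubles as an estimate of $S_2$ up to the additive term $X$, and this is the coupling that ties the distortion to the helper's power.

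Next --- the heart of the argument --- I would lower bound how well $S_2$ can be estimated from $Y$. For any $Y^n$-measurable estimator $g(Y^n)$ of $S_2^n$ achieving per-symbol error $d_2$, the rate--distortion converse for the Gaussian source $S_2 \sim \Norm(0, P_2)$ gives $I(S_2^n; Y^n) \ge \frac n2 \log (P_2/d_2)$. On the other hand, $X^n$ is a (possibly randomized) function of $S_2^n$, so $h(Y^n \mid S_2^n) \ge h(S_1^n) = \frac n2 \log 2\pi e$, while $h(Y^n) \le \frac n2 \log(2\pi e\,\E Y^2)$ by the maximum-entropy bound; hence $I(S_2^n; Y^n) \le \frac n2 \log \E Y^2$. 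Combining the two inequalities yields the universal estimate $d_2 \ge P_2 / \E Y^2$, i.e. every reconstruction of $S_2$ from $Y$ has $L^2$ error at least $\sqrt{P_2/\E Y^2}$, where $\E Y^2 = 1 + P + P_2 + 2\sigma_{XS_2}$.

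Finally I would glue these together using the algebraic identity $S_2 = \gamma W - (\gamma X - (1-\gamma)S_2)$. Minkowski's inequality in $L^2(\Omega;\mathbb R^n)$ gives $\|S_2^n - \gamma \Wh^n\| \le \gamma \|W^n - \Wh^n\| + \|\gamma X^n - (1-\gamma)S_2^n\|$. Since $\gamma\Wh^n$ is a function of $Y^n$, the left side is at least $\sqrt{n\,P_2/\E Y^2}$ by the previous step, $\|W^n - \Wh^n\| = \sqrt{nD}$, and $\frac1n\E\|\gamma X^n - (1-\gamma)S_2^n\|^2 \to \gamma^2 P + (1-\gamma)^2 P_2 - 2\gamma(1-\gamma)\sigma_{XS_2}$. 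Rearranging, taking the positive part (legitimate since $\sqrt D \ge 0$), squaring and dividing by $\gamma^2$ reproduces the claimed expression for each fixed $\gamma$; optimizing over $\gamma$ produces the $\sup_\gamma$, and because the helper's strategy fixes $\sigma_{XS_2}$ (which also enters $\E Y^2$) the bound must hold for the distortion-minimizing helper, giving the outer $\inf_{\sigma_{XS_2}}$ over $[-\sqrt{PP_2}, \sqrt{PP_2}]$. Setting $\gamma = 1$ collapses the second term to $\sqrt P$ and recovers Theorem~\ref{thm:6}, which is a useful sanity check.

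I expect the main obstacle to be the single-letterization and the limiting arguments: converting the per-symbol power and correlation constraints into the scalars $P$ and $\sigma_{XS_2}$, controlling $\E Y^2$ (itself a block average coupled to $\sigma_{XS_2}$), and ensuring the rate--distortion converse is applied in the admissible regime $d_2 \le P_2$. Handling randomized encoders is routine once one observes that the inequality $h(Y^n \mid S_2^n) \ge h(S_1^n)$ survives conditioning on the encoder's randomness, which is precisely the direction needed for the entropy bound.
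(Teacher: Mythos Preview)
The paper does not prove Theorem~\ref{thm:GWS}; it merely quotes the result from \cite{Grover--Wagner--Sahai2011} (alongside the weaker Theorem~\ref{thm:6} from \cite{Grover--Sahai2011}) so as to have a benchmark against which to compare the new divergence-based bound of Theorem~\ref{thm:7}. There is therefore no in-paper proof to compare against.

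That said, your outline is the correct argument and is precisely the Gaussian analogue of what the paper does prove in the binary setting (Theorem~\ref{thm:3}): the identification $W=X+S_2$, $\Wh=Y-\Sh_1$ is the quadratic version of Claim~\ref{clm2}; your triangle (Minkowski) step in $L^2$ plays the role of inequality~\eqref{ineq:5}; and the rate--distortion lower bound $d_2\ge P_2/\E Y^2$ on estimating $S_2$ from $Y$ mirrors inequality~\eqref{ineq:4}. The additional free parameter $\gamma$, inserted via the identity $S_2=\gamma W-(\gamma X-(1-\gamma)S_2)$, is exactly the refinement that \cite{Grover--Wagner--Sahai2011} adds to \cite{Grover--Sahai2011}, and your sanity check at $\gamma=1$ (followed by the infimum over $\sigma_{XS_2}$, attained at $\sigma_{XS_2}=\sqrt{PP_2}$) indeed recovers Theorem~\ref{thm:6}.

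Two small corrections. First, your ``$\to$'' in $\tfrac1n\E\|\gamma X^n-(1-\gamma)S_2^n\|^2 \to \gamma^2 P+(1-\gamma)^2P_2-2\gamma(1-\gamma)\sigma_{XS_2}$ should be ``$\le$'': the block-averaged power $\tfrac1n\E\|X^n\|^2$ is only $\le P$, not equal to it, and replacing it by $P$ loosens the bound in the correct direction (it enlarges the subtracted square root). The same remark applies to $\E Y^2$, which you should read as $\tfrac1n\E\|Y^n\|^2 \le 1+P+P_2+2\sigma_{XS_2}$; this upper bound shrinks the first square root, again in the admissible direction. Second, your worry about randomized encoders is already handled: $h(Y^n\mid S_2^n)\ge h(Y^n\mid S_2^n,X^n)=h(S_1^n)$ by ``conditioning reduces entropy,'' with equality for deterministic encoders, and this is the inequality you need. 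With these adjustments the single-letterization is routine: set $\sigma_{XS_2}:=\tfrac1n\E\langle X^n,S_2^n\rangle\in[-\sqrt{PP_2},\sqrt{PP_2}]$ (Cauchy--Schwarz plus the power constraint), bound each block quantity as above, take $\sup_\gamma$ for the resulting scalar inequality, and finally infimize over the unknown $\sigma_{XS_2}$.
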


From the lower bound in Theorem~\ref{thm:6} and Proposition~\ref{prop5}, we can show that as the power of the interfering signal goes to infinity, $P_2 \to \infty$, zero expected distortion is achievable if and only if $P \ge1$.

\begin{proposition} \label{prop_hi}
$\lim_{P_2\rightarrow\infty}D(P)_{\rm min} = 0$ if and only if $P\geq 1$. 
\end{proposition}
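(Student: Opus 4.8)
The plan is to establish the two implications separately, using Proposition~\ref{prop5} for the ``if'' direction and the lower bound of Theorem~\ref{thm:6} for the ``only if'' direction; in both cases the crux is the behaviour of the relevant expression as $P_2 \to \infty$.

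For sufficiency, suppose $P \ge 1$. I would invoke Proposition~\ref{prop5}, which guarantees $D(P)_{\rm min} = 0$ whenever $P > 1 - \frac{1}{P + P_2 + 1}$. Since $\frac{1}{P+P_2+1} > 0$ for every finite $P_2$, the right-hand side is strictly less than $1 \le P$, so the condition holds for \emph{every} $P_2$. Thus $D(P)_{\rm min} = 0$ for all $P_2$ when $P \ge 1$, and a fortiori $\lim_{P_2 \to \infty} D(P)_{\rm min} = 0$. No limiting argument is actually needed here, since the distortion is identically zero along the whole family.

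For necessity I would argue by contraposition: assuming $P < 1$, I show the limit is bounded away from zero. Theorem~\ref{thm:6} gives, for each $P_2$,
\begin{align*}
D(P)_{\rm min} \ge \left(\left[\sqrt{\frac{P_2}{P_2 + 2\sqrt{P P_2} + P + 1}} - \sqrt{P}\right]^+\right)^2.
\end{align*}
Dividing numerator and denominator inside the square root by $P_2$ shows that the first term tends to $1$ as $P_2 \to \infty$. Hence the bracketed quantity tends to $1 - \sqrt{P}$, which is strictly positive because $P < 1$; the positive-part operation is therefore inactive in the limit, and the entire right-hand side converges to $(1 - \sqrt{P})^2 > 0$. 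Since the inequality holds for every $P_2$, taking the limit inferior yields $\liminf_{P_2 \to \infty} D(P)_{\rm min} \ge (1 - \sqrt{P})^2 > 0$, so the limit cannot be zero. This is exactly the contrapositive of the ``only if'' claim.

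The main obstacle is bookkeeping rather than conceptual: I must verify that the positive-part operator may be dropped in the limit, which relies on the strict inequality $P < 1$ to ensure $1 - \sqrt{P} > 0$, and that passing to the limit through the per-$P_2$ inequality is legitimate — here it suffices to use $\liminf$, so I do not need the limit of $D(P)_{\rm min}$ to exist a priori. A secondary point is confirming the form of the denominator in Theorem~\ref{thm:6}, whose leading term is $P_2$ (the ``$P_22$'' in the statement being a typo for $P_2$); otherwise, with a leading $P_2^2$, the first term would vanish in the limit and the bound would degenerate to the trivial $0$, defeating the argument.
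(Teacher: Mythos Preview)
Your proposal is correct and follows essentially the same route as the paper: Proposition~\ref{prop5} for sufficiency and the limit of the Theorem~\ref{thm:6} lower bound for necessity. If anything, your treatment of the boundary case $P=1$ is cleaner than the paper's, since you observe that the Proposition~\ref{prop5} condition holds for \emph{every} finite $P_2$ once $P\ge 1$, whereas the paper somewhat loosely says the condition ``reduces to $P>1$ as $P_2\to\infty$''; your reading of ``$P_22$'' as a typo for $P_2$ is also correct.
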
 
\begin{proof}
From Proposition~\ref{prop5}, the sufficient condition for zero expected distortion reduces to $P > 1$ as $P_2 \to \infty$. From Theorem~\ref{thm:6}, we can show that this is also necessary. From Theorem~\ref{thm:6}, $\lim_{P_2 \to \infty} D(P)_{\rm min} \ge \left(\left[1-\sqrt{P}\right]^+\right)^2$, which is zero if and only if $P \ge 1$.
\end{proof}

We now turn to our lower bound. For clarity, we first present a proof of a special case of our lower bound before turning to the more general expression. 

\begin{proposition} \label{prop6}
A lower bound for the problem of Gaussian estimation with a helper is given by
\begin{align*}
D(P)_{\rm min} &\ge  \frac{1}{(\gamma -1)}\left[\ln \left(\frac{1+\gamma P_2}{1+  P_2}\right) + \frac{2 \sqrt{P}}{\sqrt{P_2} (1+ \gamma P_2)} - \frac{2 \sqrt{P}}{\sqrt{P_2} (1+  P_2)} -  \gamma P\right],
\end{align*} 
for \textit{any} $\gamma \ge 1$.
\end{proposition}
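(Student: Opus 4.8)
The plan is to reduce the problem to lower-bounding the minimum mean-square error of reconstructing the \emph{effective interference} $Z := X + S_2$ from $Y = Z + S_1$, and then to run an I-MMSE / mismatched-estimation argument along an artificial signal-to-noise ratio (SNR) sweep in which $Z$ plays the role of the signal and $S_1 \sim \N(0,1)$ plays the role of Gaussian noise. First I would record the Gaussian analogue of Claim~\ref{clm2}: since $S_1 = Y - Z$ deterministically, the conditional means satisfy $\E[S_1\mid Y] = Y - \E[Z\mid Y]$, whence $\mathrm{mmse}(S_1\mid Y) = \mathrm{mmse}(Z\mid Y)$. As the decoder's optimal reconstruction is the conditional mean, any achievable $D$ obeys $D \ge \mathrm{mmse}(Z\mid Y)$. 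I would then embed $Y$ into the family $R_\beta := \sqrt{\beta}\,Z + S_1$, with $R_1 = Y$, and set $\mathrm{mmse}_P(\beta) := \mathrm{mmse}(Z\mid R_\beta)$. Monotonicity of the MMSE in SNR yields $D \ge \mathrm{mmse}_P(1) \ge \frac{1}{\gamma-1}\int_1^\gamma \mathrm{mmse}_P(\beta)\,d\beta$ for every $\gamma \ge 1$, which already explains the prefactor $\tfrac{1}{\gamma-1}$.

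Next I would invoke Verd\'u's relation \cite{Verdu2010} between relative entropy and mismatched estimation, taking the true input law $P_Z = \mathrm{law}(X+S_2)$ and the Gaussian reference $Q_Z = \N(0,P_2)$ (the marginal of $S_2$, responsible for the $P_2$-terms in the claim). In the form I need it reads $\int_1^\gamma \mathrm{mmse}_P(\beta)\,d\beta = \int_1^\gamma \mathcal{E}_Q(\beta)\,d\beta - 2\big(D(P_{R_\gamma}\|Q_{R_\gamma}) - D(P_{R_1}\|Q_{R_1})\big)$, where $\mathcal{E}_Q(\beta)$ is the error of the $Q$-optimal (hence \emph{linear}) estimator $\hat{Z}_Q(r) = \tfrac{\sqrt{\beta}P_2}{\beta P_2+1}r$ evaluated on the true data. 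Because $Q$ is Gaussian this mismatched error is explicit: a short computation gives $\mathcal{E}_Q(\beta) = \frac{P_2}{\beta P_2+1} + \frac{\delta}{(\beta P_2+1)^2}$ with $\delta := \E X^2 + 2\,\E[XS_2]$, so that $\int_1^\gamma \mathcal{E}_Q(\beta)\,d\beta = \ln\!\frac{1+\gamma P_2}{1+P_2} + \frac{\delta}{P_2}\big(\tfrac{1}{1+P_2} - \tfrac{1}{1+\gamma P_2}\big)$. Worst-casing the correlation via Cauchy--Schwarz, $\delta \ge \E X^2 - 2\sqrt{\E X^2}\sqrt{P_2} \ge -2\sqrt{P\,P_2}$, reproduces exactly the logarithmic and cross terms $\ln\frac{1+\gamma P_2}{1+P_2} + \frac{2\sqrt P}{\sqrt{P_2}}\big(\tfrac{1}{1+\gamma P_2}-\tfrac{1}{1+P_2}\big)$ appearing in the claim.

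It remains to dispose of the two relative-entropy terms, and this is the step I expect to be the crux. Dropping $D(P_{R_1}\|Q_{R_1}) \ge 0$ is free and helps; the real work is the \emph{upper} bound $D(P_{R_\gamma}\|Q_{R_\gamma}) \le \tfrac{\gamma P}{2}$, which is what turns into the $-\gamma P$ term and is where the power constraint $\E X^2 \le P$ finally enters. I would obtain it without the entropy-power inequality (as the paper advertises) by a joint-convexity argument: conditioning on $S_2$ (equivalently on $(S_2,X)$ if $X$ is randomized), the laws $P_{R_\gamma}$ and $Q_{R_\gamma}$ are mixtures, over the common mixing measure $P_{S_2}$, of the Gaussians $\N(\sqrt{\gamma}(S_2+X),1)$ and $\N(\sqrt{\gamma}S_2,1)$ respectively; joint convexity of relative entropy then gives $D(P_{R_\gamma}\|Q_{R_\gamma}) \le \E_{S_2}\big[D(\N(\sqrt{\gamma}(S_2+X),1)\,\|\,\N(\sqrt{\gamma}S_2,1))\big] = \tfrac{\gamma}{2}\,\E X^2 \le \tfrac{\gamma P}{2}$.

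Finally I would assemble the pieces: combining the three bounds gives $\int_1^\gamma \mathrm{mmse}_P(\beta)\,d\beta \ge \ln\frac{1+\gamma P_2}{1+P_2} + \frac{2\sqrt P}{\sqrt{P_2}}\big(\tfrac{1}{1+\gamma P_2}-\tfrac{1}{1+P_2}\big) - \gamma P$, and dividing by $\gamma-1$ delivers the stated inequality for every $\gamma \ge 1$. Two bookkeeping points require care but no ingenuity: carrying the argument at block length $n$ (the vector I-MMSE and Verd\'u identities for $R_\beta^n = \sqrt{\beta}Z^n + S_1^n$, the per-letter cost constraint $\tfrac1n\sum_i \E X_i^2 \le P$, and a uniform time-sharing variable to single-letterize), and the mild normalization $\E X = 0$ so the reference variances are exactly $1+\beta P_2$; any nonzero mean only helps and can be absorbed into the reference.
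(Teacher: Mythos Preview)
Your proposal is correct and follows essentially the same route as the paper's proof: both identify the effective interference $Z=X+S_2$ as the signal in an SNR sweep with $S_1$ as Gaussian noise, apply Verd\'u's divergence--mismatched-estimation identity with the reference law $Q$ corresponding to $X\equiv 0$, lower-bound the mismatched (linear) error via Cauchy--Schwarz on $\E\langle X^n,S_2^n\rangle$, upper-bound the output divergence by conditioning on $S_2^n$ to get $\tfrac{n\gamma P}{2}$, and finish with monotonicity of the MMSE to produce the $\tfrac{1}{\gamma-1}$ prefactor. The only cosmetic differences are your parameterization $R_\beta=\sqrt{\beta}Z+S_1$ versus the paper's $S_1\sim\N(0,1/\gamma)$, and your single-letter sketch with deferred $n$-letter bookkeeping versus the paper's block-length-$n$ treatment throughout; in particular no time-sharing single-letterization is needed (the bounds scale with $n$ directly), and the normalization $\E X=0$ you mention is unnecessary since the reference $Q$ already has $Z=S_2$.
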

It should be noted that while finding the optimal value of $\gamma$ that \textit{maximizes} this lower bound is a hard optimization problem, \textit{any} $\gamma \ge 1$ constitutes a lower bound for $D(P)_{\rm min}$. Hence, Proposition~\ref{prop6} in fact gives a \textit{family} of lower bounds. 

\begin{proof}
This proof hinges on an application of a relationship between mismatched estimation and relative entropy given in \cite[Equality (14)]{Verdu2010}. The main idea behind the proof lies in considering a decoder that performs the estimation (and reconstruction) using a wrong (or mismatched) distribution for $\P_{S_1^n|Y^n}$. In particular, we will consider a \textit{mismatched} decoder that attempts to estimate $S_1^n$ assuming that $X^n \equiv 0$. That is, the decoder assumes that the encoder does not do anything to help the decoder. The estimation error incurred by the mismatched decoder, $MSE_Q$, is clearly larger than that incurred by an optimum decoder that uses the correct (true) distribution, $D(P)_{\rm min}$. We then rely on results in \cite{Verdu2010} to lower bound the difference between $D(P)_{\rm min}$ and $MSE_Q$, thereby giving us a lower bound on $D(P)_{\rm min}$.  

To derive our bound, we first consider a more general source $S_1 \sim N(0, 1/\gamma)$ and let $S_2 \sim N(0, P_2)$ as before. The value of $\gamma$ that we are concerned about is $\gamma =1$, which will appear later in the proof.  

Define $MSE_Q (\gamma)$ as
\begin{align*}
MSE_Q (\gamma) : = \E \bigg|\bigg|S_1^n - \frac{\frac{1}{\gamma}}{\frac{1}{\gamma} + P_2}(X^n + S_1^n + S_2^n)\bigg|\bigg|^2. 
\end{align*}
Let $\alpha = \frac{\frac{1}{\gamma}}{\frac{1}{\gamma} + P_2}$ and note that $\Sh_{1} = \alpha Y$ is the Minimum Mean Square Error (MMSE) estimate of $S_1$ that the decoder would employ if it assumes that $X^n \equiv 0$. We first give a lower bound for $MSE_Q(\gamma)$. Note that under the true distribution, $\E||X^n||^2 \le nP$.
\begin{align*}
\E||S_1^n - \alpha(X^n + S_1^n + S_2^n)||^2 &= \E||S_1^n - \alpha(S_1^n + S_2^n) ||^2 - 2 \alpha \E <S_1^n - \alpha(S_1^n+S_2^n), X^n> + \alpha^2 \E||X^n ||^2 \\
& = n \alpha P_2 + 2 \alpha^2 \E <S_2^n, X^n> + \alpha^2 \E ||X^n||^2 \\
& \stackrel{(a)}{\ge} n \alpha P_2 - 2 \alpha^2 \sqrt {\E ||S^n_2||^2 \E ||X^n||^2} + \alpha^2 \E ||X^n||^2\\
& \ge n \alpha P_2 - 2 \alpha^2 \sqrt {n^2 P_2 P} \\
& = n \alpha P_2 - 2 n\alpha^2 \sqrt { P_2 P},
\end{align*}
where $(a)$ follows by Cauchy-Schwartz inequality.

Now, let $\St^n = S_2^n + X^n$ and let $P_{\St^n}$ denote the distribution of $S_2^n + X^n$ under the optimum encoding scheme. Let $Q_{\St^n}$ denote the corresponding distribution under the encoding scheme of $X^n \equiv 0$. Note now that
\begin{align}
MSE_Q (\gamma) & = \E ||S_1^n - \alpha(S_1^n + \St^n)||^2 \nonumber\\
& \stackrel{(a)}{=} \E || Y^n - \St^n - \E_Q(S_1^n|Y^n)||^2 \nonumber\\
& \stackrel{(b)}{=} \E || Y^n - \St^n - (Y^n- \E_Q(\St^n|Y^n)||^2 \nonumber\\
& = \E ||\St^n - \E_Q(\St^n|Y^n)||^2 \nonumber\\
& := MSE_{Q, \St^n} (\gamma). \label{eqm1}
\end{align}
$(a)$ follows from the fact that $\alpha(S_1^n + \St^n)$ is the optimum MMSE estimator for $S_1^n$ under $Q$; that is, under the assumption of $X^n \equiv 0$. $(b)$ follows from $S_1^n = Y^n - \St^n$.

Next, note that this analysis also holds when the decoder knows that $\St^n$ is distributed according to $P_{\St^n}$. That is, we have 
\begin{align}
MMSE(\gamma)&:= \E||S_1^n - \E_P(S_1^n|Y^n)||^2 \nonumber\\
& =  \E ||\St^n - \E_P(\St^n|Y^n)||^2 \nonumber\\
& := MMSE_{P, \St^n}(\gamma). \label{eqm2}
\end{align}
Note that $nD(P)_{\rm min} = MMSE(1)$.

We now relate $MSE_Q(\gamma)$ to the optimum $MMSE$ of $S_1^n$ given that an optimum estimator and coding scheme were used. From \eqref{eqm1} and \eqref{eqm2}, we see that it suffices to consider $MSE_{Q, \St^n} (\gamma)$ and $MMSE_{P, \St^n}(\gamma)$. Using the relation between mismatched estimation and relative entropy given in \cite[Equality 14]{Verdu2010}, we have
\begin{align}
D(P_{Y^n}^{(\gamma_0)}||Q_{Y^n}^{(\gamma_0)}) = \frac{1}{2}\int_{0}^{\gamma_0} MSE_{Q, \St^n} (\gamma) - MMSE_{P, \St^n} (\gamma) d\gamma \label{verdu1}
\end{align}
Here, $P_{Y^n}$ represents the distribution of $Y^n$ induced by $P_{\St^n}$. Similarly, $Q_{Y^n}$ represents the distribution of $Y^n$ induced by $Q_{\St^n}$.

We first give a bound on $D(P_{Y^n}^{(\gamma)}||Q_{Y^n}^{(\gamma)})$. 
\begin{align*}
D(P_{Y^n}^{(\gamma)}||Q_{Y^n}^{(\gamma)}) &\le D(P_{Y^n, S_2^n}^{(\gamma)}||Q_{Y^n, S_2^n}^{(\gamma)}) \\
& = \E_{S_2^n} D(P_{Y^n| S_2^n}^{(\gamma)}||Q_{Y^n| S_2^n}^{(\gamma)}).
\end{align*}
Note that since $X^n$ is a function of $S^n_2$, we have the following.
\begin{align*}
&\mbox{Under } P_{Y^n| S_2^n}^{(\gamma)}: Y^n|S_2^n \sim N(S_2^n + X^n, \frac{1}{\gamma}I_{n\times n}), \\
&\mbox{Under } Q_{Y^n| S_2^n}^{(\gamma)}: Y^n|S_2^n \sim N(S_2^n, \frac{1}{\gamma}I_{n\times n}).
\end{align*}

Hence, $D(P_{Y^n| S_2^n}^{(\gamma)}||Q_{Y^n| S_2^n}^{(\gamma)})$ is given by the divergence between two multivariate Gaussian random variables with the same covariance matrix. In our case, the divergence is given by
\begin{align*}
D(P_{Y^n| S_2^n}^{(\gamma)}||Q_{Y^n| S_2^n}^{(\gamma)}) = \frac{1}{2} \gamma ||X^n||^2.
\end{align*}

Hence, 
\begin{align*}
D(P_{Y^n}^{(\gamma)}||Q_{Y^n}^{(\gamma)}) & \le \E_{S_2^n} D(P_{Y^n| S_2^n}^{(\gamma)}||Q_{Y^n| S_2^n}^{(\gamma)}) \\
& = \E_{S_2^n}( \frac{\gamma}{2}  ||X^n||^2)\\
& \le \frac{n\gamma P}{2} 
\end{align*}
From \eqref{verdu1}, we have
\begin{align*}
\int_{\gamma_0}^{\gamma_1} MSE_{Q, \St^n} (\gamma) - MMSE_{P, \St^n} (\gamma) d\gamma = 2D(P_{Y^n}^{(\gamma_1)}||Q_{Y^n}^{(\gamma_1)}) - 2D(P_{Y^n}^{(\gamma_0)}||Q_{Y^n}^{(\gamma_0)})
\end{align*}
for $\gamma_1 \ge \gamma_0$.
Hence,
\begin{align}
\int_{\gamma_0}^{\gamma_1} MMSE_{P, \St^n} (\gamma) d\gamma &\ge \int_{\gamma_0}^{\gamma_1} MSE_{Q, \St^n} (\gamma) d\gamma - 2D(P_{Y^n}^{(\gamma_1)}||Q_{Y^n}^{(\gamma_1)})  \nonumber \\
&\ge  \int_{\gamma_0}^{\gamma_1} MSE_{Q, \St^n} (\gamma) d\gamma - n\gamma_1P \label{eqn:1}
\end{align}
Since $MMSE_{P, \St^n} (\gamma) $ is a non-increasing function in $\gamma$, we have $\int_{\gamma_0}^{\gamma_1} MMSE_{P, \St^n} (\gamma) d\gamma \le (\gamma_1 - \gamma_0)MMSE_{P, \St^n} (\gamma_0) = (\gamma_1 - \gamma_0)MMSE (\gamma_0)$. Next, we note that $\alpha = 1/(1+\gamma P_2)$, so we can write
\begin{align*}
MSE_{Q, \St^n} (\gamma) &= MSE_{Q} (\gamma) \\
& \ge \frac{n P_2}{1+ \gamma P_2} - \frac{2 n \sqrt { P_2 P}}{(1+\gamma P_2)^2}. 
\end{align*}

From \eqref{eqn:1} and the arguments above, we have
\begin{align*}
 (\gamma_1 - \gamma_0)MMSE (\gamma_0) &\ge  \int_{\gamma_0}^{\gamma_1} \frac{n P_2}{1+ \gamma P_2} - \frac{2 n \sqrt { P_2 P}}{(1+\gamma P_2)^2} d\gamma - n\gamma_1P \\
 & = n \ln (\frac{1+\gamma_1 P_2}{1+ \gamma_0 P_2}) + \frac{2n \sqrt{P}}{\sqrt{P_2} (1+ \gamma_1 P_2)} - \frac{2n \sqrt{P}}{\sqrt{P_2} (1+ \gamma_0 P_2)} -  n\gamma_1P. 
\end{align*}
Finally, using the relationship that $D(P)_{\rm min} = MMSE(1)/n$, $\gamma_0=1$ and the above completes the proof of the lower bound. 
\end{proof}
In Proposition~\ref{prop6}, we related the minimum mean square error that a decoder incurs when it uses the true distribution to the mean square error incurred by a decoder if it uses the possibly erroneous distribution of $X^n \equiv 0$. Clearly, we do not need to choose $X^n \equiv 0$ as the erroneous distribution, but we can also choose other distributions. This is the main idea behind our generalization of Proposition~\ref{prop6}, which we state in Theorem~\ref{thm:7}. 
\begin{theorem} \label{thm:7}
A lower bound for the problem of Gaussian estimation with helper is given by
\begin{align*}
(\gamma - 1)D(P)_{\rm min} &\ge \log(\frac{1+ \gamma P_I}{1+  P_I}) + \frac{1}{(1+ \gamma P_I)} - \frac{1}{(1+  P_I)} \\
& \quad -\frac{P_2}{P_I(1+ \gamma P_I)}+\frac{P_2}{P_I(1+  P_I)} - \frac{c^2\gamma}{1+\gamma rP}P_2 \\
& \quad -\frac{1}{1+ \gamma_1 rP} +1 + \log(\frac{1}{1+ \gamma_1 rP}) \\
& \quad + a x^{*2} - b x^*,
\end{align*}
where $a = \frac{1}{P_I (1+  P_I)} - \frac{1}{P_I (1+ \gamma P_I)} - \frac{\gamma}{1+\gamma rP}$, $b = |2(\frac{1}{P_I (1+  P_I)} - \frac{1}{P_I (1+ \gamma P_I)} +  \frac{c\gamma}{1+\gamma rP})|\sqrt{P_2}$ and
\begin{align*}
x^* = \left\{\begin{array}{cl} \sqrt{P} & \mbox{if } a \le 0 \\
b/2a & \mbox{if } a >0 \mbox{ and } b/2a < \sqrt{P} \\
\sqrt{P} & \mbox{otherwise}   
\end{array} \right. , 
\end{align*}
for \textit{any} $\gamma \ge 1$, real number $c$ and $r \ge 0$.
\end{theorem}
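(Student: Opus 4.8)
The plan is to generalize the mismatched-decoder argument of Proposition~\ref{prop6}. There, the mismatched decoder assumed $X^n \equiv 0$; here I instead let the mismatched law $Q$ assume that $X^n$ is a jointly Gaussian companion of $S_2^n$, correlated with $S_2^n$ through a parameter $c$ and carrying an assumed power $rP$. Under this $Q$, the sum $\St^n = X^n + S_2^n$ has an assumed variance $P_I$, and the conditional law $Q_{Y^n|S_2^n}$ becomes Gaussian with a linear mean shift governed by $c$ and an inflated conditional variance $\tfrac{1}{\gamma}+rP$ governed by $r$. As in Proposition~\ref{prop6}, I would first record the reduction $MSE_Q(\gamma) = MSE_{Q,\St^n}(\gamma)$ coming from $S_1^n = Y^n - \St^n$, so that everything can be phrased in terms of estimating $\St^n$, and then lower bound $MSE_Q(\gamma)$ as an explicit function of $\gamma$, using Cauchy--Schwarz on the cross term $\E\langle X^n, S_2^n\rangle$ (with $X^n$ the true encoder output). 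This cross term is the only quantity not pinned down by the constraints, so I would carry it as a free variable, normalized to $x^* \in [0,\sqrt{P}]$.

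The engine is again Verd\'u's identity~\cite[Eq.~(14)]{Verdu2010}, used in its integral form to write $D(P_{Y^n}^{(\gamma_1)}\|Q_{Y^n}^{(\gamma_1)}) - D(P_{Y^n}^{(\gamma_0)}\|Q_{Y^n}^{(\gamma_0)})$ as $\tfrac12\int_{\gamma_0}^{\gamma_1}[MSE_{Q,\St^n}(\gamma) - MMSE_{P,\St^n}(\gamma)]\,d\gamma$. The key new ingredient, relative to Proposition~\ref{prop6}, is the upper bound on the divergence $D(P_{Y^n}^{(\gamma)}\|Q_{Y^n}^{(\gamma)})$. I would again pass to the conditional divergence $\E_{S_2^n} D(P_{Y^n|S_2^n}\|Q_{Y^n|S_2^n})$, but now both conditional laws are Gaussian with \emph{different} means and \emph{different} variances. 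The closed-form Gaussian KL then produces the variance-mismatch contributions (the $\tfrac{1}{1+\gamma rP}$ and $\log(1+\gamma rP)$ terms) together with the mean-mismatch contribution, in which $\E\|S_2^n\|^2 = nP_2$ yields the $\tfrac{c^2\gamma}{1+\gamma rP}P_2$ term and the cross term again enters through $x^*$.

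Combining these via the monotonicity of $MMSE_{P,\St^n}(\gamma)$ in $\gamma$ (which gives $\int_{\gamma_0}^{\gamma_1} MMSE_{P,\St^n}(\gamma)\,d\gamma \le (\gamma_1-\gamma_0)\,MMSE(\gamma_0)$) and setting $\gamma_0 = 1$, so that $D(P)_{\rm min} = MMSE(1)/n$, yields a bound that is quadratic in the free variable, of the form $a x^{*2} - b x^*$ with the stated $a$ and $b$. The final step is to minimize this quadratic over the admissible interval $x^* \in [0,\sqrt{P}]$, which is where the piecewise description of $x^*$ in the statement comes from: the interior stationary point $b/2a$ when $a>0$ and $b/2a<\sqrt P$, and the boundary $\sqrt P$ otherwise. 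As a sanity check, specializing $c=0$ and $r=0$ collapses the variance- and mean-mismatch terms and sends $P_I \to P_2$, recovering Proposition~\ref{prop6}.

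The main obstacle I anticipate is the divergence bound: when $Q$ differs from $P$ in both the conditional mean and the conditional variance, one must track the cross-correlation $\E\langle X^n, S_2^n\rangle$ simultaneously through the lower bound on $MSE_Q(\gamma)$ and the upper bound on the divergence, ensuring that the two appearances of this uncontrolled quantity are handled consistently (both bounded by the same $x^*$ via Cauchy--Schwarz) so that the end result is genuinely quadratic in $x^*$ and can be optimized in closed form. Keeping the $\gamma$-dependence of all the Gaussian moments consistent across the integral identity and the monotonicity step is the bulk of the remaining bookkeeping.
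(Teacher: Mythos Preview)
Your proposal is correct and follows essentially the same route as the paper's proof: generalize the mismatched law in Proposition~\ref{prop6} to $X = cS_2 + Z$ with $Z\sim\N(0,rP)$ under $Q$, recompute $MSE_Q(\gamma)$ with $P_I=(1+c)^2P_2+rP$, upper bound the divergence via the conditional Gaussian KL (now with both mean and variance mismatch), and combine through Verd\'u's identity and the monotonicity of $MMSE$. One small clarification: in the paper the free variable $x^*$ is not the cross-correlation itself but $\sqrt{\E\|X^n\|^2/n}\in[0,\sqrt P]$; the $a x^{*2}$ term comes from the $\|X^n\|^2$ contributions in both $MSE_Q$ and the divergence, while the $-b x^*$ term arises after Cauchy--Schwarz bounds $\tfrac1n\E\langle S_2^n,X^n\rangle$ by $\sqrt{P_2}\,x^*$.
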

As with Proposition~\ref{prop6}, Theorem~\ref{thm:7} gives a family of bounds. \textit{Any} $\gamma > 1$, real number $c$ and $r \ge 0$ yields a bound on the achievable distortion. Theorem~\ref{thm:7} is proved in Appendix~\ref{appen_glb}.

\subsection{Comparison of bounds I} \label{subsect:comgs}
We now show some plots comparing the various bounds we derived with the lower bound proposed in \cite{Grover--Sahai2011} (Theorem \ref{thm:6}). For the purpose of comparisons, we set $P_2$ at a fixed level and vary the power of the encoder. We then compute the lower bounds on distortion given in Theorem~\ref{thm:5}, Proposition~\ref{prop6}, Theorem~\ref{thm:7} as well as the achievable distortion given in Theorem~\ref{thm:5}.

The plots for $P_2 = 0.1$, $P_2 = 1$ and $P_2 = 10$ are shown in Figures~\ref{fig:v1},~\ref{fig:v2} and~\ref{fig:v3} respectively. As we can see from the plots, the generalized lower bound in Theorem~\ref{thm:7} can significantly improve on the lower bound of Theorem~\ref{thm:5} for several different levels of $P_2$. 

\begin{figure}[!h]
\psfrag{Achievable Gaussian}[l]{Theorem~\ref{thm:5}}
\psfrag{New LB 2}[l]{Theorem~\ref{thm:7}}
\psfrag{New LB}[l]{Proposition~\ref{prop6}}
\psfrag{Grover-Sahai LB}[l]{Theorem~\ref{thm:6}}

\begin{center}
\scalebox{0.75}{\includegraphics{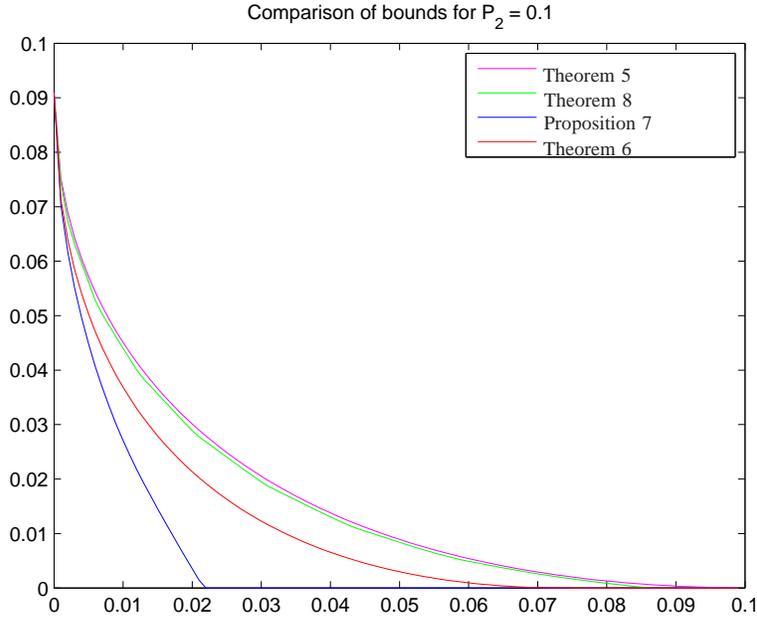}}
\caption{Comparison of bounds for $P_2 = 0.1$. Y-axis represents distortion level and X-axis represents the power constraint.} \label{fig:v1}
\end{center}
\end{figure}  

\begin{figure}[!h]
\psfrag{Achievable Gaussian}[l]{Theorem~\ref{thm:5}}
\psfrag{New LB 2}[l]{Theorem~\ref{thm:7}}
\psfrag{New LB}[l]{Proposition~\ref{prop6}}
\psfrag{Grover-Sahai LB}[l]{Theorem~\ref{thm:6}}

\begin{center}
\scalebox{0.75}{\includegraphics{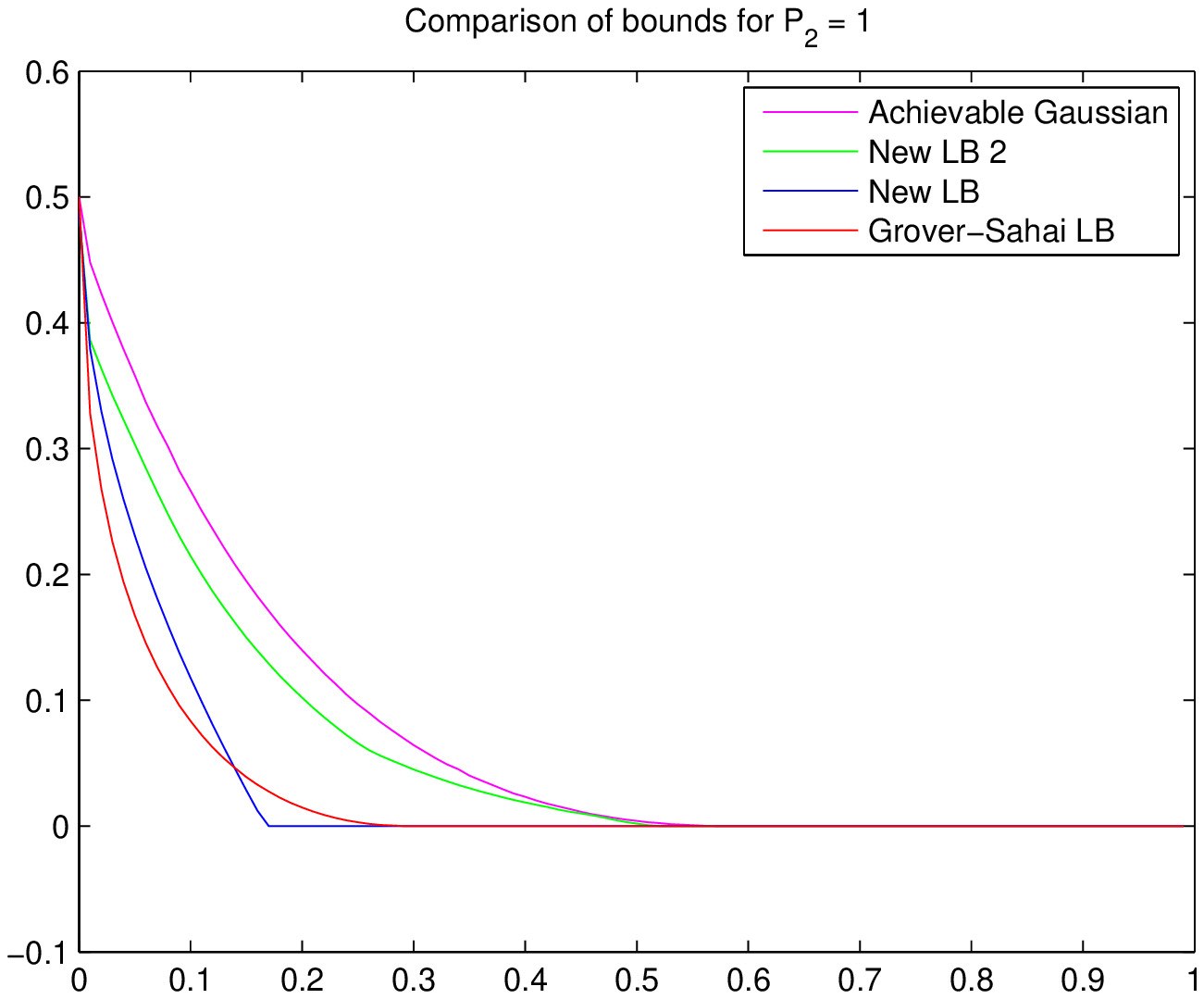}}
\caption{Comparison of bounds for $P_2 = 1$. Y-axis represents distortion level and X-axis represents the power constraint.} \label{fig:v2}
\end{center}
\end{figure}  

\begin{figure}[!h]
\psfrag{Achievable Gaussian}[l]{Theorem~\ref{thm:5}}
\psfrag{New LB 2}[l]{Theorem~\ref{thm:7}}
\psfrag{New LB}[l]{Proposition~\ref{prop6}}
\psfrag{Grover-Sahai LB}[l]{Theorem~\ref{thm:6}}

\begin{center}
\scalebox{0.75}{\includegraphics{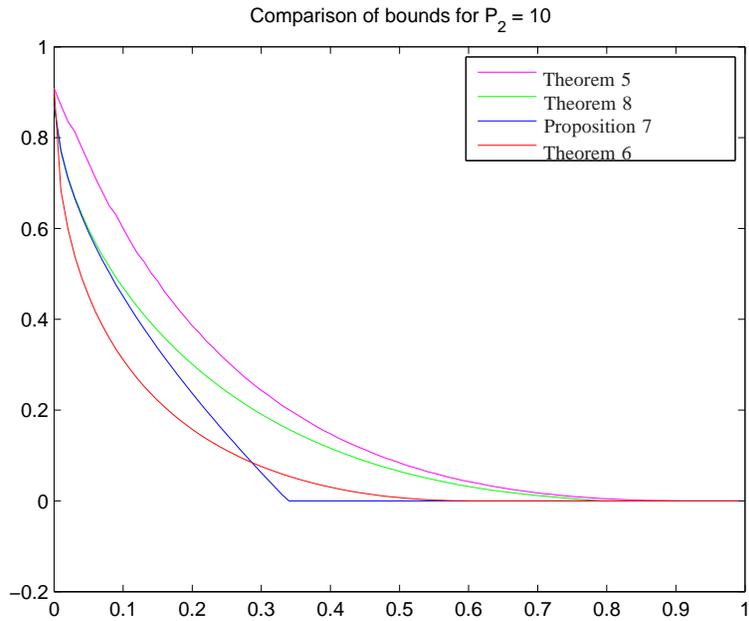}}
\caption{Comparison of bounds for $P_2 = 10$. Y-axis represents distortion level and X-axis represents the power constraint.}\label{fig:v3}
\end{center}
\end{figure}  
\subsection{Comparison of bounds II} \label{subsect:comgws}

In this subsection, we compare our lower bound given in Theorem~\ref{thm:7} to the lower bound given in \cite{Grover--Wagner--Sahai2011} (Theorem \ref{thm:GWS}). For ease of numerical computation, we compare our lower bound to the following \textit{upper bound} on Theorem~\ref{thm:GWS}.
\begin{align}
D(P)_{\rm min} \ge \min_{\sigma_{XS_2} \in \mathcal{A}} \sup_{\gamma >0} \frac{1}{\gamma^2} \left(\left[ \sqrt{\frac{P_2}{1+ P_2 + P + 2\sigma_{XS_2}}} - \sqrt{(1-\gamma)^2P_2 + \gamma^2 P -2\gamma(1-\gamma)\sigma_{XS_2}}\right]^+\right)^2, \label{ineq:GWS}
\end{align} 
where $[.]^+$ denotes the positive part and $\mathcal{A}$ is a discretization of the interval $[-\sqrt{P_2}\sqrt{P}, \sqrt{P_2}\sqrt{P}]$. The plots showing comparisons of the lower bound proposed in Theorem~\ref{thm:7} and the lower bound given in inequality~\eqref{ineq:GWS} for $P_2=1,10,100$ are given in Figures~\ref{fig:v4},~\ref{fig:v5} and~\ref{fig:v6} respectively.

As can be seen from the plots, the two bounds now cross each other. While the lower bound given~\cite{Grover--Wagner--Sahai2011} can be better than that given in Theorem~\ref{thm:7} in some regimes, we can also see that Theorem~\ref{thm:7} can be strictly better than Theorem \ref{thm:GWS} in other regimes, particularly when $P_2$ is large and the power budget $P$ of the encoder is small.  

\begin{figure}[!h]
\psfrag{New LB}[l]{Theorem~\ref{thm:7}}
\psfrag{Grover-Sahai LB}[l]{Inequality~\eqref{ineq:GWS}}

\begin{center}
\scalebox{0.75}{\includegraphics{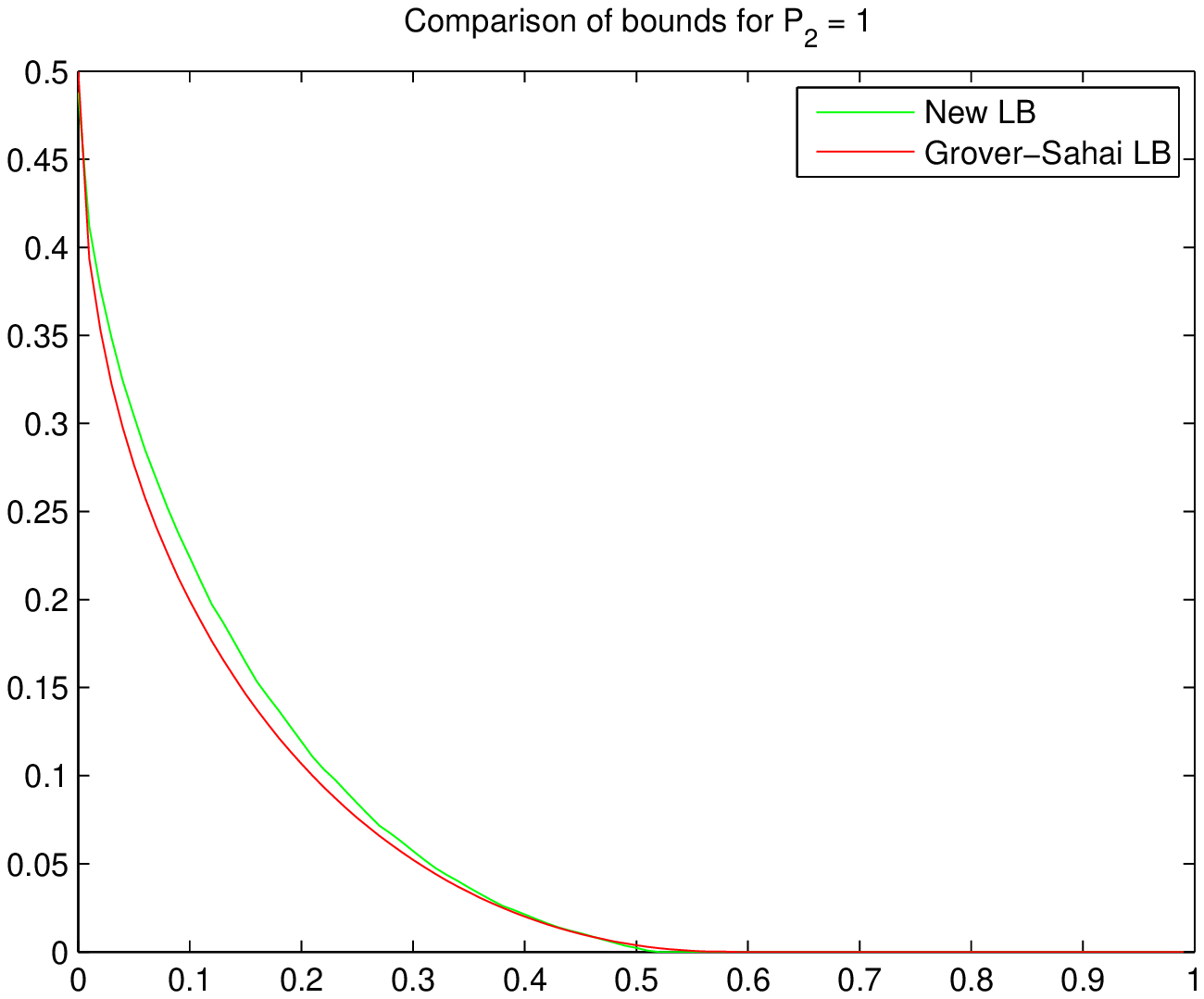}}
\caption{Comparison of bounds for $P_2 = 1$. Y-axis represents distortion level and X-axis represents the power constraint.} \label{fig:v4}
\end{center}
\end{figure}  

\begin{figure}[!h]
\psfrag{New LB}[l]{Theorem~\ref{thm:7}}
\psfrag{Grover-Sahai LB}[l]{Inequality~\eqref{ineq:GWS}}

\begin{center}
\scalebox{0.75}{\includegraphics{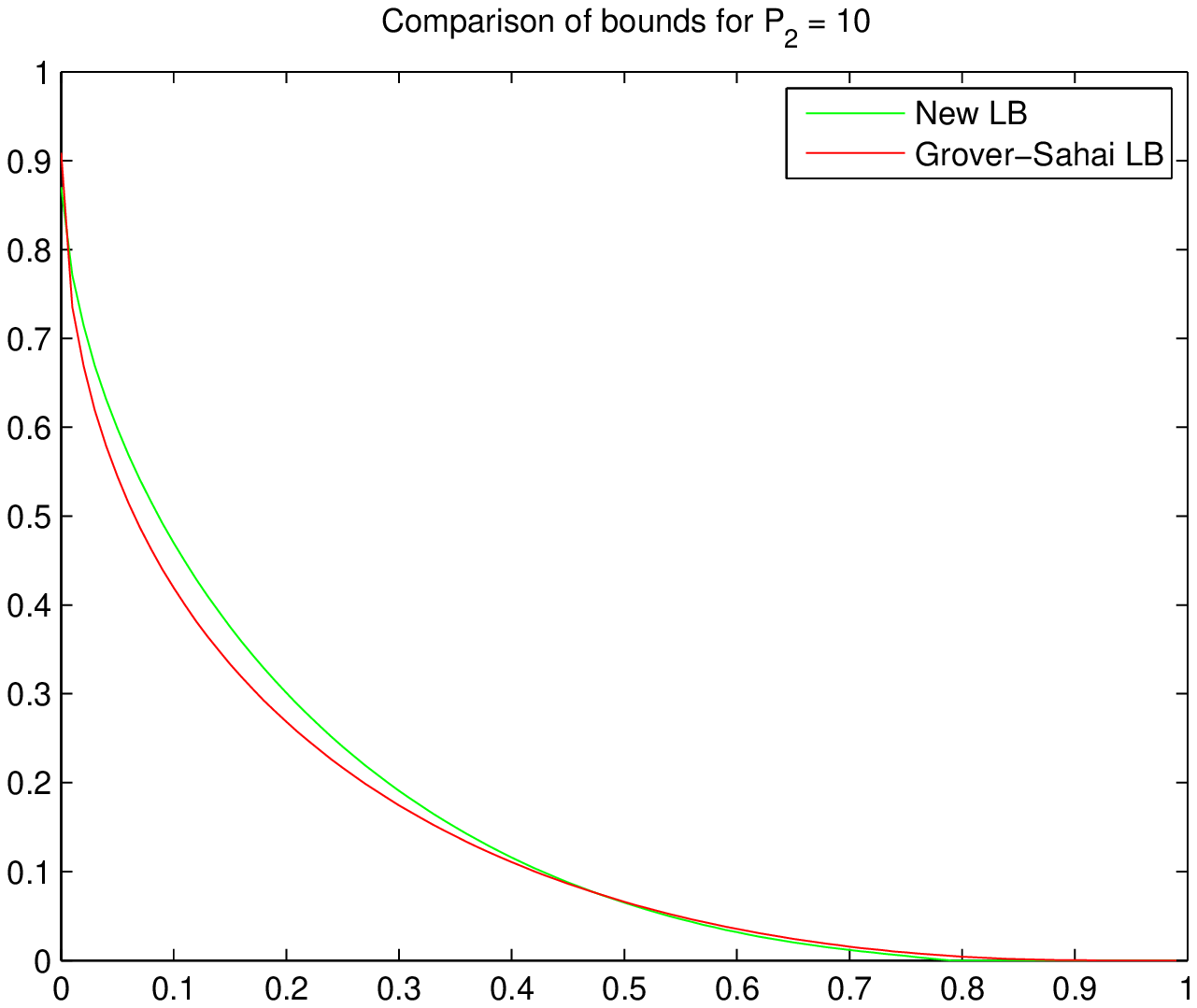}}
\caption{Comparison of bounds for $P_2 = 10$. Y-axis represents distortion level and X-axis represents the power constraint.} \label{fig:v5}
\end{center}
\end{figure}  

\begin{figure}[!h]
\psfrag{New LB}[l]{Theorem~\ref{thm:7}}
\psfrag{Grover-Sahai LB}[l]{Inequality~\eqref{ineq:GWS}}

\begin{center}
\scalebox{0.75}{\includegraphics{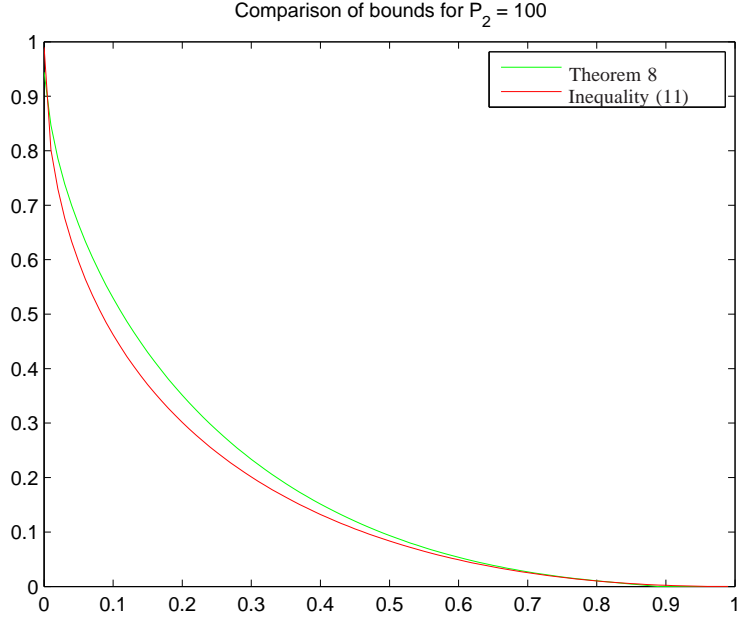}}
\caption{Comparison of bounds for $P_2 = 100$. Y-axis represents distortion level and X-axis represents the power constraint.} \label{fig:v6}
\end{center}
\end{figure}

\section{When $S_1$ is also available at the encoder} \label{sect:6}
In this section, we turn our attention to the problem of reconstructing $S_1$ when both $S_1$ and $S_2$ are available at the encoder, as defined in Section~\ref{sect:svprob}. As with previous sections, the focus of this section is on lower bounds for this setup, but we also use lower and upper bounds to derive constant multiplicative gap results between the achievable distortions and lower bounds. As we mentioned in the Introduction, our setting is a special case of the setting considered in~\cite{Huang--Narayanan2011}. We first review some known results found in that paper specialized to our setting, and then present our results, which include a generalization of the lower bound~\cite{Huang--Narayanan2011} that can be strictly larger.  

\subsection{Upper and lower bounds}
We first present an achievability scheme for this setting.

\begin{theorem} \label{thm:8}
(See also~\cite{Huang--Narayanan2011}) An acheivable distortion-cost region for the problem of estimation with a helper who has non-causal access to both the interference and the signal is given by
\begin{align*}
D(P)_{\rm min} \le \frac{P_1}{\left(1+\frac{\left(\alpha\sqrt{\frac{P}{P_1}}+1\right)^2P_1}{ \left(\beta\sqrt{\frac{P}{P_2}}+1\right)^2P_2+P(1-\alpha^2-\beta^2)+N}\right)\left(1+\frac{P(1-\alpha^2-\beta^2)}{N}\right)},
\end{align*}
where we minimize over $-1\le \alpha \le 1$, $-1 \le \beta \le 1$ and $0 \le \alpha^2 + \beta^2 \le 1$.\footnote{In \cite{Huang--Narayanan2011}, the authors minimize only over $0\le \alpha \le 1$, $0 \le \beta \le 1$ and $0 \le \alpha^2 + \beta^2 \le 1$, but it is easy to see that their proof carries over to the range stated in this Theorem.}
\end{theorem}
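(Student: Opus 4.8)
The plan is to exhibit an explicit hybrid analog--digital coding scheme and verify that it attains the stated distortion, then optimize over the free parameters. Since the encoder observes both $S_1^n$ and $S_2^n$, I would split the available power into an analog (uncoded) portion correlated with the two sources and a digital portion reserved for dirty paper coding. Concretely, set
\begin{align*}
X = \alpha\sqrt{\tfrac{P}{P_1}}\,S_1 + \beta\sqrt{\tfrac{P}{P_2}}\,S_2 + X',
\end{align*}
where $X'$ is independent of $(S_1,S_2)$ with power $P_{X'} = P(1-\alpha^2-\beta^2)$. The first step is bookkeeping: because the three terms are uncorrelated, $\E X^2 = \alpha^2 P + \beta^2 P + P(1-\alpha^2-\beta^2) = P$, so the cost constraint is met with equality for every admissible $(\alpha,\beta)$. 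Writing $a = \alpha\sqrt{P/P_1}+1$ and $b = \beta\sqrt{P/P_2}+1$, the channel output becomes $Y = a\,S_1 + b\,S_2 + X' + Z$.

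The second step isolates the two factors in the denominator. Treating $b\,S_2 + X' + Z$ as Gaussian noise of variance $b^2 P_2 + P_{X'} + N$, the linear MMSE estimate of $S_1$ from $Y$ alone incurs distortion
\begin{align*}
D_{\mathrm{lin}} = \frac{P_1}{1 + \dfrac{a^2 P_1}{b^2 P_2 + P_{X'} + N}},
\end{align*}
which is exactly the first factor in the theorem. The role of $X'$ is to supply a digital refinement of this analog estimate. Since the encoder knows the realizations of $S_1$ and $S_2$, it treats $a\,S_1 + b\,S_2$ as known interference and uses $X'$ to dirty-paper-code a quantized description of the residual estimation error, in the manner of~\cite{Costa1983} and the hybrid scheme of Theorem~\ref{thm:2}. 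Following the standard Gel'fand--Pinsker/DPC analysis, the interference $a S_1 + b S_2$ is perfectly precancelled, so the refinement is carried over an effective point-to-point channel $X' + Z$ and can be decoded reliably at any rate below $R = \frac{1}{2}\log(1 + P_{X'}/N)$.

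The third step combines the analog and digital parts: the decoder first decodes the dirty-paper codeword to recover the description, then forms the linear estimate and refines it; a rate-$R$ Gaussian description reduces the distortion by the multiplicative factor $2^{2R} = 1 + P_{X'}/N$, precisely the second factor. Multiplying the two contributions yields the claimed bound, and taking the infimum over the admissible parameters gives the minimization statement. I expect the main obstacle to be the rigorous error analysis of the digital layer: one must show that the quantization codebook for the residual and the dirty paper codebook can be jointly designed so that (i) the description is decoded with vanishing error probability despite $a S_1$ simultaneously carrying source information the decoder wants, and (ii) the residual, after precancellation, is asymptotically Gaussian and independent of the analog estimate, so the $2^{2R}$ refinement gain is actually realized. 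This is most cleanly handled by specializing the two-source generalization of the hybrid-coding inner bound (Theorem~\ref{thm:2} with $p(u\mid s_1,s_2)$, $x=f(s_1,s_2,u)$, and decoding condition $I(U;Y) > I(U;S_1,S_2)$) and choosing jointly Gaussian auxiliaries, as in~\cite{Huang--Narayanan2011}; the remaining work is then the routine algebra identifying the resulting MMSE with the stated product form. Finally, I would note that the analysis is symmetric under sign changes of $\alpha$ and $\beta$ (these only flip the sign of the correlations), which justifies extending the minimization to the full range $-1\le\alpha,\beta\le 1$ as in the footnote.
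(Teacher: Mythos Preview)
Your proposal is correct and follows essentially the same route as the paper's sketch in Appendix~\ref{appen_gsv}: both split the power into an uncoded linear combination $\alpha\sqrt{P/P_1}\,S_1+\beta\sqrt{P/P_2}\,S_2$ and a coded part $X'$ that carries, via dirty paper coding against the known interference $aS_1+bS_2$, a Wyner--Ziv description of $S_1$ with $Y$ as side information, yielding exactly the two multiplicative factors in the denominator. The only cosmetic difference is that the paper phrases the digital layer explicitly as Wyner--Ziv binning of $S_1$ (with $Y$ as decoder side information) rather than as quantizing the residual, but for Gaussian sources these are equivalent and both give the $2^{2R}=1+P_{X'}/N$ refinement gain you state.
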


As the achievability scheme is largely the same as that in \cite{Huang--Narayanan2011}, we only give a sketch in Appendix~\ref{appen_gsv}.

We now turn to lower bounds on the distortion-cost region. We first present without proof two lower bounds in the following two propositions. For their proofs cf.  \cite{Huang--Narayanan2011} or the proof of Theorem~\ref{thm:9} below. 

\begin{proposition} \label{prop7}
A lower bound for  the problem of estimation with a helper who knows both the interference and the signal noncausally is given by
\begin{align*}
D(P)_{\rm min} \ge \frac{P_1}{\left(1+ \frac{P_1}{P_2}\right)\left(1+ \frac{P}{N}\right)}.
\end{align*}
\end{proposition}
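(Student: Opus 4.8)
The plan is to prove the bound via a standard rate--distortion converse, with the key new ingredient being a carefully chosen genie. First I would invoke the rate--distortion lower bound together with the data-processing inequality: since $\Sh_1^n$ is a function of $Y^n$ and $\frac1n\sum_i\E(S_{1i}-\Sh_{1i})^2\le D$, I obtain $\frac n2\log\frac{P_1}{D}\le I(S_1^n;\Sh_1^n)\le I(S_1^n;Y^n)$ (valid for $D\le P_1$; when $D>P_1$ the claimed bound holds trivially, since its right-hand side never exceeds $P_1$). Thus everything reduces to the single inequality $I(S_1^n;Y^n)\le \frac n2\log\big[(1+P_1/P_2)(1+P/N)\big]$, and the product form of the target strongly suggests splitting this mutual information into one piece producing the factor $(1+P_1/P_2)$ and one producing $(1+P/N)$.

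The decomposition I would use is to reveal to the decoder the genie $A^n:=S_1^n+S_2^n$ (crucially the \emph{sum}, not $S_2^n$ itself). Then
\begin{align*}
I(S_1^n;Y^n)\le I(S_1^n;Y^n,A^n)=I(S_1^n;A^n)+I(S_1^n;Y^n\mid A^n).
\end{align*}
The first term is a pure i.i.d.\ Gaussian computation, $I(S_1^n;A^n)=\frac n2\log(1+P_1/P_2)$, giving exactly the first factor. For the second term I would note that conditioned on $A^n$ we have $Y^n=X^n+A^n+Z^n$, whence $I(S_1^n;Y^n\mid A^n)=h(X^n+Z^n\mid A^n)-h(X^n+Z^n\mid A^n,S_1^n)$. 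The point is that given \emph{both} $A^n$ and $S_1^n$ the interference $S_2^n=A^n-S_1^n$ is determined, so $X^n=f(S_1^n,S_2^n)$ is determined, and the subtracted term equals $h(Z^n)$; meanwhile $h(X^n+Z^n\mid A^n)\le h(X^n+Z^n)\le \frac n2\log(2\pi e(P+N))$ by conditioning-reduces-entropy, the Gaussian maximum-entropy bound, $Z^n\perp X^n$, and the power constraint $\frac1n\E\|X^n\|^2\le P$. This yields $I(S_1^n;Y^n\mid A^n)\le \frac n2\log(1+P/N)$, the second factor.

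Combining the three displays gives $\frac{P_1}{D}\le (1+P_1/P_2)(1+P/N)$, which is the claim. The conceptual crux --- and the step I expect to require the most care to justify --- is the choice of genie: revealing $S_2^n$ to the decoder is useless (the helper could amplify $S_1^n$ to drive the conditional distortion to zero, giving a vacuous bound), whereas revealing the sum $A^n$ is just enough to collapse the residual channel, after conditioning, to a clean power-$P$ additive-Gaussian channel, while still charging the decoder the unavoidable $(1+P_1/P_2)$ cost of separating $S_1^n$ from $S_2^n$. The remaining routine steps --- single-letterizing the power and distortion constraints, and the concavity/maximum-entropy estimates --- follow the same pattern as in the proof of Theorem~\ref{thm:3}.
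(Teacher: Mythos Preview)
Your proof is correct and is essentially the same as the paper's. The paper obtains Proposition~\ref{prop7} as the special case $\alpha=1$ of Theorem~\ref{thm:9}, which, as noted there, ``correspond[s] to supplying $S_1+S_2$\ldots to the decoder and then lower bounding the distortion''---exactly your genie $A^n=S_1^n+S_2^n$; the only cosmetic difference is that the paper applies data processing to $I(S_1^n;\Sh_1^n\mid A^n)\le I(S_1^n;Y^n\mid A^n)$ and uses the conditional rate--distortion bound $h(S_1\mid S_1+S_2)=\tfrac12\log\!\big(2\pi e\,\tfrac{P_1P_2}{P_1+P_2}\big)$, whereas you use the unconditional data-processing inequality and then add $A^n$ via the chain rule, but the two routes are algebraically identical.
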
 
\begin{remark}
When $P_2 \to \infty$, we see that $D(P)_{\rm min}\ge\frac{P_1}{1+\frac{P}{N}}$. This bound is achievable by noting that for $\alpha=\beta=0$, we have $D=\frac{P_1}{1+\frac{P}{N}}$ in Theorem~\ref{thm:8}. Thus a separation scheme is optimal when $P_2\rightarrow\infty$. 
\end{remark}

\begin{proposition} \label{prop8}
A lower bound for  the problem of estimation with a helper who knows both the interference and the signal noncausally is given by
\begin{align*}
D(P)_{\rm min} \ge \frac{P_1}{\left(1+ \frac{(\sqrt{P} + \sqrt{P_1})^2}{N}\right)}.
\end{align*}
\end{proposition}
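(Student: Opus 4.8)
The plan is to prove this bound by a genie-aided argument combined with the standard rate--distortion converse for the Gaussian source $S_1$. Since revealing extra side information to the decoder can only decrease the achievable distortion, I would first imagine handing $S_2^n$ to the decoder for free; any lower bound on the resulting genie-aided distortion is then a lower bound on $D(P)_{\rm min}$. Concretely, for any code achieving distortion $D$ the reconstruction $\Sh_1^n = g(Y^n)$ is a function of $Y^n$, hence of $(\tilde Y^n, S_2^n)$ where $\tilde Y^n := Y^n - S_2^n = X^n + S_1^n + Z^n$. Thus $\Sh_1^n$ is measurable with respect to the ``de-interfered'' observation together with the interference.

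Next I would invoke the converse. For an i.i.d.\ $\Nc(0,P_1)$ source under squared error, $I(S_1^n;\Sh_1^n) \ge \frac{n}{2}\log(P_1/D)$, obtained in the usual way from $h(S_1^n)=\frac{n}{2}\log(2\pi e P_1)$, the bound $h(S_{1i}\mid\Sh_{1i}) \le h(S_{1i}-\Sh_{1i}) \le \frac12\log(2\pi e\,\E(S_{1i}-\Sh_{1i})^2)$, and Jensen's inequality applied to the distortion constraint. On the other side, data processing together with the independence of $S_1^n$ and $S_2^n$ gives $I(S_1^n;\Sh_1^n) \le I(S_1^n;\tilde Y^n, S_2^n) = I(S_1^n;S_2^n) + I(S_1^n;\tilde Y^n\mid S_2^n) = I(S_1^n;\tilde Y^n\mid S_2^n)$.

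The core estimate is an upper bound on $I(S_1^n;\tilde Y^n\mid S_2^n) = h(\tilde Y^n\mid S_2^n) - h(\tilde Y^n\mid S_1^n,S_2^n)$. The subtracted term equals $h(Z^n)=\frac{n}{2}\log(2\pi e N)$ exactly, because $X^n$ is a deterministic function of $(S_1^n,S_2^n)$, so once both are fixed $\tilde Y^n$ is a constant plus $Z^n$. For the first term I would drop the conditioning and apply the maximum-entropy bound $h(\tilde Y^n) \le \frac{n}{2}\log\!\big(2\pi e\,\tfrac1n\E\|\tilde Y^n\|^2\big)$. Since $Z^n$ is independent of $X^n+S_1^n$, we have $\tfrac1n\E\|\tilde Y^n\|^2 = \tfrac1n\E\|X^n+S_1^n\|^2 + N$, and Cauchy--Schwarz with the power constraint $\tfrac1n\E\|X^n\|^2 \le P$ gives $\tfrac1n\E\|X^n+S_1^n\|^2 \le P + 2\sqrt{PP_1} + P_1 = (\sqrt P+\sqrt{P_1})^2$. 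Combining yields $\tfrac1n I(S_1^n;\tilde Y^n\mid S_2^n) \le \tfrac12\log\!\big(1 + (\sqrt P+\sqrt{P_1})^2/N\big)$.

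Putting the converse and the core estimate together gives $P_1/D \le 1 + (\sqrt P+\sqrt{P_1})^2/N$, which rearranges to the claimed bound. The one delicate point---the main obstacle---is the Cauchy--Schwarz step: the worst case arises when the helper's signal $X^n$ aligns with $S_1^n$, and it is precisely this potential alignment that produces the $(\sqrt P+\sqrt{P_1})^2$ term rather than $P+P_1$. Everything else is a routine single-letterization that the genie reduction has rendered transparent, the disappearance of $P_2$ from the final expression being the expected signature of having given $S_2^n$ away.
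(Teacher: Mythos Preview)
Your proof is correct and follows essentially the same approach the paper indicates: the paper does not prove Proposition~\ref{prop8} directly but notes that it is the $\alpha\to\infty$ limit of Theorem~\ref{thm:9}, which corresponds exactly to your genie argument of supplying $S_2^n$ to the decoder and then running the standard Gaussian rate--distortion converse on the de-interfered channel $\tilde Y^n = X^n + S_1^n + Z^n$. Your Cauchy--Schwarz step to get $(\sqrt{P}+\sqrt{P_1})^2$ is precisely the single-letter bounding of $h(Y\mid S_1+\alpha S_2)$ that the paper performs in the proof of Theorem~\ref{thm:9}, specialized to this limit.
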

 \begin{remark}
As $P_2 \to 0$, our setting reduces to that of state amplification~\cite{Kim--Sutivong--Cover2008}. From the results therein, the bound of  Proposition~\ref{prop8} is optimal when $P_2 \to 0$.
\end{remark} 
We now present our lower bound.
\begin{theorem} \label{thm:9}
A lower bound for  the problem of estimation with a helper that knows both the interference and the signal noncausally is given by
\begin{align*}
D(P)_{\rm min} \ge \frac{ (\frac{\alpha^2 P_1 P_2}{P_1 + \alpha^2 P_2})N}{MSE(\alpha)}
\end{align*}
for \textit{any} $\alpha \in \mathcal{R}$, $\alpha \neq 0$, where $MSE(\alpha)$ is given by the optimum value of the following convex (quadratic) optimization problem:
\begin{align*}
\max_{|\rho_{XS_1}| \le \sqrt{PP_1}, |\rho_{XS_2}| \le \sqrt{PP_2}} P + (1 -\alpha)^2 P_2 + 2(1-\alpha)\rho_{XS_2}+N - \frac{((1-\alpha)\alpha P_2 + \alpha \rho_{XS_2} + \rho_{X S_1})^2}{P_1 + \alpha^2 P_2}. 
\end{align*}
\end{theorem}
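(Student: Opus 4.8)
The plan is to reorganize the channel around the linear combination that the free parameter $\alpha$ singles out. Writing $B = S_1 + \alpha S_2$ and $A = X + (1-\alpha)S_2 + Z$, the output factors as $Y = A + B$, so that $B$ is an effective signal carrying $S_1$ and $A$ is an effective (signal-dependent) noise that contains the genuine channel noise $Z$. Two elementary computations already explain the two pieces of the bound. First, the residual error in reconstructing $S_1$ from a perfect copy of $B$ is $\Var(S_1\mid B) = \alpha^2 P_1 P_2/(P_1 + \alpha^2 P_2)$, which is exactly the numerator's first factor. Second, the effective noise has linear-MMSE conditional variance $\Var(A\mid B) = \Var(A) - \Cov(A,B)^2/\Var(B)$ with $\Var(A) = P + (1-\alpha)^2 P_2 + 2(1-\alpha)\E[XS_2] + N$, $\Cov(A,B) = (1-\alpha)\alpha P_2 + \alpha\,\E[XS_2] + \E[XS_1]$, and $\Var(B) = P_1 + \alpha^2 P_2$; these are precisely the terms appearing in the objective of the stated program.

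First I would reduce the block distortion to this single-letter picture. Since the conditional mean is the optimal estimator, $n\,D \ge \E\|S_1^n - \E[S_1^n\mid Y^n]\|^2$, and the goal is the inequality $\E\|S_1^n - \E[S_1^n\mid Y^n]\|^2 \ge n\,\Var(S_1\mid B)\,N/\Var(A\mid B)$ for the covariances $\E[XS_1],\E[XS_2]$ induced by the given code. The core is a conditional-covariance (information--estimation) inequality asserting that estimating $S_1$ through the AWGN channel $Y = (X+S_1+S_2)+Z$ cannot beat the product of (i) the unavoidable loss $\Var(S_1\mid B)$ in passing from $B$ to $S_1$ and (ii) the Gaussian-channel attenuation factor $N/\Var(A\mid B)$. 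I would establish this by the orthogonal decomposition $S_1 = \tfrac{P_1}{P_1+\alpha^2 P_2}\,B + N_1$ (with $N_1 \perp B$) together with a worst-additive-noise/max-entropy argument: among effective noises $A$ with the given second-order statistics, the Gaussian one is least favorable, and for it the MMSE of $S_1$ through $Y = A+B$ is exactly the claimed ratio; the non-Gaussian law induced by the code is then dominated by this Gaussian value. Single-letterization uses the memorylessness of the channel and the i.i.d.\ structure of $S_1^n,S_2^n$, with concavity of the conditional variance in the per-letter correlations. (Alternatively, this step can be phrased in the language of Verd\'u's mismatched-estimation relation \cite{Verdu2010} used in Proposition~\ref{prop6} and Theorem~\ref{thm:7}, comparing the true decoder against one that treats $A$ as independent Gaussian noise of variance $\Var(A\mid B)$.)

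Next I would remove the dependence on the unknown code. The code enters the bound only through $\E[XS_1]$ and $\E[XS_2]$ in the denominator $\Var(A\mid B)$, and these are constrained solely by Cauchy--Schwarz, $|\E[XS_i]| \le \sqrt{PP_i}$. Since enlarging the denominator only weakens the bound, I replace $\Var(A\mid B)$ by its maximum over the box $\{|\rho_{XS_1}|\le\sqrt{PP_1},\,|\rho_{XS_2}|\le\sqrt{PP_2}\}$, obtaining a bound valid for every admissible code; this maximum is exactly $MSE(\alpha)$, and because the subtracted term is a square divided by the positive constant $P_1+\alpha^2 P_2$, the objective is concave and the program is the claimed convex (quadratic) one. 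As $\alpha\neq 0$ was arbitrary, optimizing over $\alpha$ is legitimate; a useful consistency check is that $\alpha=1$ recovers Proposition~\ref{prop7} (where the maximizing choice gives $MSE(1)=P+N$) and $\alpha\to\infty$ recovers Proposition~\ref{prop8} (where $MSE(\alpha)\to N+(\sqrt{P}+\sqrt{P_1})^2$ and $\Var(S_1\mid B)\to P_1$), matching \cite{Huang--Narayanan2011}.

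The hard part will be the core conditional-covariance inequality, and specifically the correlation between the effective noise $A$ and the effective signal $B$: both depend on $S_2$, and $A$ further depends on $X = f(S_1,S_2)$, so $A$ and $B$ are genuinely coupled rather than independent. This coupling is exactly what forces a worst-case optimization over $\rho_{XS_1},\rho_{XS_2}$ instead of a single clean conditional variance, and it is why a direct entropy-power argument is awkward here. Tracking the cross terms $\Cov(B,N_1\mid Y^n)$ through the single-letterization, and verifying that the averaged per-letter correlations stay inside the Cauchy--Schwarz box so that the final maximum dominates them, is where the bookkeeping must be handled with care.
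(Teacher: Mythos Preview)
Your decomposition $Y=A+B$ with $B=S_1+\alpha S_2$ and $A=X+(1-\alpha)S_2+Z$ is exactly the right way to read the bound, and you have correctly identified every term: $\Var(S_1\mid B)=\alpha^2P_1P_2/(P_1+\alpha^2P_2)$ is the numerator factor, and the linear-MMSE residual $\Var(A)-\Cov(A,B)^2/\Var(B)$ is precisely the objective of the stated program. Your reduction to the Cauchy--Schwarz box and the concavity observation are also correct.

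The gap is in what you yourself flag as ``the hard part.'' The worst-additive-noise lemma you invoke requires the effective noise to be \emph{independent} of the effective signal; here $A$ depends on $S_2$ directly and on $S_1$ through $X=f(S_1^n,S_2^n)$, so $A$ and $B$ are genuinely coupled \emph{and} $A$ carries information about $S_1$ beyond what is in $B$. Under these circumstances there is no clean statement of the form ``Gaussian $A$ with the same second-order statistics is least favorable for estimating $S_1$ from $A+B$,'' and your orthogonal split $S_1=\tfrac{P_1}{P_1+\alpha^2P_2}B+N_1$ does not rescue it because $N_1$ (a linear combination of $S_1,S_2$) is in general correlated with the residual $A-\Cov(A,B)\Var(B)^{-1}B$ through $X$. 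Your proposed single-letterization via ``concavity of the conditional variance in the per-letter correlations'' is also not a standard tool; it is entropy, not variance, that is concave in the relevant sense. Finally, the Verd\'u mismatched-estimation route of Proposition~\ref{prop6} and Theorem~\ref{thm:7} is tailored to the setting where $S_1^n$ itself plays the role of AWGN of known law; once the encoder sees $S_1^n$ that structure is lost, so the analogy does not carry over.

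The paper closes this gap by a genie argument phrased in mutual information rather than MMSE: hand the decoder $B^n=S_1^n+\alpha S_2^n$ for free and apply data processing,
\[
I(S_1^n;\Sh_1^n\mid B^n)\ \le\ I(S_1^n;Y^n\mid B^n)\ =\ h(Y^n\mid B^n)-h(Z^n).
\]
The left side is lower-bounded in the usual rate--distortion way by $n\,h(S_1\mid B)-\tfrac{n}{2}\log(2\pi e\,D)$. The right side single-letterizes by conditioning-reduces-entropy and the memoryless channel to $\sum_i h(Y_i\mid B_i)-\tfrac{n}{2}\log(2\pi e\,N)$, and then for each letter one writes $h(Y\mid B)=h(A\mid B)\le h(A-kB)\le\tfrac12\log\bigl(2\pi e\,\E(A-kB)^2\bigr)$ with $k=\Cov(A,B)/\Var(B)$. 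The correlation between $A$ and $B$ is thus absorbed by subtracting the linear estimate \emph{before} invoking Gaussian max-entropy, which is exactly the step your worst-noise argument cannot supply. Rearranging gives $D\ge \Var(S_1\mid B)\cdot N/\E(A-kB)^2$, and since the code enters only through $\rho_{XS_1},\rho_{XS_2}$ (and $\E X^2\le P$ contributes the leading $P$), the maximization over the Cauchy--Schwarz box yields $MSE(\alpha)$. Your consistency checks at $\alpha=1$ and $\alpha\to\infty$ remain valid with this argument.
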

It can be shown that setting $\alpha =1$ and $\alpha \rightarrow \infty$ recovers the bounds in Propositions~\ref{prop7} and \ref{prop8}, respectively. The cases of $\alpha=1$ and $\alpha=\infty$ correspond to supplying $S_1+S_2$ and $S_2$, respectively to the decoder and then lower bounding the distortion.  

Note that while finding the optimum value of $\alpha$ may be difficult, Theorem~\ref{thm:9} gives a lower bound for every $\alpha$. We note also that while computation of the lower bound requires solving an optimization problem for each $\alpha$, unlike the lower bounds in Propositions~\ref{prop7} and~\ref{prop8}, the optimization problem is quadratic and can be efficiently solved~\cite{Boyd--Vandenberghe2004},~\cite{cvx}.

\begin{proof}
The idea in the proof of Theorem~\ref{thm:9} lies in giving side information $S_1 + \alpha S_2$ to the decoder instead of just $S_1+S_2$ or $S_2$ as in Propositions~\ref{prop7} and~\ref{prop8} respectively, and then a more careful bounding of the terms appearing in the distortion calculation using Linear minimum mean square error estimation and convex optimization. 

From the data processing inequality,
\begin{align*}
I(S_1^n;\Sh_1^n | S_1^n+ \alpha S_2^n) &\le I(S_1^n; Y^n|S_1^n + \alpha S_2^n) \\
& = h(Y^n|S_1^n+ \alpha S_2^n) - h(Z^n) \\
& \le \sum_{i=1}^n h(Y_i | S_{1i} + \alpha S_{2i}) - \frac{n}{2} \log 2\pi e N \\
& \stackrel{(a)}{\le} n h(Y | S_{1} + \alpha S_{2}, Q) - \frac{n}{2} \log 2\pi e N \\
& \le n h(Y | S_{1} + \alpha S_{2}) - \frac{n}{2} \log 2\pi e N.
\end{align*}
In $(a)$, we defined $Q \sim Unif[1:n]$ independent of all other random variables and $Y_Q = Y$, $S_{1Q} = S_1$, $S_{2Q} = S_2$ and $\Sh_{1Q} = \Sh_1$. On the other hand, we have{\allowdisplaybreaks
\begin{align*}
I(S_1^n; \Sh_1^n |S_1^n + \alpha S_2^n) & = \sum_{i=1}^n h(S_{1i}| S_{1i} + \alpha S_{2i}) - h(S_1^n |\Sh_1^n, S_1^n + \alpha S_2^n) \\
& \ge \sum_{i=1}^n h(S_{1i}| S_{1i} + \alpha S_{2i}) - \sum_{i=1}^n h(S_{1i} |\Sh_{1i}) \\
& \ge \sum_{i=1}^n h(S_{1i}| S_{1i} + \alpha S_{2i}) - \sum_{i=1}^n h(S_{1i} - \Sh_{1i}) \\
& \stackrel{(a)}{\ge} n h(S_1 | S_1 + \alpha S_2) - \frac{n}{2} \log 2\pi e D(P)_{\rm min}\\
& = \frac{n}{2}\log\left(2\pi e \frac{\alpha^2 P_1 P_2}{P_1 + \alpha^2 P_2}\right)- \frac{n}{2} \log 2\pi e D(P)_{\rm min},
\end{align*}}
where $(a)$ follows from concavity of differential entropy and the property that a Gaussian distribution maximizes the differential entropy for a given second moment. Therefore, 
\begin{align*}
\frac{1}{2}\log\left(2\pi e \frac{\alpha^2 P_1 P_2}{P_1 + \alpha^2 P_2}\right)- \frac{1}{2} \log 2\pi e D(P)_{\rm min} &\le h(Y | S_{1} + \alpha S_{2}) - \frac{1}{2} \log 2\pi e N \\
& = h(X+ (1-\alpha)S_2 +Z | S_1 + \alpha S_2) - \frac{1}{2}\log 2\pi e N \\
& \le h(X+ (1-\alpha)S_2 +Z - k (S_1 + \alpha S_2)) - \frac{1}{2}\log 2\pi e N,
\end{align*}
where $k$ is defined as
\begin{align*}
k := \frac{(1-\alpha)\alpha P_2 + \alpha \rho_{XS_2} + \rho_{XS_1}}{P_1 + \alpha^2 P_2},
\end{align*}
with $\E XS_1: = \rho_{XS_1}$ and $\E XS_2: = \rho_{XS_2}$. From Cauchy-Schwartz inequality and the power constraint on $X$, $|\rho_{XS_1}| \le \sqrt{PP_{S_1}}$ and $|\rho_{XS_2}| \le \sqrt{PP_{S_2}}$.

Continuing with our bound, we have 
\begin{align*}
h(X+ (1-\alpha)S_2 +Z - k (S_1 + \alpha S_2))  \le \frac{1}{2}\log (2 \pi e (\E(X+ (1-\alpha)S_2 +Z - k (S_1 + \alpha S_2))^2)).
\end{align*}
In turn, we have
\begin{align*}
\E(X+ (1-\alpha)S_2 +Z - k (S_1 + \alpha S_2))^2 &= P + (1 -\alpha)^2 P_2 + 2(1-\alpha)\rho_{XS_2}\\
& \quad +N - \frac{((1-\alpha)\alpha P_2 + \alpha \rho_{XS_2} + \rho_{X S_1})^2}{P_1 + \alpha^2 P_2} \\
& := MSE(\alpha, \rho_{XS_1}, \rho_{XS_2}).
\end{align*}
Note now that for $\alpha$ fixed, $MSE(\alpha, \rho_{XS_1}, \rho_{XS_2})$ is a concave (quadratic) function of $\rho_{XS_1}$ and $\rho_{XS_2}$, and the constraints $|\rho_{XS_1}| \le \sqrt{PP_{S_1}}$ and $|\rho_{XS_2}| \le \sqrt{PP_{S_2}}$ are linear constraints. Hence, we can find the maximum value using convex optimization. Letting $\rho_{XS_1}^*$ and $\rho_{XS_2}^*$ denote the optimal solutions to the optimization problem, we arrive at the lower bound for the achievable distortion:
\begin{align*}
D(P)_{\rm min} \ge \frac{ (\frac{\alpha^2 P_1 P_2}{P_1 + \alpha^2 P_2})N}{MSE(\alpha, \rho_{XS_1}^*, \rho_{XS_2}^*)}.
\end{align*}
\end{proof}

\subsection*{Comparison of bounds}
As we mentioned earlier, Theorem~\ref{thm:9} includes the bounds in Proposition~\ref{prop7} and~\ref{prop8}. It can also be larger, as we now show in an example. 

Let $P_1 = 1$, $N = 1$ and $P =1$. We vary $P_2$ and compare the bounds obtained with different values of $P_2$. The plots comparing the various upper and lower bounds are given in Figure~\ref{fig:svfig1}.
\begin{figure}[!h]
\psfrag{Proposition 7}[l]{Proposition \ref{prop7}}
\psfrag{Proposition 8}[l]{Propostion \ref{prop8}}
\psfrag{Theorem 9}[l]{Theorem \ref{thm:9}}
\psfrag{Upper bound}[l]{Theorem \ref{thm:8}}
\begin{center}
\scalebox{0.7}{\includegraphics{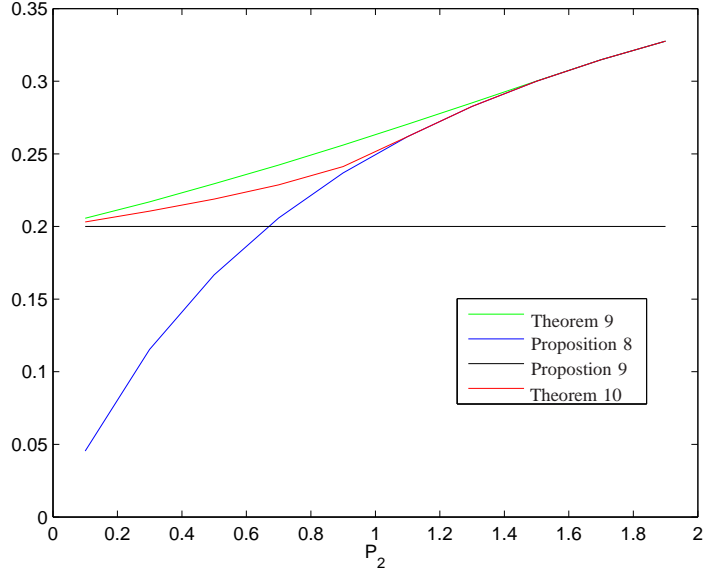}}
\caption{Comparison of bounds for estimation with a helper that knows both the interference and the source. This figure gives a plot of the various bounds on distortion for different values of $P_2$} \label{fig:svfig1}
\end{center}
\end{figure}
As can be seen from Figure~\ref{fig:svfig1}, the lower bound given by Theorem~\ref{thm:9} can be strictly better than that given by previous lower bounds. As we noted in the proof of Theorem~\ref{thm:9}, the improvement comes from two aspects: giving $S_1+ \alpha S_2$ to the decoder and a more careful bounding via Linear Minimum Mean Square Error Estimation and Convex Optimization. The reader may ask whether it is necessary to use $S_1 +\alpha S_2$ instead of just setting $\alpha =1$ or $\alpha \to \infty$ and calculate the bounds more carefully using Linear Estimation and Convex Optimization. In our simulation, we noted that for some values of $P_2$, moderate values of $\alpha$, such as $\alpha =2,3$ give better bounds than $\alpha = 1$ or $\alpha = 20$. This shows that using $S_1 + \alpha S_2$ does lead to better bounds than using $S_1 + S_2$ or $S_2$ alone. 

\subsection{Constant gap results}
In our simulations, we note that the upper bound and lower bounds appear to be quite close. This suggests that constant multiplicative gap results on the distortion may be possible, under some conditions on the input, source and interference powers. This is indeed the case as stated in our next result that when the interference power is larger than a threshold (that depends on the system parameters), the lower and upper bounds are within a constant multiplicative gap.
\begin{theorem} \label{thm:10}
If 
\begin{align}
\sqrt{P_2} \geq \frac{\sqrt{\gamma^2P+\gamma\sqrt{ P}(2+\gamma\sqrt{ P})(P(1-\gamma^2)+N)}-\gamma\sqrt{P}}{\gamma\sqrt{ P}(2+\gamma\sqrt{ P})}, \label{eqn:term2}
\end{align}
with $\gamma = \sqrt{\frac{\epsilon (P+N)}{2P}}$, $0\leq\epsilon\leq\frac{P}{P+N}$, then the multiplicative gap between the upper bound in Theorem~\ref{thm:8}, $D_{\rm achievable}$, and the lower bound in Proposition~\ref{prop7}, $D_{\rm lb}$, is at most $1/(1-\e)$. That is,  
\begin{align*}
\frac{D_{\rm achievable}}{D_{\rm lb}} \leq \frac{1}{1-\epsilon}.
\end{align*}
\end{theorem}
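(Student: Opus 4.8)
The plan is to exploit that $D_{\rm achievable}$ in Theorem~\ref{thm:8} is itself the result of a minimization over the free parameters $(\alpha,\beta)$, so that evaluating its expression at \emph{any} feasible pair already yields a valid upper bound on $D_{\rm achievable}$; I would then bound the ratio of this upper bound to $D_{\rm lb}$ from Proposition~\ref{prop7}. Writing $W=P+N$ and recalling $\gamma=\sqrt{\epsilon W/(2P)}$, so that $\gamma^2\le 1/2$ under the constraint $0\le\epsilon\le P/W$, the guiding idea is that $\epsilon$ should govern exactly how much the ``private power'' $P(1-\alpha^2-\beta^2)$ is allowed to shrink relative to the full power $P$ appearing in $D_{\rm lb}$. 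A natural choice is to take $\alpha=0$ and $\beta<0$ (spending power to cancel the interference $S_2$), with $\alpha^2+\beta^2$ tied to $\gamma^2$; this links the second denominator factor $1+\frac{P(1-\alpha^2-\beta^2)}{N}$ of Theorem~\ref{thm:8} to the factor $1+\frac{P}{N}$ of Proposition~\ref{prop7} up to the multiplicative slack permitted by $\epsilon$.

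With such a choice the target $D_{\rm achievable}/D_{\rm lb}\le 1/(1-\epsilon)$ becomes $F_1F_2\ge(1-\epsilon)G_1G_2$, where $F_1,F_2$ denote the two denominator factors of Theorem~\ref{thm:8} and $G_1=1+P_1/P_2$, $G_2=1+P/N$ are those of Proposition~\ref{prop7}. The key simplification is that, precisely because $\alpha=0$ makes the numerator $(\alpha\sqrt P+\sqrt{P_1})^2$ equal to $P_1$, the inequality is \emph{affine} in the source power $P_1$ after clearing denominators, so I would verify it at its two governing regimes. The $P_1\to 0$ (constant) part collapses to the noise-factor requirement $\alpha^2+\beta^2\le 2\gamma^2$, which holds by construction, while the $P_1\to\infty$ (slope) part reduces to comparing the effective interference-plus-noise at the first stage, $D_{\rm en}=(\beta\sqrt P+\sqrt{P_2})^2+P(1-\alpha^2-\beta^2)+N$, against $P_2$ (up to the slack factor $F_2/((1-\epsilon)G_2)$). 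Affineness in $P_1$ is exactly what lets the two endpoint checks cover all $P_1\ge 0$.

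The crux is the slope condition. Substituting the chosen $(\alpha,\beta)$ into $D_{\rm en}$ and collecting terms, the requirement takes the form of a quadratic inequality in $s=\sqrt{P_2}$ with coefficients built from $u=\gamma\sqrt P$ and $B=P(1-\gamma^2)+N$; the positive root of the associated quadratic is then the threshold on $\sqrt{P_2}$ displayed in~\eqref{eqn:term2}, and the structure $\gamma\sqrt P(2+\gamma\sqrt P)$ there is the signature of the cross terms in $(\beta\sqrt P+\sqrt{P_2})^2$ that must be retained rather than discarded. I would close by confirming the feasibility constraints $|\beta|\le 1$ and $\alpha^2+\beta^2\le 1$, which are automatic in the large-$P_2$ regime of interest because $\gamma^2\le 1/2$.

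The main obstacle is twofold. First, the ratio genuinely depends on $P_1$ while the hypothesis does not; the affine-in-$P_1$ reduction is what replaces ``for all $P_1$'' by two endpoint checks, and arranging for the slope endpoint (rather than the constant endpoint) to be binding is what permits the balanced split between $F_1$ and $F_2$ instead of the cruder, strictly more demanding, separate requirements $F_1\ge G_1$ and $F_2\ge(1-\epsilon)G_2$. Second, reproducing the explicit threshold in~\eqref{eqn:term2} is a genuine computation: one must solve the quadratic in $\sqrt{P_2}$ and identify its positive root with the stated expression, so the bookkeeping of the cross terms and of the dependence $\gamma=\gamma(\epsilon)$ is where the real work lies.
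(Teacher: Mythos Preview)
Your plan has a genuine gap: the parameter choice $\alpha=0$ cannot reproduce the threshold~\eqref{eqn:term2}. The paper takes $\alpha=-\beta=\gamma$, i.e.\ it spends power both to \emph{amplify} $S_1$ (via $\alpha>0$) and to partially cancel $S_2$ (via $\beta<0$). With this choice one has exactly $F_2=(1-\epsilon)G_2$ (your ``constant endpoint''), and then the paper shows directly that the hypothesis on $\sqrt{P_2}$ implies $F_1\ge G_1$. The quadratic coefficient $\gamma\sqrt{P}\,(2+\gamma\sqrt{P})$ in~\eqref{eqn:term2} is precisely $(\gamma\sqrt{P}+1)^2-1$, coming from the boosted numerator $(\alpha\sqrt{P}+1)^2$ of $F_1$ when $\alpha=\gamma$; it is \emph{not} a signature of the cross terms in $(\beta\sqrt{P}+\sqrt{P_2})^2$ as you assert. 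If you instead set $\alpha=0$ and $\beta^2=2\gamma^2$, the numerator of $F_1$ is just $P_1$, the comparison $F_1\ge G_1$ becomes $D_{\rm en}\le P_2$, and after cancelling the $P_2$ term this is a \emph{linear} inequality in $\sqrt{P_2}$ with threshold $(P+N)/(2\sqrt{2}\,\gamma\sqrt{P})$, which is both a different expression and a strictly larger (hence weaker) condition than~\eqref{eqn:term2}.

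Your affine-in-$P_1$ reduction is correct as far as it goes, but it is not where the content lies: since $\alpha^2+\beta^2=2\gamma^2$ forces $F_2=(1-\epsilon)G_2$ exactly, your ``slope'' condition collapses to $P_2\ge D_{\rm en}$, i.e.\ to $F_1\ge G_1$, which is exactly the ``cruder'' separate comparison you dismiss. In short, the paper \emph{does} argue via $F_1\ge G_1$ and $F_2=(1-\epsilon)G_2$, and the missing idea in your proposal is that half of the uncoded power should go to amplifying $S_1$ rather than all of it to cancelling $S_2$.
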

\begin{proof}
We begin the proof by evaluating the distortion achieved by Theorem~\ref{thm:8} for $\alpha=-\beta= \sqrt{\frac{\epsilon (P+N)}{2P}}$. We have 
\begin{equation}\label{eqn:term1}
1+\frac{P(1-\alpha^2-\beta^2)}{N} = 1+\frac{P}{N}\left(1-\frac{\epsilon(P+N)}{P}\right)=(1-\epsilon) \left(1+\frac{P}{N}\right).
\end{equation}
Now from the condition on $P_2$ stated in the Theorem (see \eqref{eqn:term2}), it follows that, 
\begin{align*}
P_2\alpha\sqrt{P}(2+\alpha\sqrt{P})+\sqrt{P_2}2\alpha\sqrt{P}-P(1-\alpha^2)-N&\geq 0\\
\Rightarrow P_2\alpha\sqrt{P}(2+\alpha\sqrt{P})+\sqrt{P_2}(2+\alpha\sqrt{P})\alpha\sqrt{P}-\alpha^2P-P(1-2\alpha^2)-N&\geq 0\\
\Rightarrow P_2\left[\alpha\sqrt{P}(2+\alpha\sqrt{P})-\frac{\alpha^2P}{\sqrt{P_2}}+\frac{(2+\alpha\sqrt{P})\alpha\sqrt{P}}{\sqrt{P_2}}-\frac{\alpha^2P}{P_2}\right]&\geq P(1-2\alpha^2)+N\\
\Rightarrow P_2\left[\left(\alpha\sqrt{P}+\frac{\alpha\sqrt{P}}{\sqrt{P_2}}\right)\left(2+\alpha\sqrt{P}-\frac{\alpha\sqrt{P}}{\sqrt{P_2}}\right)\right] &\geq P(1-2\alpha^2)+N\\
\Rightarrow P_2\left[\left(1+\alpha\sqrt{P}\right)^2-\left(1-\alpha\sqrt{\frac{P}{P_2}}\right)^2\right] &\geq P(1-2\alpha^2)+N\\
\Rightarrow \frac{(\alpha\sqrt{P}+1)^2}{(-\alpha\sqrt{\frac{P}{P_2}}+1)^2P_2+P(1-2\alpha^2)+N} &\geq \frac{1}{P_2}.
\end{align*}

Therefore we have, 
\begin{equation*}
\left(1+\frac{(\alpha\sqrt{P}+1)^2P_{1}}{(\beta\sqrt{\frac{P}{P_2}}+1)^2P_2+N+P(1-\alpha^2-\beta^2)}\right)\left(1+\frac{P(1-\alpha^2-\beta^2)}{N}\right) \geq \left(1+\frac{P_1}{P_2}\right)\left(1+\frac{P}{N}\right)(1-\epsilon),
\end{equation*}
which implies 
\begin{equation*}
\frac{D_{\rm achievable}}{D_{\rm lb}} \leq \frac{1}{1-\epsilon}.
\end{equation*}
\end{proof}
\section{Conclusion}
In this paper, we defined and analyze the problem of estimation with a helper that knows the interference. In the discrete memoryless case when the interfering signal, $S_2$, is known causally at the encoder, we characterized the distortion-cost region. When $S_2$ is known noncausally, we proposed an achievability scheme based on hybrid coding. In the binary estimation with a helper problem, we also proposed two lower bounds. Using the upper and lower bounds, we characterized the distortion-cost region when the problem parameters $C$, $p_1$ and $p_2$ satisfy one of several nontrivial conditions. 

In the Gaussian case, we derived a lower bound based on a recent result by Verd\'{u} between divergence and mismatched estimation. We showed through numerical simulations that this lower bound can be strictly better than previous lower bound derived in \cite{Grover--Sahai2011}. Similar to the binary case, we also characterized the distortion-cost region when the problem parameters $P$, $P_1$ and $P_2$ satisfy one of several conditions.

We also extended our analysis in the Gaussian case to consider the case when the helper knows both $S_1$ and $S_2$ noncausally. In this case, we derived a lower bound that contains previous lower bounds proposed in~\cite{Huang--Narayanan2011} and can be strictly better. We also obtained constant multiplicative gap results for this setting.

In deriving our lower bound for the Gaussian case when only the interfering signal is known at the helper, we used a relationship between mismatched estimation and divergence. In the discrete case, a relationship between divergence and Hamming distortion exists too. One such relationship is Marton's inequality~\cite[Lemma 6.3]{Gray2010}. An interesting open question is whether one can use such relationships to derive a lower bound for the binary case that is strictly better than the bounds we proposed. 


\section*{Acknowledgment}
We thank Mr Gowtham Kumar of Stanford University for discussions that motivated this work, and Professor Sriram Vishwanath of The University of Texas at Austin for helpful discussions during the course of this work. 

\bibliographystyle{IEEEtran}
\bibliography{estimate}

\appendices
\section{Sketch of Achievability for Theorem~\ref{thm:1}} \label{appen_thm1}
We use block Markov coding over $B$ blocks. The scheme in each block is basically a separation scheme, where we use the random variable $U$ for transmission of a message from the previous block. The message itself is a Wyner-Ziv description~\cite{Wynerrd} of $S_2^n$ from the previous block. More concretely, in each block $j \in [1:B]$, the \textit{transmission} codebook is generated as follows: Generate $2^{n(I(U;Y) - \e)}$ $U^n(l)$ sequences according to $\prod_{i=1}^n p(u_i)$. The \textit{compression} codebook is generated by the following two step procedure: Generate $2^{n(I(V;S_2, U) + \e)}$ $V^n$ sequences according to $\prod_{i=1}^n p(v_i)$. Partition the set of $V^n$ sequences into $2^{n (I(V;S_2|U,Y) + 2\e)}$ bins, $\Bc(M_j)$. 

For encoding, at the end of block $j$, assume that the codeword $U^n(m_j)$ was sent. The encoder then finds a $V^n(j)$ sequence that is jointly typical with $(U^n(m_j), S_2^n(j))$. If there is more than one such sequence, it picks from one uniformly at random from the set of jointly typical sequences. This operation succeeds with high probability as $n \to \infty$ since there are $2^{n(I(V;S_2, U) + \e)}$ $V^n(j)$ sequences. The encoder then finds the bin index $M_{j+1}$ such that $V^n \in \Bc(M_{j+1})$. It then sends out the index $M_{j+1}$ in block $j+1$ by selecting $U^n(j+1)$ and sending out the $X^n$ sequence encoded as $x_i = f(u_i(M_{j+1}), s_{2i}(j+1))$. For the first block, the encoder sends an arbitrary message. This encoding operation requires the condition that
\begin{align*}
I(U;Y) - \e > I(V;S_2|U,Y) + 2\e.
\end{align*}

For decoding, at the end of block $j+1$, the decoder first decodes the bin index $M_{j+1}$. From standard arguments (see for e.g.~\cite[Chapter 7]{El-Gamal--Kim2010}), this decoding operation succeeds with high probability provided
\begin{align*}
I(U;Y) - \e > I(V;S_2|U,Y) + 2\e.
\end{align*} 

Once the decoder recovers the bin index $M_{j+1}$, it then recovers the true $V^n(j)$ codeword by looking for $v^n(j) \in \Bc(M_{j+1})$ such that $(u^n(m_j), y^n(j), v^n(j)) \in \aep$. It then reconstructs $S^n_1(j)$ as $\sh_1(u_i(m_j), y_i(j), v_i(j))$ for $i \in [1:n]$. From the rates given and standard arguments (see~\cite[Chapter 3 and Chapter 11]{El-Gamal--Kim2010}), the expected distortion for $S_1^n(j)$ in block $j$ is less than or equal to $\E d(S_1, \Sh_1(U,V,Y)) + \d(\e)$, where $\d(\e) \to 0$ as $\e \to 0$. This decoding and reconstruction procedure applies for the first $B-1$ blocks and for the $B$th block, we simply reconstruct $S_1^n(B)$ according to an arbitrary symbol $\sh_1 \in \mathcal{\Sh}_1$, incurring a distortion that is bounded by $D_{\rm max}$, where $D_{\rm \max} := \max_{\sh_1} \E d(S_1, \sh_1)$. The per symbol distortion over $B$-blocks is then upper bounded by $D + \d'(\e)$ where $\d'(\e) \to 0$ as $\e \to 0$.

We now note that the above achievability scheme takes care of the case when $I(U;Y) > I(V;S_2 |U,Y)$. The boundary case of $I(U;Y) = I(V;S_2 |U,Y)$ can be handled as follow. Assume first that $I(U;Y) >0$. Define $U' = (U,Q)$, $Q \in \{1,2\}$ independent of other random variables, $V' = V$ when $Q = 1$ and $V = \emptyset$ when $Q = 2$. $X = f(U,S_2)$ regardless of $Q$ and $\sh_1(U',V',Y') = \sh_1(U,V,Y)$ if $Q = 1$ and $\sh^*_1$ if $Q = 2$, where $\sh^*_1$ is an arbitrary symbol belonging to $\mathcal{\Sh}_1$. Let $\P(Q = 1) = p_1$. We have
\begin{align*}
I(U';Y') &\ge I(U;Y), \\
I(V';S_2 |U',Y') &= p_1I(V;S_2|U,Y), \\
\E d(S_1, \Sh_1(U',V',Y')) &\le p_1 \E d(S_1, \Sh_1(U,V,Y)) +(1-p_1)D_{\rm max}.  
\end{align*}
With this choice of random variables, $I(U';Y') > I(V'; S_2|U', Y')$ whenever $p_1 <1$ and we can then apply the achievability scheme we discussed, at the expense of larger distortion. By choosing $p_1(n) = 1- \e_n$, where $\e_n \to 0$ as $n \to \infty$, we can apply our achievability scheme for blocklength $n$ sufficiently large, with the resulting expected distortion converging to $\E d(S_1, \Sh_1(U,V,Y))$ as $n \to \infty$. 

For the case of $I(U;Y) = I(V;S_2|U,Y) = 0$, it can be shown that in this case, the decoder can perform the reconstruction based only on $\sh_1(Y_i, U_i)$ for $i \in [1:n]$. Achievability in this case requires no block Markov coding. We only need to generate one transmission codeword $U^n$ and transmit $X^n$ according to $x_i = f(u_i, s_{2i})$. The decoder reconstructs $S_1^n$ as $\sh_1(u_i, y_i)$ for $i \in [1:n]$.

\section{Proof of Claim~\ref{clm1}} \label{appen_claim}
The causal region in Theorem~\ref{thm:1} is given by 
\begin{align*}
&\min \quad \E d(S_1, \Sh_1(U,V,Y)) \\
&\mbox{subject to }\\
& \quad I(U;Y) \ge I(V;S_2|U,Y) \\
& \quad \E X \le C
\end{align*}
for some $p(u)p(v|u,s_2)$ and function $x(u,s_2)$. We prove that $D(0.11)_{\rm min-causal}>0$ by contradiction. Suppose that there exists $U,V$ satisfying the constraints such that $ \E d(S_1, \Sh_1(U,V,Y))  = 0$. This implies in particular that $H(S_1 |U,V,Y) = 0$. Hence,
\begin{align*}
 I(V;S_2|U,Y) & =  I(V, S_1;S_2|U,Y) \\
 & \ge  I(S_1;S_2|U,Y) \\
 &  = H(S_1 |U,Y) - H(S_1|S_2, U,Y) \\
 & =  H(S_1 |U,Y) \\
 & = H(S_1,Y|U) - H(Y|U) \\
 & = H(S_1) + H(Y|U,S_1) - H(Y|U).
\end{align*}
The last step follows from $U$ being independent of $S_1$. Since we require $I(U;Y) \ge I(V;S_2|U,Y)$, and we know that $ H(S_1) + H(Y|U,S_1) - H(Y|U)\le I(V;S_2|U,Y)$, a necessary condition for $ \E d(S_1, \Sh_1(U,V,Y))  = 0$ is
\begin{align*}
&H(S_1) + H(Y|U,S_1) - H(Y|U) \le I(U;Y), \\
&\Rightarrow H(S_1) + H(Y|U,S_1)  \le H(Y).
\end{align*}
Define the subsets of $\Uc$ as follows. $\Uc_0 := \{u: x(u,s_2) = 0\}$; $\Uc_1 := \{u: x(u,s_2) = 1\}$; $\Uc_s := \{u: x(u,s_2) = s_2\}$; and $\Uc_{\bar{s}} := \{u: x(u,s_2) = 1\oplus s_2\}$. Note the following.
\begin{itemize}
\item For $u \in \Uc_0$, $H(Y|U = u,S_1) = 1$ since $S_2$ is independent of $U,S_1$.
\item For $u \in \Uc_1$, $H(Y|U = u,S_1) = 1$ since $S_2$ is independent of $U,S_1$.
\item For $u \in \Uc_s$, $H(Y|U = u,S_1) = 0$ since $S_2\oplus X = 0$ and $Y = S_2\oplus X \oplus S_1$.
\item For $u \in \Uc_{\bar{s}}$, $H(Y|U = u,S_1) = 0$ since $S_2\oplus X = 1$ and $Y = S_2\oplus X \oplus S_1$.
\end{itemize}
Further, define $p_{u0} = \sum_{u \in \Uc_0} p(u)$; $p_{u1} = \sum_{u \in \Uc_1} p(u)$; $p_s = \sum_{u \in \Uc_s} p(u)$; and $p_{\bar{s}} = \sum_{u \in \Uc_{\bar{s}}} p(u)$. Then,
\begin{align*}
H(S_1) + H(Y|U,S_1) &= H_2(p_1) + p_{u0} + p_{u1} \\
& = H_2(p_1) + 1-C_s,
\end{align*}
where $C_s = p_s + p_{\bar{s}}$. 

The cost constraint can be expressed as
\begin{align*}
\E X &= p_1 + \frac{1}{2} (p_s + p_{\bar{s}}) \\
& = p_1 + \frac{1}{2} C_s \\
& \le C,
\end{align*}
where $C = 0.11$. In particular, the cost constraint implies that $C_s \le 2C$. Hence,
\begin{align*}
H(S_1) + H(Y|U,S_1) \ge 1+H_2(q) - 2C.
\end{align*} 
Now, since $p_1 = 0.1$ and $C = 0.11$, we see that
\begin{align*}
H(S_1) + H(Y|U,S_1) &> 1 \\
& \ge H(Y),
\end{align*}
which is a contradiction. 
\section{Derivation of Theorem~\ref{thm:5}} \label{appen_g1}
The derivation of Theorem~\ref{thm:5} follows from choosing the auxiliary random variables in Theorem~\ref{thm:2}. 
Starting from Theorem~\ref{thm:2}, let
\begin{align*}
U = X' + \gamma S_2, \\
X = \alpha \sqrt{\frac{P}{P_2}}S_2 + X',\\
X' \sim N(0, P'),\\
\E(S_2 X') = \beta \sqrt{P_2 P'},
\end{align*}
where $P'$ is a quantity to be calculated, and $\alpha$ and $\beta$ are restricted to be between -1 to 1 to satisfy the power constraints. Observe that $X$ is a function of $U, S_2$ as required. For convenience, we will use the notation $X|Y$ to denote Minimum Mean Square Error of $X$ given $Y$. The reconstruction function is given by 
\begin{align*}
\Sh_1 = \E(S_1 |U,Y).
\end{align*}

We now determine $P'$ from other variables using $\E X^2 = P$.
\begin{align*}
\E X^2 = \alpha^2 P + P' + 2\alpha \beta \sqrt{P P'} = P.
\end{align*}

Solving for $P'$ gives
\begin{align*}
\sqrt{P'} = \frac{-2\alpha \beta \sqrt{P} + \sqrt{4\alpha^2 \beta^2 P + 4(1-\alpha^2)P}}{2}.
\end{align*}

To satisfy the constraint in Theorem~\ref{thm:2}, we require
\begin{align*}
h(U|S_2) > h(U|Y).
\end{align*}
Since $U,S_2,Y$ are all Gaussian random variables, this condition reduces to
\begin{align*}
U|S_2 > U|Y.
\end{align*}
Now, 
\begin{align*}
U|S_2 &= X'|S_2\\
& = (1-\beta^2)P'.
\end{align*}
As for $U|Y$, we have
\begin{align*}
U|Y = \E U^2 - \frac{(\E (UY))^2}{\E Y^2},
\end{align*}
and
\begin{align*}
\E U^2 &= P' + 2\gamma \beta \sqrt{P' P_2} + \gamma^2 P_2, \\
\E(UY) &= \E((X' + \gamma S_2)(\alpha \sqrt{\frac{P}{P_2}} S_2 + X')) + \beta\sqrt{P' P_2} + \gamma P_2 \\
& = \alpha \beta \sqrt{P P'} + P' + \alpha \gamma \sqrt{P P_2} + \gamma \beta \sqrt{P' P_2}+ \beta\sqrt{P' P_2} + \gamma P_2,\\ 
\E Y^2 &= \E X^2 + \E S_1^2 + \E S_2^2 + 2 \E (S_2 (\alpha \sqrt{\frac{P}{P_2}} S_2 + X')) \\
& = P + 1 + P_2  + 2 \alpha \sqrt{P P_2} + 2\beta \sqrt{P' P_2}.
\end{align*}
The expected distortion is then given by is $S_1|(Y,U)$, which is
\begin{align*}
S_1|(U,Y) &= 1 - [\E(US_1) \E(YS_1) ] \left[\begin{array}{cc} \E U^2 & \E(UY) \\ \E(UY) & \E Y^2 \end{array}\right]^{-1}\left[ \begin{array}{c} \E (US_1) \\ \E(Y S_1)\end{array}\right].
\end{align*}

We note now that $\E (US_1) = 0$ and $\E (S_1Y) = 1$. The lower bound therefore works out to
\begin{align*}
S_1|(U,Y) = 1 - \frac{\E U^2}{\E Y^2 \E U^2 - (\E UY)^2}.
\end{align*}

\section{Proof of Theorem~\ref{thm:7}} \label{appen_glb}
As Theorem~\ref{thm:7} is a generalization of Proposition~\ref{prop6}, the proof of this Theorem also follows closely that of Proposition~\ref{prop6}. As such, we will only mention areas where there are differences from the proof in Proposition~\ref{prop6} and refer readers to Proposition~\ref{prop6} for the rest of the proof.

As we mentioned before, we generalize the bound by not assuming that $X^n \equiv 0$. Instead, let us assume that under the mismatched distribution $Q$, $X$ is distributed i.i.d according to $X = cS_2 + Z$, where $Z \sim N(0, rP)$ independent of $S_2$ and $r \ge 0$. Under this assumption, $MSE_Q(\gamma)$ and $D(P_{Y^n| S_2^n}^{(\gamma)}||Q_{Y^n| S_2^n}^{(\gamma)})$ used in the proof of Proposition~\ref{prop6} are now different. The bounds on $MSE_Q(\gamma)$ and the divergence between the true distribution and the mismatched distribution are therefore different. We calculate them as follow.

Define $\alpha$ as
\[
\alpha := \frac{1}{1+ \gamma P_I},
\]
where $P_I = (1+c)^2P_2 + rP$. Let $\E||X^n||_2^2 = nx^2$, where $x^2 \le P$. We now have, for $MSE_Q(\gamma)$, 
\begin{align*}
\frac{1}{n}MSE_Q(\gamma)  &= \frac{1}{n}\E||S_1^n - \alpha(X^n + S_1^n + S_2^n)||^2 \\
&= \frac{1}{n}\E||S_1^n - \alpha(S_1^n + S_2^n) ||^2 - \frac{2 \alpha}{n} \E <S_1^n - \alpha(S_1^n+S_2^n), X^n> + \frac{\alpha^2}{n} \E||X^n ||^2 \\
& = \frac{(1-\alpha)^2}{\gamma} + \alpha^2P_2  + \frac{2 \alpha^2}{n} \E <S_2^n, X^n> + \alpha^2 x^2 \\
&=\frac{\gamma P_I^2}{(1+ \gamma PI)^2}+ \frac{P_2}{(1+ \gamma P_I)^2}  + \frac{2 \alpha^2}{n} \E <S_2^n, X^n> + \alpha^2 x^2 \\
& = \frac{P_I}{1+ \gamma P_I} - \frac{P_I}{(1+ \gamma P_I)^2} +  \frac{P_2}{(1+ \gamma P_I)^2}+ \frac{2}{n(1+ \gamma P_I)^2} \E <S_2^n, X^n> + \frac{1}{(1+ \gamma P_I)^2} x^2.
\end{align*}
It remains to calculate an upper bound on the divergence. As before, $P_{Y^n| S_2^n}^{(\gamma)} \sim N(S_2^n + X^n, \frac{1}{\gamma}I)$, but now, $Q_{Y^n| S_2^n}^{(\gamma)} \sim N((1+c)S_2^n, (\frac{1}{\gamma} + rP)I)$. The (conditional) divergence is now given by
\begin{align*}
\frac{2}{n}D(P_{Y^n| S_2^n}^{(\gamma)}||Q_{Y^n| S_2^n}^{(\gamma)}) &= \frac{1}{1+ \gamma rP} -1 - \log(\frac{1}{1+ \gamma rP}) + \frac{ \gamma}{n(1+ \gamma rP)} ||X^n - cS_2^n||_2^2.
\end{align*}
Combining the divergence bound after taking expectation over $S_2^n$ with the $MSE_Q$ bound after integration gives (see \eqref{eqn:1} in the proof of Proposition~\ref{prop6}) {\allowdisplaybreaks
\begin{align*}
\frac{\gamma_1 - \gamma_0}{n}MMSE (\gamma_0) &\ge \log(\frac{1+ \gamma_1 P_I}{1+ \gamma_0 P_I}) + \frac{1}{(1+ \gamma_1 P_I)} - \frac{1}{(1+ \gamma_0 P_I)} -\frac{P_2}{P_I(1+ \gamma_1 P_I)}\\
& \quad +\frac{P_2}{P_I(1+ \gamma_0 P_I)} - \frac{c^2\gamma_1}{1+\gamma_1rP}P_2  -\frac{1}{1+ \gamma_1 rP} +1 + \log(\frac{1}{1+ \gamma_1 rP}) \\
& \quad + (\frac{1}{P_I (1+ \gamma_0 P_I)} - \frac{1}{P_I (1+ \gamma_1 P_I)} - \frac{\gamma_1}{1+\gamma_1rP})x^2 \\
&\quad + 2(\frac{1}{P_I (1+ \gamma_0 P_I)} - \frac{1}{P_I (1+ \gamma_1 P_I)} +  \frac{c\gamma_1}{1+\gamma_1rP})\frac{\E<S_2^n, X^n>}{n} \\
&\ge \log(\frac{1+ \gamma_1 P_I}{1+ \gamma_0 P_I}) + \frac{1}{(1+ \gamma_1 P_I)} - \frac{1}{(1+ \gamma_0 P_I)} -\frac{P_2}{P_I(1+ \gamma_1 P_I)}\\
& \quad +\frac{P_2}{P_I(1+ \gamma_0 P_I)} - \frac{c^2\gamma_1}{1+\gamma_1rP}P_2 -\frac{1}{1+ \gamma rP} +1 + \log(\frac{1}{1+ \gamma rP}) \\
& \quad + (\frac{1}{P_I (1+ \gamma_0 P_I)} - \frac{1}{P_I (1+ \gamma_1 P_I)} - \frac{\gamma_1}{1+\gamma_1rP})x^2 \\
&\quad - |2(\frac{1}{P_I (1+ \gamma_0 P_I)} - \frac{1}{P_I (1+ \gamma_1 P_I)} +  \frac{c\gamma_1}{1+\gamma_1rP})|\sqrt{P_2} |x|.
\end{align*}\allowdisplaybreaks}
The final line follows from successive application of Cauchy-Schwartz on $\E<S_2^n, X^n>$. Minimizing the bound over $|x| \le \sqrt{P}$ then gives the generalized lower bound. Let $a = (\frac{1}{P_I (1+ \gamma_0 P_I)} - \frac{1}{P_I (1+ \gamma_1 P_I)} - \frac{\gamma_1}{1+\gamma_1rP})$ and $b =  |2(\frac{1}{P_I (1+ \gamma_0 P_I)} - \frac{1}{P_I (1+ \gamma_1 P_I)} +  \frac{c\gamma_1}{1+\gamma_1rP})|\sqrt{P_2}$. We note that $b \le 0$ and let $f(x) = ax^2 - b|x|$.

We note that if $a\le 0$, $f(x)$ is symmetric and decreasing in $x$. Therefore, we set $x^* = \sqrt{P}$. If $a > 0$, then $x^* = b/(2a)$ if $b/(2a) < \sqrt{P}$ and $x^* = \sqrt{P}$ otherwise. The generalized lower bound is now given by
\begin{align*}
\frac{\gamma_1 - \gamma_0}{n}MMSE (\gamma_0) & \ge \log(\frac{1+ \gamma_1 P_I}{1+ \gamma_0 P_I}) + \frac{1}{(1+ \gamma_1 P_I)} - \frac{1}{(1+ \gamma_0 P_I)} -\frac{P_2}{P_I(1+ \gamma_1 P_I)}\\
& \quad +\frac{P_2}{P_I(1+ \gamma_0 P_I)} - \frac{c^2\gamma_1}{1+\gamma_1rP}P_2  -\frac{1}{1+ \gamma rP} +1 + \log(\frac{1}{1+ \gamma rP}) \\
& \quad + (\frac{1}{P_I (1+ \gamma_0 P_I)} - \frac{1}{P_I (1+ \gamma_1 P_I)} - \frac{\gamma_1}{1+\gamma_1rP})x^{*2} \\
&\quad - |2(\frac{1}{P_I (1+ \gamma_0 P_I)} - \frac{1}{P_I (1+ \gamma_1 P_I)} +  \frac{c\gamma_1}{1+\gamma_1rP})|\sqrt{P_2} x^*,
\end{align*}
where we optimize over $\gamma_1 \ge \gamma_0$, $r \ge 0$ and $c \in \mathcal{R}$. Noting that $MMSE(1)/n = D(P)_{\rm min}$ then completes the proof.

\section{Sketch of Theorem~\ref{thm:8}} \label{appen_gsv}
The achievability scheme in Theorem~\ref{thm:8} closely resembles \cite{Huang--Narayanan2011} and involves allocating a fraction of the power for transmitting a message (corresponding to a compressed version of the desired source $S_1$) using dirty paper coding and using the remaining power for uncoded transmission of a linear combination of $S_1$ and $S_2$. The compressed index is generated based on Wyner-Ziv coding and then transmitted reliably over channel using dirty paper coding as in \cite{Huang--Narayanan2011}. The bin indices in Wyner-Ziv coding are transmitted at a rate equal to the capacity of the dirty paper channel. Note that the interference in this channel also includes the signal due to uncoded transmission created at the encoder. The compressed index is decoded at the receiver using the receiver side information $Y$ and both the decoded codeword and $Y$ are used to estimate the source $S_1$. Uncoded transmission helps in improving the signal to noise ratio of the desired signal $S_1$ in $Y$. 

Let 
\begin{align*}
U &= X' + \left(\alpha\sqrt{\frac{P}{P_1}}+1\right)S_1 + \left(\beta \sqrt{\frac{P}{P_2}}+1\right)S_2\\
X &= X' + \alpha\sqrt{\frac{P}{P_1}}S_1 + \beta\sqrt{\frac{P}{P_2}}S_2\\
X' &\sim \mathcal{N}(0,P(1-\alpha^2-\beta^2)),\\
Y &= X+S_1+S_2+Z,
\end{align*}
where $X'$ is independent of $S_1$ and $S_2$ and corresponds to the coded part of the signal. Auxiliary $U$ is used to cancel the total interference to $X'$ as in dirty paper coding. The total interference is equal to $\left(\alpha\sqrt{\frac{P}{P_1}}+1\right)S_1 + \left(\beta \sqrt{\frac{P}{P_2}}+1\right)S_2$. As a result, a clean channel (without interference) is created between $X'$ and $Y$, which can be used to transmit the description of $S_1$ at a Wyner-Ziv rate equal to  $\frac{1}{2}\log\left(1+\frac{P(1-\alpha^2-\beta^2)}{N}\right)$. The received signal $Y$ can also be seen as a noisy version of the desired signal $S_1$, and is used along with the message transmitted to reconstruct $S_1$. 

Therefore, the resulting distortion in $S_1$ is given by 

\begin{equation*}
D = \frac{P_1}{\left(1+\frac{\left(\alpha\sqrt{\frac{P}{P_1}}+1\right)^2P_1}{ \left(\beta\sqrt{\frac{P}{P_2}}+1\right)^2P_2+P(1-\alpha^2-\beta^2)+N}\right)\left(1+\frac{P(1-\alpha^2-\beta^2)}{N}\right)}.
\end{equation*}

\end{document}